\begin{document}

\lefttitle{A. Gheerbrant, L. Libkin, A. Rogova, C. Sirangelo}

\jnlPage{1}{8}
\jnlDoiYr{2021}
\doival{10.1017/xxxxx}

\title[Querying Incomplete Data]{Querying Incomplete Data : Complexity and Tractability via Datalog and First-Order Rewritings}

\begin{authgrp}
\author{Amélie Gheerbrant}
\affiliation{Université Paris cite, CNRS, IRIF, F-75013, Paris, France}
\author{Leonid Libkin}
\affiliation{School of Informatics, University of Edinburgh, 10 Crichton Street, Edinburgh EH8 9AB, UK}
\affiliation{RelationalAI, Paris, France}
\affiliation{ENS, PSL University, Paris, France}
\author{Alexandra Rogova}
\affiliation{Université Paris cite, CNRS, IRIF, F-75013, Paris, France}
\affiliation{Data Intelligence Institute of Paris (diiP) , Inria}
\author{Cristina Sirangelo}
\affiliation{Université Paris cite, CNRS, IRIF, F-75013, Paris, France}
\affiliation{DI ENS, ENS, PSL University, CNRS, Inria Paris, France}
\end{authgrp}

\history{\sub{29/03/2023;} \rev{27/09/2023;} \acc{xx xx xxxx}}

\maketitle

\begin{abstract}

%THIS I REMOVED FROM THE ABSTRACT (amelie) : We compare different ways of casting query answering as a decision problem and characterise its complexity for first-order queries, showing significant differences in the behaviour of best and certain answers.
To answer database queries over incomplete data the gold standard is finding certain answers:
those that are true regardless of how incomplete data is interpreted. Such answers can be found
efficiently for conjunctive queries and their unions, even in the presence of constraints. With negation added, the problem becomes intractable however.
We concentrate on the complexity of certain answers under constraints, and on
effficiently answering queries outside the usual classes of (unions) of conjunctive queries
by means of rewriting as Datalog and first-order queries. We first notice that there
are three different ways in which query answering can be cast as a decision problem. We complete the existing picture and provide precise complexity bounds on all versions of the decision
problem, for certain and best answers.
We then study a well-behaved class of queries that extends unions of conjunctive queries with a mild form of negation. We show that for them, certain answers can be expressed in Datalog with
negation, even in the presence of functional dependencies, thus making them tractable in data
complexity. We show that in general Datalog cannot be replaced by first-order logic, but without
constraints such a rewriting can be done in first-order.
The paper is under consideration in \emph{Theory and Practice of Logic Programming} (TPLP).

\end{abstract}

\begin{keywords}
Incomplete information, Certain answers, Datalog rewritings, First-order rewritings, Functional dependencies, Chase
\end{keywords}

\newtheorem{definition}{Definition}[section]
\newtheorem{lemma}[definition]{Lemma}
\newtheorem{proposition}[definition]{Proposition}
\newtheorem{theorem}[definition]{Theorem}
\newtheorem{corollary}[definition]{Corollary}
\newtheorem{claim}[definition]{Claim}
\newtheorem{example}[definition]{Example}

\newcommand{\todo}[1]{{\color{red} TODO: {#1}}}
\newcommand{\adom}{\text{adom}}
\newcommand{\certainanswer}{{\sc CertainAnswer$_\Sigma$}}
\newcommand{\bestanswer}{{\sc BestAnswer$_\Sigma$}}
\newcommand{\xanswer}{\triangle\textsc{Answer}}% }%_\Sigma$}}
\def\conp{{\sc coNP}\xspace}
\newcommand{\const}{{\sf Const}}
\newcommand{\nulls}{{\sf Null}}
\newcommand\sem[1]{{[\![ #1 ]\!]}}
\newcommand{\vl}{\text{\sf V}}
\newcommand{\supp}{\text{\sf Supp}}
\newcommand{\qd}{{Q,D}}
\newcommand{\sless}{\lhd}
\newcommand{\LRA}{\Leftrightarrow}
\newcommand{\qda}{{Q,D,\bar a}}
\newcommand{\qdb}{{Q,D,\bar b}}
\newcommand{\best}{\text{\sf best}}
\newcommand{\x}{\text{\sf $\triangle$}}
\newcommand{\FO}{\ensuremath{\mathsf{FO}}}

\newcommand{\thetaptwo}{\text{\sc P}^{\text{\sc NP}[\log n]}}
\def\ac0{u{\sc AC$^0$}}
\def\dlog{{\sc DLogSpace}\xspace}
\def\logtwospace{{\sc DSpace($\log^2$)}\xspace}
\def\nlogspace{{\sc NLogSpace}\xspace}
\def\ptime{{\sc P}\xspace}
\renewcommand{\ptime}{\text{\sc P}}
\newcommand{\thetap}[1]{\Theta^p_{#1}}
\def\nptime{{\sc NPTime}\xspace}
\def\apspace{{\sc APSpace}}
\def\ph{{\sc PH}\xspace}
\def\pspace{{\sc PSPACE}\xspace}
\def\npspace{{\sc NPSpace}}
\def\np{{\sc NP}\xspace}
\def\classdp{{\sc DP}\xspace}
\newcommand{\sharpp}{\text{\#}\ptime}
\newcommand{\fp}{\text{\sc FP}}
\def\conp{{\sc coNP}\xspace}
\def\exptime{{\sc ExpTime}\xspace}

\newcommand{\cert}{\mathsf{cert}}
\newcommand{\chase}{\mathsf{chase}}

\newcommand{\naive}{na\"\i ve}
\newcommand{\Naive}{Na\"\i ve}

\newcommand{\certcrc}{\cert^\circ}
\newcommand{\certgr}{\cert_\gr}
\newcommand{\certcgr}{\cert^\circ_\gr}
\newcommand{\certgrc}{\certcgr}
\newcommand{\thetaanswer}{{\sc $\theta$-Comparison}}
\newcommand{\beanswer}{{\sc $\less$-Comparison}}
\newcommand{\banswer}{{\sc $\sless$-Comparison}}
\newcommand{\setbestanswer}{{\sc BestAnswer$^=$}}
\newcommand{\familybestanswer}{{\sc BestAnswer$^\in$}}
\newcommand{\setcertainanswer}{{\sc CertainAnswer$^=$}}
\newcommand{\familycertainanswer}{{\sc CertainAnswer$^\in$}}
\newcommand{\setxanswer}{\triangle\textsc{Answer}^=}
\newcommand{\familyxanswer}{\triangle\textsc{Answer}^\in}
\newcommand{\ucqneq}{\mbox{$\text{\rm UCQ}^{\neq}$}}
\newcommand{\Sep}{\text{\sf Sep}}

\newcommand{\DD}{\mathcal{D}}
\newcommand{\CC}{\mathcal{C}}
\newcommand{\XX}{\mathcal{X}}
\newcommand{\KK}{\mathcal{K}}
\newcommand{\HH}{\mathcal{H}}
\newcommand{\cS}{\mathcal{S}}
\newcommand{\M}{{\bf M}}
\newcommand{\D}{\EuFrak{D}}
\newcommand{\I}{\EuFrak{I}}
\newcommand{\Map}{\mathbb{M}}
\newcommand{\Hom}{\text{Hom}}
\newcommand{\FFF}{\mathbb{F}}

\renewcommand{\comment}[2]{#2}

\newcommand{\mycite}[1]{[\cite{#1}]}
\newcommand{\equivfo}{equivFO}

\section{Introduction}
%
% Start of copy paste from ijcai introduction 
%
Answering queries over incomplete databases is crucial in many
different scenarios such as data
integration \mycite{lenzerini:data-integration}, data
exchange \mycite{arenas-et-al:debook}, inconsistency
management \mycite{bertossi:book}, data
cleaning \mycite{geerts-et-all:llunatic}, ontology-based data access
(OBDA) \mycite{bienvenu-ortiz:obda}, and many others. The common
thread running through all these applications lies in
computing \emph{certain
answers} \mycite{imielinski-et-al:incomp}. Intuitively this produces
answers that are true in all {\em possible worlds}, i.e., complete
databases that 
an incomplete database represents. An incomplete database in itself is
a set of tuples with missing information, plus integrity constraints.
One can think, for example, of relations with nulls on which keys can
be specified. Then a possible world is obtained by substituting values
for nulls so that all the keys are satisfied.

The notion of certain answers is sometimes too
restrictive (for example, for some queries no answers are certain). In
that case an alternative is {\em best answers}: for them, there is
no other tuple that is an answer in more possible worlds. However,
computationally one 
encounters serious problems with both approaches. To start with,
computing certain answers and best answers is intractable for first-order
queries \mycite{abiteboul-et-al:ca,libkin:zero} (already for data complexity). Finding such answers in
restricted 
subclasses of first-order queries often relies on sophisticated
algorithms -- not naturally expressible by other queries -- that are
therefore difficult to implement in a DBMS. We know that
restricting to unions of conjunctive queries allows one to overcome this
difficulty by using \emph{na\"ive
evaluation} which computes certain answers in
polynomial time \mycite{imielinski-et-al:incomp}. This amounts to 
evaluating queries over incomplete databases as if nulls were usual
data values, thus merely using the standard database query engine to
compute certain answers.

We address these problems in the present paper whose  goal is two-fold.
\begin{enumerate}
\item We start by filling gaps in our
knowledge of the complexity of answering queries over incomplete
databases. Intractable bounds on certain and best answers cited above
were obtained under {\em different} formulations of query answering as
a decision problem. We show that there are three natural ways to
represent query answering as a decision problem, and classify the
complexity of certain and best answers for all of them.
\item We then look at a way of finding query answers by leveraging the
existing database technology, namely by finding {\em query rewritings}
which, when evaluated on the incomplete database,
give us certain answers. We show that for a class extending unions of
conjunctive queries with a form of negation (but still falling short
of all first-order queries) such rewritings can be found in Datalog
with negation, thus giving us a tractable complexity bound.
\end{enumerate}

To elaborate on the first point, the two existing decision versions of
the query answering problem are: (a) is a tuple in the answer? and (b)
is the answer a member of a given family of sets? We add a third: (c)
is the answer equal to a given set. We then prove that for certain answers
the complexity 
is \conp, $\thetaptwo$, and \classdp-complete for (a), (b), (c). The
result for (a) has long been known of course.
For best answers the complexity is uniform: $\thetaptwo$-complete for
all variations (the result for (b) was previously known). We shall define these complexity classes in the next
section; for the reader not familiar with them, they all lie within
the second level of the polynomial hierarchy.

For the second theme of the paper, we look at query rewritings. This
is a standard way of leveraging database technology in the case of
incomplete or imperfect information, and such rewritings were heavily
used in data integration, data exchange, OBDA,
 query answering using views, consistent query answering etc.
\mycite{calvanese-et-al:vbqr,calvanese-et-al:vbqp,cali-et-al:rewrite-conquer,cali-et-al:qr-di}.
First-order rewritings are particularly useful, as they allow to use
the power of standard database query engines. In fact when they exist,
the rewritten queries can be implemented in any relational query
engine by expressing them in SQL, with no need to implement ad-hoc
algorithms. Next best are rewritings into Datalog (with negation):
these let us express queries using recursive features of SQL. 

As already mentioned, for unions of conjunctive queries (and even some
mild restrictions with guarded negation \mycite{GLS14}) certain
answers are computed by \naive\ evaluation {\em without} the presence of
constraints. 
Under constraints, even such simple ones as keys, the picture is less
complete. Indeed, keys, and in general equality-generating
dependencies (EGD) change the syntactic shape of a query that makes
naive evaluation work. 

\begin{itemize}
    \item Certain answers to a conjunctive query $Q$ (or a union of CQs) on a database $D$ under key constraints $\Sigma$ can be found by \naive\ evaluation of $Q$ on the result of the {\em chase} of $D$ with $\Sigma$. Mathematically, $\cert_\Sigma(Q,D)=Q(\chase_\Sigma(D))$, where on the left-hand side we have certain answers under constraints, and on the right hand side the \naive\ evaluation of $Q$ over the result of the chase. Here $\chase_\Sigma$ refers to the classical textbook chase procedure with keys, or more generally functional dependencies. In fact the above result applies when $\Sigma$ is a set of functional dependencies or equality generating dependencies (EGDs), not just keys.
\end{itemize}

Unfortunately the above result does not work when we move outside the
class of select-project-join-union queries, or unions of CQs. In fact
even without constraints, certain answers to a query of the form
$Q_1-Q_2$, where both $Q_1$ and $Q_2$ are CQs, are not necessarily
produced by \naive\ evaluation. To see why, take a database containing
one fact $R(1,\bot)$ where $\bot$ is a null and $Q_1$ returning $R$
while $Q_2$ is given by a formula $R(x,y) \wedge x=y$. Here \naive\
evaluation of $Q_1-Q_2$ returns $R$ (as $1$ is not equal to $\bot$), while certain answers is empty (as $0$ is a possible value for $\bot$).

This motivates our question whether we can extend the class of CQs and
their unions to obtain tractable evaluation of certain answers under
constraints such as functional dependencies and EGDs. The answer is
positive; in fact the query of the form $Q_1-Q_2$ above will be an
example of a query in this class. To start with, the class must be
such that finding certain answers for its queries without constraints
is already tractable. We know one such class: it consists of arbitrary
Boolean combinations of CQs, not just their union. We shall denote it
by BCCQ. It was proved in \mycite{gheerbrant-libkin:xml} that certain
answers for it can be found in polynomial time (for data complexity), though the procedure
was tableau-based and not suitable for implementation in
a database system.

This is precisely what we do in the second part of this paper. We
establish three main results:

\begin{enumerate}
    \item For an arbitrary BCCQ $Q$ and a set of EGDs $\Sigma$ one can
    construct a Datalog (with negation) query $Q'$ whose naive
    evaluation computes certain answers,  thereby ensuring their polynomial-time data complexity.
    \item There are however simple BCCQs, in fact even CQs, and keys,
    such that certain answers cannot  be expressed as a first-order
    queries. Therefore, using Datalog was necessary. 
    \item Without constraints present, certain answers to BCCQs are
    not only polynomial-time computable as had been shown previously,
    but also can  be expressed by first-order queries and thus
    efficiently implemented in SQL databases without using recursion. 
\end{enumerate}

The Sections \ref{dat}, \ref{nofo} and \ref{fo} address these items, respectively. 

Note that the material from this paper is based on the two conference papers \mycite{ijcai} and \mycite{datalog}.
%
%End of copy paste from Datalog 2.0 intro
%

\section{Preliminaries}
\label{prelim-sec}
\subsection*{Incomplete databases and constraints}
We represent missing information in relational databases in the standard way using nulls \mycite{abiteboul-et-al:dbbook,imielinski-et-al:incomp,vandermeyden:incomp-survey}. Incomplete databases are populated by \emph{constants} and \emph{nulls}, coming respectively from two countably infinite sets $\const$~and $\nulls$. We denote nulls by $\bot$, sometimes with sub- or superscript. We also allow them to repeat, thus adopting the model of \emph{marked} nulls, as customary in the context of applications such as OBDA or data integration and exchange.

A relational schema, or vocabulary $\sigma$, is a set of relation names with associated arities. A database $D$ over $\sigma$ associates to each relation name of arity $k$ in $\sigma$, a k-ary relation which is a finite subset of $(\const \cup \nulls)^k$. Sets of constants and nulls occurring in $D$ are denoted by $\const(D)$ and $\nulls(D)$. 
A database is complete if it contains no nulls, i.e. $\nulls(D) = \emptyset$.

The {\em active domain} of $D$ is the set of all values appearing in $D$, i.e. $\adom(D)=\const(D)\cup\nulls(D)$.

A \emph{valuation} $v : \nulls(D) \rightarrow \const$ on a database $D$ is a map that assigns constant values to nulls occurring in $D$. By $v(D)$ and $v(\bar a)$ we denote the result of replacing each null $\bot$ by $v(\bot)$ in a database $D$ or in a tuple $\bar a$.
The semantics $\sem D$ of an incomplete database $D$ is the set $\{v(D) \mid v \text{ is a  valuation on $D$}\}$ of all complete databases it can represent. Here as is common in research on incomplete data, we use closed world assumption \mycite{imielinski-et-al:incomp,reiter-cwa} (i.e., everything we don't know to be true is automatically assumed to be false and no new tuple can be added). %By 
%$\vl(D)$ we denote the set of all valuations defined on $D$.

An \emph{equality generating dependency} (EGD) is a first-order sentence of the form $\forall \bar x ~~(\varphi(\bar x)~\rightarrow~z=z')$, where $\varphi(\bar x)$ is a conjunction of atoms (without constants),  each variable in $\bar x$ occurs in some atom of $\varphi$, and $z, z'$ are distinct variables in $\bar x$.
As a special case, a \emph{functional dependency} (FD) over a relation name $R$ is of the form $\forall \bar x, \bar y, z, z'~~(R(\bar x, \bar y, z) \wedge R(\bar x, \bar y', z') \rightarrow z = z')$.
Throughout this paper we will assume that a (possibly empty) set of EGDs $\Sigma$ is associated with the database schema $\sigma$. 

A valuation $v$ is \emph{consistent} with $\Sigma$ (or just \emph{consistent}, when $\Sigma$ is clear from the context) if $v(D) \models \Sigma$. We denote by
$\vl(D)$ the set of all consistent valuations defined on $D$.

\subsection*{Query answering}

An $m$-ary {\em query} Q of active domain $\adom(Q) \subseteq \const$ is a map that associates with a database $D$ a subset of  $(\adom(D)\cup \adom(Q))^m$. 
To answer an $m$-ary query $Q$ over an incomplete database $D$ we follow \mycite{lipski} and adopt a slight generalisation of the usual intersection based certain answers notion, defined as $\cap_vQ(v(D))$, and furthermore incorporate constraints into query answering.

The set of \emph{certain answers} to $Q$ over $D$, with respect to a set of constraints $\Sigma$, is $$\cert_\Sigma(Q,D) = \{\bar{a} \in (\adom(D)\cup\adom(Q))^m\ \mid\ 
v(\bar{a})\in Q(v(D)) \textrm{ for all consistent } v\}\,.$$ For queries that explicitly use constants, we shall expand this to allow $\bar a$ range over $\adom(D)$ and those constants. The only difference with the usual notion is that we allow answers to contain nulls, to avoid pathological situations when answers known with certainty are not returned (e.g., in  a query returning a relation $R$ one would expect $R$ to be returned while the intersection-based certain answer will only return null-free tuples).  If the set of constraints $\Sigma$ is empty, we omit it and write simply $\cert(Q,D)$. Of course every valuation is consistent with the empty set of constraints. 

Following \mycite{libkin:zero}, given a query $Q$, a database $D$, a set of constraints $\Sigma$, and a tuple $\bar a$ 
over $\adom(D)\cup \adom(Q)$, 
we let the 
{\em support} of $\bar a$ be the set of
all valuations that witness it:
$$\supp_\Sigma(\qd,\bar a) \ = \ \{v\in \vl(D) \mid\ v(\bar a)\in Q(v(D))\}\,.$$
Again if $\Sigma=\emptyset$
 we omit the subscript. 

Supports thus measure how close a tuple is to certainty. We consider one answer to be
{\em better} than another if it has more support. That is, given a
database $D$, a $k$-ary query $Q$, and $k$-tuples $\bar a, \bar b$
over $\adom(D)\cup\adom(Q)$, we let
$$\begin{array}{rcl}
\bar a \sless_\qd^\Sigma \bar b & \LRA & \supp_\Sigma(\qda) \subset\supp_\Sigma(\qdb)\,.
\end{array}
$$
The set of {\em best answers} to $Q$ over $D$ is defined as the set of answers for which there is no better one: $$\best_\Sigma(Q,D) = \{\bar a \mid \neg\exists \bar b: \bar a
\sless_\qd^\Sigma \bar b\}.$$

As the set of {\em certain answers} to $Q$ over $D$ is the set of answers that are witnessed by all valuations, note that it could also be defined using the notion of support. Namely,
$\cert_\Sigma(Q,D)$ consists of all tuples $\bar{a} \in \adom(D)^m$ for which 
$\supp_\Sigma(\qda)$ contains all consistent valuations in $\vl(D)$.

\begin{example}\label{ex}
Let $Q(x)=\exists y (R(y) \wedge S(y,x))$ and $D = \{R(\bot_1), R(1), S(\bot_2, \bot_2)\}$. 

We have  $\supp(\qd,\bot_2) \ = \ \{v\in \vl(D) \mid\ v(\bot_2)=1\text{ or } v(\bot_1)=v(\bot_2)\}$, $\supp(\qd,1) \ = \ \{v\in \vl(D) \mid\ v(\bot_2)=1\}$ and $\supp(\qd,\bot_1) \ = \ \{v\in \vl(D) \mid\ v(\bot_1)=v(\bot_2)\}$. 

It follows that $\cert(Q,D) = \emptyset$ and $\best(Q,D) = \{(\bot_2)\}$.
\end{example}

\subsection*{\Naive\ evaluation and certain answers}

For a query $Q$ written in FO or Datalog, we write $Q(D)$ to mean that such a query is evaluated \naive ly. That is, if 
$D$ contains nulls, nulls of $D$ are treated as new constants in the domain of $D$, distinct from each other, and distinct from all the other constants in $D$ and $\varphi$.   
For example the query $\varphi(x,y) = \exists z~(R(x, z) \wedge R(z, y))$, on the database $D = \{R(1, \bot_1), R(\bot_1, \bot_2), R(\bot_3, 2)\}$ selects only the tuple $(1, \bot_2)$.

There are known connections between \naive\ evaluation and certain answers. If $\Sigma$ is empty and $Q$ is a union of conjunctive queries, then $\cert_\Sigma(Q,D)=Q(D)$, see \mycite{imielinski-et-al:incomp}. If $\Sigma$ contains a set of EGDs, then $\cert_\Sigma(Q,D)=Q\big(\chase_\Sigma(D)\big)$; cf.~\mycite{2012Greco}. Here $\chase_\Sigma$ refers to the standard chase procedure with a set of EGDs, see \mycite{abiteboul-et-al:dbbook}. 

\subsection*{Query languages}

Here we shall study best and certain answers to 
\emph{first-order} ($\FO$) queries, possibly in the presence of constraints, by means of their rewriting in $\FO$ and {\em Datalog}. \FO\ queries of vocabulary $\sigma$ use atomic relational and equality formulae and are closed under Boolean connectives $\wedge, \vee, \neg$ and quantifiers $\exists, \forall$. 
We write  $\varphi(\bar x)$ for an $\FO$-formula $\varphi$ with free variables $\bar x$. 
With slight abuse of notation, $\bar x$ will denote both a tuple of variables and the set of variables occurring in it.
The set of constants used by $\varphi$ is denoted by $\adom(\varphi)$.
We interpret $\FO$-formulas under active domain semantics, i.e. quantified variables range over $\adom(D) \cup \adom(\varphi)$.
Thus, an $\FO$ formula $\varphi(\bar x)$ represents a query (of active domain $\adom(\varphi)$) mapping each database $D$ into the set of tuples $\{\bar t \textrm { over } \adom(D)\cup\adom(\varphi)~|~D \models \varphi(\bar t)\}$.

To evaluate $\FO$-formulas with free variables we use assignments $\nu$ from variables to constants in the active domain.
Note that with a little abuse of notation we write $D \models \varphi(\bar t)$ for $D \models_\nu \varphi(\bar x)$ under the assignment $\nu$ sending $\bar x$ to $\bar t$.

Here it is important to note that the query associated to $\varphi$ is a mapping defined on all databases $D$, possibly with nulls. If $D$ contains nulls, $D \models \varphi(\bar t)$ is to be intended ``naïvely", i.e. nulls of $D$ are treated as new constants in the domain of $D$, distinct from each other, and distinct from all the other constants in $D$ and $\varphi$.   
For example the query $\varphi(x,y) = \exists z~(R(x, z) \wedge R(z, y))$, on the database $D = \{R(1, \bot_1), R(\bot_1, \bot_2), R(\bot_3, 2)\}$ selects only the tuple $(1, \bot_2)$.

{\em Conjunctive queries} (CQs) are given by the $\exists,\wedge$-fragment of \FO, and their unions (UCQs) by the  $\exists,\wedge,\vee$-fragment of $\FO$; these are also captured by the positive fragment of relational algebra (select-project-union-join queries).

We also consider {\em Boolean combination of conjunctive queries (BCCQs)}, i.e., the closure of conjunctive queries under operations $q \cap q'$, $q \cup q'$, and $q - q'$.

%{\color{red} do something about that example it looks totally out of place now, mybe move it back to the previous subsection}

%\begin{definition}[Assignment]
%To evaluate $\FO$-formulas with free variables we use assignments $\nu$ from variables to constants in the active domain.
%Note that with a little abuse of notation we write $D \models \varphi(\bar t)$ for $D \models_\nu \varphi(\bar x)$ under the assignment $\nu$ sending $\bar x$ to $\bar t$.
%\end{definition}

A \emph{Datalog rule} \mycite{abiteboul-et-al:dbbook} is an expression of the form
$R_1(u_1) \leftarrow R_2(u_2), \ldots , R_n (u_n )$
where $n \geq 1$, $R_1, \ldots , R_n$ are relation names and $u_1,\ldots, u_n$ are tuples of appropriate
arities. Each variable occurring in $u_1$ must occur in at least one of $u_2, \ldots , u_n$. A \emph{Datalog
program} is a finite set of Datalog rules.
The \emph{head} of the rule is the expression $R_1(u_1)$; and $R_2(u_2), \ldots, R_n(u_n )$ forms the body. The semantics is the standard fixed-point semantics.
%The set of constants occurring in a Datalog program P is denoted adom(P). %; and for an instance I, we use adom(P,I) as an abbreviation for $adom(P) \cup adom(I)$.

As the language of our rewritings, we shall be using $\FO$, but also a fragment of \emph{stratified Datalog with negation} in bodies that can be seen in two different ways. 
\begin{enumerate}
    \item A program is evaluated in two steps. First, we can have a Datalog program $P$ defining new idb predicates $S_1,\ldots,S_\ell$. Then we ask an \FO\ query over the schema extended with these predicates $S_1,\ldots,S_\ell$. 
    \item We evaluate a stratified Datalog with negation program in which the first stratum has no negation (but may have recursion) and the second stratum has no recursion (but may have negation). 
\end{enumerate}

From the rewritings we produce it will be clear that they fall in these classes. The key point about them is that they can be implemented in recursive SQL, and that they both have PTIME data complexity, making their evaluation feasible. Note that recursive SQL as it is currently implemented, e.g., in PostgreSQL 8.4, is actually Turing complete \mycite{CyclicTag,TM}.

\subsection*{Complexity classes}
In order to study the complexity of best and certain answer computation we shall need two classes in the second level of the
polynomial hierarchy. 
Both of these contain \np\ and \conp, and are  contained in
$\Sigma^p_2\cap\Pi^p_2$.
The class \classdp\ consists of languages $L_1\cap L_2$ where
$L_1\in\mbox{\np}$ and $L_2\in\mbox{\conp}$. 
This class has appeared in database applications \mycite{fagin-et-al:de,barcelo:approx}. 
The class $\thetaptwo$ consists of problems that can be solved in
polynomial time with a logarithmic number 
of calls to an
\np\ oracle \mycite{PNPlogn}. Equivalently, it can be described as the class of
problems solved in \ptime\ with an \np\ oracle where calls to the
oracle are done in parallel, i.e., independent of each other. 
This class has appeared in the context of AI, modal logic, OBDA \mycite{gottlob:carnap,eiter-et-al:fct,calvanese-et-al:obda,bienvenu-et-al:obda}, data exchange \mycite{arenas-et-al:beyond}.

\section{Complexity of Best and Certain Answers}
\label{complexity-sec}
%\input{complexity_sec/intro.tex}

%
%Start of copy paste from ijcai section 3 
%

We start by looking at complexity of certain and best answers of first-order queries, and answer a few questions that are (perhaps somewhat surprisingly) missing in the literature. 
In this case we look at arbitrary first-order queries; thus we do not mention constraints since $\cert_\Sigma(Q,D)=\cert(\Sigma\to Q,D)$ and likewise for best answers. In the subsequent sections, when we consider rewritings for sublanguages of first-order, we shall again mention constraints explicitly since queries of the form $\Sigma\to Q$ will normally {\em not} belong to the same syntactic class as $Q$ itself. In this context, we will refer to $\xanswer_\Sigma(Q)$.

As is common in database theory, we look at complexity in terms of complexity classes, which necessitates looking at {\em decision versions} of problems. 
%A natural way to cast best and certain answers computation as a decision problem is to proceed along the lines of \mycite{imielinski-et-al:incomp} and ask whether a given tuple is a best/certain answer. 
The most common one that is found, stated here for 
$\x \in \{${\sc Certain}, {\sc Best}$\}$, is the following problem:
%Instead, \mycite{libkin:zero} considers a variant of the problem asking whether the set of best answers belongs to a specified family of sets. This suggests yet another alternative formulation of the problem, asking if a given set is the best answer. The certain answers problem can obviously be viewed in the same three way, which gives rise to a family of decision problems.
 
% \begin{center}
% \fbox{
% \begin{tabular}{ll}
% {\sc Problem}: & \bestanswer\\
% {\sc Input}: & A query $Q$, a database $D$, \\
%              &  a tuple $\bar a$\\
% {\sc Question}: & Is $\bar a \in \best(Q,D)$?
% \end{tabular}
% }
% \end{center}

%\begin{center}
%\framebox(250,70){
%\begin{tabular}{ll}
%{\sc Problem}: & \bestanswer\\
%{\sc Input}: & A query $Q$, a database $D$, \\
%             &  a tuple $\bar a$\\
%{\sc Question}: & Is $\bar a \in \best(Q,D)$?
%\end{tabular}
%}
%\end{center}

\begin{center}
\framebox(250,70){
\begin{tabular}{ll}
{\sc Problem}: & $\xanswer(Q)$\\
{\sc Input}: & A  database $D$, 
              a tuple $\bar a$\\
{\sc Question}: & Is $\bar a \in \x(Q,D)$?
\end{tabular}
}
\end{center}

We are thus interested in {\em data complexity}: the query is fixed. We do not study combined complexity in this paper. In the remainder, we thus often omit the query and write $\xanswer$ instead of $\xanswer(Q)$. Recall that in this case, for a language $L$, we say that the problem $\xanswer$ is $C$-{\em complete in data complexity} for a complexity class $C$, if $\xanswer(Q)$ is solvable in $C$ for every $Q\in L$, and there exists a specific $Q_0\in L$ so that $\xanswer(Q_0)$ is hard for $C$. We know from \cite{abiteboul-et-al:ca} that $\textsc{CertainAnswer}(Q)$ is \conp-complete in data complexity for first-order queries. 

For best answers, it is a different version of the decision problem for which the complexity is known. Specifically, 
\mycite{libkin:zero} considered the problem of checking whether the set $\x(Q,D)$ belongs to a specified family of sets:

\begin{center}
\framebox(270,70){
\begin{tabular}{ll}
{\sc Problem}: & $\familyxanswer(Q)$\\
{\sc Input}: & A database $D$, 
              a family $\XX$ of sets of tuples\\
{\sc Question}: & Is $\x(Q,D) \in \XX$?
\end{tabular}
}
\end{center}

For this decision version, the complexity of the problem was shown to be $\thetaptwo$-complete. This version looks a bit artificial, but we include it for the sake of completeness, because it has
appeared in the literature. 

However this presentation of a decision suggests another rather natural presentation of a decision version, namely asking 
if a given set is $\x(Q,D)$:

\begin{center}
\framebox(250,70){
\begin{tabular}{ll}
{\sc Problem}: &  $\setxanswer(Q)$\\
{\sc Input}: & A database $D$, 
    a set $X$ of tuples \\
{\sc Question}: & Is $\x(Q,D) = X$?
\end{tabular}
}
\end{center}

Our current state of knowledge is the complexity of {\sc CertainAnswer} (\conp-complete) and 
\familybestanswer ($\thetaptwo$-complete). Thus we now fill the gap and classify complexities of all the problems -- for data complexity -- in the case of FO queries.

We start by showing that all the alternatives for best answers -- \bestanswer, \setbestanswer, and \familybestanswer -- are computationally equivalent. % to \familybestanswer. 
%Interestingly, the situation is very different with certain answers, as we show next.

% \begin{figure*}
% \begin{center}
%     \begin{tabular}{|l|c|c|c|}
%         \cline{1-3}
%         Type of problem & & & \\ 
%         \hline
%     \end{tabular}
% \end{center}
% \end{figure*}

\begin{theorem}\label{thm:thetaptwo}
For $\FO$ queries the problems \bestanswer, \familybestanswer~and \setbestanswer~ are $\thetaptwo$-complete in data complexity.
\end{theorem}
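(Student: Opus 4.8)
The plan is to establish membership in $\thetaptwo$ for all three problems by a single mechanism, and then to propagate the already-known $\thetaptwo$-hardness of \familybestanswer\ \mycite{libkin:zero} down to the other two by elementary non-adaptive (truth-table) reductions. Since $\thetaptwo$ admits the parallel-oracle characterisation recalled in the preliminaries and is therefore closed under polynomial-time truth-table reductions, this yields hardness, and hence completeness, for all three versions at once.

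For membership the key observation is that support comparison is an \np\ question. Fixing the query $Q$, I would first note that $D$ has at most $|D|$ nulls and that both \FO\ evaluation and tuple membership are invariant under any permutation of constants lying outside $\adom(Q)$; hence it suffices to range over valuations whose image is contained in one fixed polynomial-size set of constants (those of $D$ and of $Q$, plus one fresh constant per null). With valuations thus polynomially representable, the relation $\supp(Q,D,\bar a)\not\subseteq\supp(Q,D,\bar b)$ is in \np: guess such a $v$ and check in polynomial time that $v(\bar a)\in Q(v(D))$ and $v(\bar b)\notin Q(v(D))$. Since $\bar a$ is best exactly when no $\bar b$ satisfies $\supp(Q,D,\bar a)\subsetneq\supp(Q,D,\bar b)$, I can decide \bestanswer\ by issuing, in parallel over the polynomially many candidate tuples $\bar b$, the two \np\ queries $\supp(Q,D,\bar a)\not\subseteq\supp(Q,D,\bar b)$ and $\supp(Q,D,\bar b)\not\subseteq\supp(Q,D,\bar a)$, and then combining the answers in polynomial time ($\bar a$ fails to be best precisely when some $\bar b$ answers ``no'' to the first and ``yes'' to the second). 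For \setbestanswer\ and \familybestanswer\ I run the same battery of parallel \np\ queries for every $m$-tuple over $\adom(D)\cup\adom(Q)$, thereby materialising $\best(Q,D)$, and finish with a polynomial-time test of $\best(Q,D)=X$, respectively $\best(Q,D)\in\XX$ (both $X$ and $\XX$ being given explicitly in the input). All queries are non-adaptive, so everything stays within $\thetaptwo$.

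For hardness I would reuse the fixed hard query $Q_0$ witnessing $\thetaptwo$-hardness of \familybestanswer, so that everything remains a statement about data complexity, and chain two reductions. First, \familybestanswer\ reduces to \setbestanswer: on input $(D,\XX)$ with $\XX=\{X_1,\dots,X_m\}$, query \setbestanswer\ on each $(D,X_j)$ in parallel and accept iff some call accepts, since $\best(Q_0,D)\in\XX$ iff $\best(Q_0,D)=X_j$ for some $j$. Second, \setbestanswer\ reduces to \bestanswer: on input $(D,X)$, reject immediately if $X$ contains a tuple outside $\adom(D)\cup\adom(Q_0)$, and otherwise query \bestanswer\ on $(D,\bar t)$ for every $m$-tuple $\bar t$ over $\adom(D)\cup\adom(Q_0)$, accepting iff the tuples returned as best are exactly those of $X$. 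Both are polynomial-time truth-table reductions that keep $Q_0$ fixed, so by closure of $\thetaptwo$ under such reductions the hardness passes from \familybestanswer\ to \setbestanswer\ and then to \bestanswer.

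I expect the only genuinely delicate step to be the \np\ upper bound on support comparison, namely taming the a priori infinite space of valuations by the permutation-invariance argument above; once that and the parallel-oracle characterisation of $\thetaptwo$ are in hand, the remaining membership bookkeeping and the two reductions are routine. The secondary thing to watch is that every reduction leaves the query untouched, which is exactly what makes the final statement a data-complexity result rather than a combined-complexity one.
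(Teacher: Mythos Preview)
Your membership argument coincides with the paper's: enumerate the polynomially many candidate $k$-tuples and issue, in parallel, the \np\ oracle queries testing support non-inclusion in each direction; the paper does exactly this for \bestanswer, and handles \setbestanswer\ by observing it is the singleton-family case of \familybestanswer, which is your ``materialise $\best(Q,D)$ and compare'' step in a different wrapping.

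For hardness you take a genuinely different route. The paper establishes hardness of \bestanswer\ and \setbestanswer\ by \emph{direct many-one reductions} from the odd-chromatic-number problem: it encodes a graph $G$ into a database $D_G$ and designs fixed \FO\ queries $Q,Q'$ so that $c_1\in\best(Q,D_G)$ iff $\chi(G)$ is odd, and $\{c_i\mid i\text{ even}\}=\best(Q',D'_G)$ iff $\chi(G)$ is even. Your chain of truth-table reductions from \familybestanswer\ down to \setbestanswer\ and then to \bestanswer\ is markedly shorter and avoids any new construction, but it buys you only $\thetaptwo$-hardness under polynomial-time \emph{truth-table} reductions, not the many-one hardness the paper's direct reductions deliver. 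Two small remarks: what propagates hardness along your chain is composition of reductions, not ``closure of $\thetaptwo$ under truth-table reductions'' --- closure is the downward-membership statement and plays no role in transferring hardness; and if the theorem is read as many-one completeness (the default convention), your argument leaves that stronger claim open, whereas if tt-completeness is acceptable your proof is complete and more economical than the paper's.
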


\begin{proof}
The upper bound for \setbestanswer~immediately follows from the upper bound for \familybestanswer~
(take the family $\XX$ to be a singleton $\{X\}$). 
As for \bestanswer~we only need a slight modification of the upper bound proof in \mycite{libkin:zero}.  
To check whether $\bar a \in \best(Q,D)$ we proceed as follows. 
Since the query is
fixed, and has therefore fixed arity $k$, in polynomial time we can enumerate all the
$k$-tuples of $\adom(D)$. Then, using parallel calls to the \np\ oracle, we can check for each
such tuple $\bar b$ whether $\supp(\qda) \subseteq\supp(\qdb)$ and whether $\supp(\qdb) \subseteq\supp(\qda)$. With this information, in polynomial time we know
whether $\bar a \sless_\qd \bar b$ for some $\bar b$.

Assuming $\Sigma$ empty, we prove the two remaining lower bounds, reducing from the same $\thetaptwo$-complete problem \mycite{Wagner:bqc}: given an undirected graph
$G$, is its chromatic number $\chi(G)$ odd? With each undirected graph $G=\langle N, E\rangle$ with nodes
$N$ and edges $E$, we associate a database $D_G$ over binary relations $L,E$ and
unary relations $C,O$ as follows. We use a null $\bot_n$ in $D_G$ for each node $n$ of $G$. For each edge $\{n,n'\}$ of $G$, we have pairs $(\bot_n,\bot_{n'})$ and 
$(\bot_{n'},\bot_n)$ in the relation $E$ of $D_G$. In relation $C$ we have $m$ constants $\{c_1,\ldots,c_m\}$ (intuitively representing possible colors), where $m$ is the number of nodes of $G$.
Relation $O$ of $D_G$ is defined as $\{c_i ~|~ i ~ \textrm{is odd}\}$ and $L$ is a linear ordering on them, i.e., $(c_i,c_j)\in L$ iff
$i\leq j$, for $i,j\leq m$.

Remark that any valuation $v$ of $D_G$ that maps each null into a constant of $C$ represents an assignment of colours in $\{c_1,\ldots,c_m\}$ to nodes of $G$.
Then we define a query 
$$\phi(x) \ \ =\ \  
C(x) \wedge  \forall y,z\ \big(E(y,z) \to L(y,x)\big) \wedge \forall y\ \big(L(y,x) \to \exists z\ E(y,z)\big)\wedge \neg \exists y\ E(y,y).$$
For any valuation $v$, $\phi(c)$ holds in $v(D_G)$ iff
1) $c = c_j$ for some $j=1..m$ (ensured by the first conjunct). 
2) For such a $c_j$, the valuation $v$ maps each null into $\{c_1,\ldots,c_j\}$ (second conjunct), 
i.e. $v$ represents an assignment of colours to nodes of $G$, using at most the first j colours.
3) Each color $\{c_1,\ldots,c_j\}$ is used by $v$, i.e. $v$ represents an assignment of colours 
to nodes of $G$, using precisely the first j colours (third conjunct).
4) There are no loops in $E$ (fourth conjunct).

Thus, for a valuation $v$, the 
formula $\phi(c_j)$ is true in $v(D_G)$ iff $v$ represents a colouring of G using precisely the first j colours 
$\{c_1,\ldots,c_j\}$ (which in the sequel we refer to as an \emph{exact j-colouring} of $G$).

Next we define:
%\smallskip
%\hspace{-2mm}
%\begin{tabular}{r l l}
%$Q(x) = C(x) \wedge  ($ & $\phi(x) \vee$ &\\
 %                     & $\exists y~( O(y) \wedge L(x, y) \wedge \phi(y))$ & )
%\end{tabular}
$$Q(x) = C(x) \wedge  (\phi(x) \vee \exists y~( O(y) \wedge L(x, y) \wedge \phi(y)))$$

For a valuation $v$, we have that $Q(c_i)$ holds in $v(D_G)$ iff 
either $v$ represents an exact $i$-coloring of $G$;
or $v$ represents an exact $j$-coloring of $G$ with $j$ odd, and $i \leq j$.
In other words valuations representing exact $j$-colorings, with $j$ even, support only the maximal color $c_j$; while 
valuations representing exact $j$-colorings, with $j$ odd, support all colors $\{c_1...c_j\}$.

With this in place we can conclude the reduction for the \bestanswer~problem:
\smallskip

\textbf{Claim.} $c_1 \in \best(Q,D_G)$ iff the chromatic number of $G$ is odd.

%An adaptation of this encoding can be used to reduce the odd chromatic number problem to \setbestanswer\ as well.
%\end{proof}

First, we prove the above claim. 
%\begin{proof}
Let $\chi_G$ be the chromatic number of $G$. Then there exist no exact colorings of $G$ which are prefixes of 
$\{c_1, \dots c_{\chi_G}\}$, while $\{c_1, \dots c_{\chi_G}\}$ is an exact coloring of $G$.

Assume first that $\chi_G$ is even.  Then there exist no valuations representing the exact coloring $\{c_1\}$. Thus the support of $c_1$ is the set of valuation representing an exact coloring $\{c_1...c_j\}$ of $G$ with $j$ odd and $j > \chi_G$.
This support is not maximal, In fact the support of $c_{\chi_G}$ is:
\begin{itemize}
\item the valuations representing the exact coloring $\{c_1...c_{\chi_G}\}$ (there exists at least one);
\item the valuations representing an exact coloring $\{c_1...c_j\}$ of $G$ with $j$ odd and $j > \chi_G$.
\end{itemize}

This support  strictly contains the support of $c_1$; in fact valuations in the first item cannot be also in the second.

Assume now that $\chi_G$ is odd. Then the support of $c_1$ is the set of valuations representing an exact coloring $\{c_1...c_j\}$ of $G$ with j odd and $j \geq \chi_G$. We show that this set is maximal, i.e. no color $c_k$ can have a support strictly containing it.

\begin{itemize}
\item if $k \leq \chi_G$ then the support of $c_k$ is the set of valuations representing an exact coloring $\{c_1...c_j\}$ of $G$ with j odd, and $j \geq \chi_G$. So same support as $c_1$.
\item if $k > \chi_G$, the support of $c_k$ cannot contain the valuations representing $\{c_1, \dots c_{\chi_G}\}$. There exists at least one such valuation and it belongs to the support of $c_1$. Thus the support of $c_k$ does not contain the support of $c_1$.
\end{itemize}

We now move to \setbestanswer. With any undirected graph $G$ we associate a relational structure $D'_G$ obtained from $D_G$ by adding a new colour $c_0$ in $C$ with $L(c_0,c_i)$ for every $0 \leq i \leq m$. 
We define a restriction $\psi$ of the original formula $\phi$ by disallowing $c_0$ in colourings: to obtain $\psi$ it suffices to replace $L(y,x)$ in $\phi$ by $L(y,x) \wedge y \neq c_0$, and $C(x)$ by  $C(x) \wedge x\neq c_0$.
Thus it is still true that $\psi(c_j)$ is true in $v(D_G')$ iff $v$ represents a colouring of G using precisely 
$\{c_1,\ldots,c_j\}$.

We define a new query:

\vspace{2mm}
\hspace{-2mm}
\begin{tabular}{r l}
$Q'(x) :=O(x) \wedge$ & \!\!\!\!\!\!$(\psi(x) \vee \exists y~( O(y) \wedge L(x, y) \wedge \psi(y))$  \\
                      & $\vee$ \\
                      $\!\!\!\!\!\neg O(x) \wedge$ & \!\!\!\!\!\!$(\psi(x) \vee \exists y~( O(y) \wedge x+2 < y \wedge \psi(y))$ \\
                      & $\vee$ \\
                      $\!\!\!\!\!\neg O(x) \wedge$ & \!\!\!\!\!\!$\exists y (x\neq y \wedge L(x,y)) ~\wedge~ \forall y \forall z~( E(y,z)\rightarrow (y=c_0 \wedge z=c_0))$ \\
\end{tabular}

\vspace{2mm}

Note that $x+2 < y$ is used as a shorthand, as it is definable in our language.

\vspace{2mm}

$Q'(c_i)$ holds in $v(D'_G$) iff 

\begin{itemize}
\item either $i$ is odd and $v$ represents an exact $j$-colouring of $G$, with $j$ odd and $i \leq j$;
%\begin{itemize}
%\item either $v$ represents an exact colouring $\{c_1...c_j\}$ of $G$, and $i = j$;
%\item or $v$ represents an exact colouring $\{c_1...c_j\}$ of $G$, with $j$ odd and $i \leq j$;
%\end{itemize}
\item or $i$ is even and:
\begin{itemize}
\item either $v$ represents an exact colouring $\{c_1...c_j\}$ of $G$ with j odd, and $i+2 < j$;
\item or $v$ represents an exact colouring $\{c_1...c_i\}$ of $G$;
\item or $i<m$ and $v(\bot_j)=c_0$ for all $1 \leq j \leq m$;
\end{itemize}
%\item or $x=c_3$ and $G$ is isomorphic to $K_2$ (i.e., $v$ represents an exact coloring $\{c_1,c_2\}$ of $G$).
\end{itemize}

The following claim completes the reduction for \setbestanswer~:
\smallskip

%We first modify $D_G$ by adding a new color $c_0$ in $A$ with $L(c_0,c_i)$ for every $0\leq i \leq m$. We also add a unary predicate $\textrm{O} = \{c_i ~|~ \textrm{i is odd}\}$. %and $\textrm{Even} = \{c_i ~|~ \textrm{i is even}\}$.

\noindent
\textbf{Claim.} $\{c_i ~|~ i \textrm{ is even}\}=\best(Q',D'_G)$ iff $\chi(G)$ is even.
%\textbf{Claim.} $\{c_i ~|~ i \textrm{ is even}\}=Best(Q',D'_G)$ iff $\chi(G)$ is even.

In the following, we call $v_0$ the unique valuation such that $v_0(\bot_j)=c_0$ for all $1 \leq j \leq m$. %Let $\chi(G$ be the chromatic number of $G$. 
First assume that $\chi_G$ is even. For all $0 < i \leq m$ odd, $Supp(c_i)$ is not maximal as $Supp(c_0) \supset Supp(c_i) \cup \{v_0\}$. Hence $\best(Q',D'_G) \subseteq \{c_i ~|~ i \textrm{ is even}\}$, so
we show $\{c_i ~|~ i \textrm{ is even}\} \subseteq \best(Q',D'_G)$. 
The inclusion holds whenever $c_i \geq \chi(G)$, as $Supp(c_i)$
contains all valuations representing exact colorings $\{c_1...c_i\}$ of G, while no other
$Supp(c_j)$ with $i \neq j$ contains them. 
Now take $c_i < \chi(G)$ with $i$ even, then $Supp(c_i)$ contains $v_0$ together with all exact odd colourings %(which are also contained in all $Supp(c_k)$s with $k$ odd such that $k < \chi(G)$), 
(if there are any). First assume that there exists odd exact colourings of $G$, so there are $\chi(G)+1$ ones and valuations representing them are not contained in $Supp(\chi(G))$.
Also $v_0 \not \in Supp(c_k)$ with $k$ odd and $k < \chi(G)$. It follows that $Supp(c_i)$, which is the union of $v_0$ and of all valuations representing odd exact colorings is maximal. 
Now assume that there is no exact odd colouring. This corresponds to the special case $\chi(G)=m$ where $Supp(c_m)$ contains only the exact colourings $\{c_1...c_{m}\}$ of $G$, but not $v_0$ ; while $Supp(c_j)=\emptyset$ whenever $j$ odd. In such a case $Supp(c_i)=\{v_0\}$ is also maximal.

We assume now $\chi(G)$ is odd and show $\{c_i ~|~ i \textrm{ is even}\} \neq \best(Q',D'_G)$. 
First notice that $Supp(c_1)$ is maximal whenever $\chi(G)=1$, as neither $Supp(c_0)$, nor any $Supp(c_i)$ with $i \geq 2$ contain valuations representing the exact $\{c_1\}$ colourings. So we assume $\chi(G)\geq 3$, from which it follows that there exists a constant $c_{\chi(G)-3}$ in the active domain which support contains $v_0$ together with all valuations representing exact odd colourings. As $Supp(c_{\chi(G)-1})$ contains exactly the same set of valuations, to the exclusion of those representing $\{c_1...c_{\chi(G)}\}$ colourings, it follows that $Supp(c_{\chi(G)-3}) \supset Supp(c_{\chi(G)-1})$ and so $c_{\chi(g)-1} \not \in \best(Q',D'_G)$.
\end{proof}

Now that we showed that all three formulations of best answers actually collapse computationally, another natural question arises. Does a similar result hold for certain answers ? %We obtain the decision problems \certainanswer, \setcertainanswer~and \familycertainanswer~ by replacing everywhere in the statements of the above decision problems \bestanswer~ by \certainanswer~ and $\best(Q,D)$ by $\cert(Q,D)$.
It is well known that data complexity of \certainanswer~ is \conp-complete for \FO\--queries \mycite{abiteboul-et-al:ca}. We complete the picture as follows.
\begin{theorem}
For $\FO$ queries, \setcertainanswer~is $\classdp$-complete and \familycertainanswer~is $\thetaptwo$-complete in data complexity.
\end{theorem}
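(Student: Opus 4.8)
The plan is to handle the two problems separately, deriving both upper bounds from the known \conp\ bound for tuple membership \cite{abiteboul-et-al:ca} and then designing tailored reductions for the lower bounds. For \setcertainanswer\ I would split the equality $\cert(Q,D)=X$ into the two inclusions $X\subseteq\cert(Q,D)$ and $\cert(Q,D)\subseteq X$. Since $Q$ has fixed arity $k$, the set $\adom(D)^k$ has polynomial size, so both $X$ and $\adom(D)^k\setminus X$ contain only polynomially many tuples. The first inclusion is a conjunction over $\bar a\in X$ of the \conp\ predicate ``$\bar a\in\cert(Q,D)$'', and \conp\ is closed under polynomially bounded conjunction, so it is in \conp. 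For the second inclusion, its negation is the disjunction over $\bar a\in\adom(D)^k\setminus X$ of the same \conp\ predicate; as \conp\ is also closed under polynomially bounded disjunction, the negation lies in \conp, whence $\cert(Q,D)\subseteq X$ is in \np. Writing the answer as the intersection of an \np\ and a \conp\ language places \setcertainanswer\ in \classdp. For \familycertainanswer\ I would instead materialise $\cert(Q,D)$: one non-adaptive call to an \np\ oracle per tuple $\bar a\in\adom(D)^k$ decides membership, and these polynomially many parallel calls determine $\cert(Q,D)$ exactly, after which a polynomial-time scan checks whether this set occurs in the input family $\XX$; this is exactly the parallel-oracle description of \thetaptwo.

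For hardness of \setcertainanswer\ I would reduce from the \classdp-complete problem asking, for a pair of graphs $(G_1,G_2)$, whether $G_1$ is $3$-colourable and $G_2$ is not. Using the encoding behind the \conp-hardness of tuple membership, I build a single database $D$ holding disjoint encodings of $G_1$ and $G_2$ (separate pools of vertex-nulls and separate colour palettes) together with a fixed \FO\ query $Q$ whose output tuples are tagged so that the only candidates for certainty are three distinguished tuples $t_0,t_1,t_2$. The design goal is: $t_0$ is certain on every input; $t_1\in\cert(Q,D)$ iff $G_1$ is \emph{not} $3$-colourable; and $t_2\in\cert(Q,D)$ iff $G_2$ is not $3$-colourable. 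Taking $X=\{t_0,t_2\}$, the equality $\cert(Q,D)=X$ then holds iff $t_2$ is certain, $t_1$ is not, and nothing else is, i.e.\ iff $G_2$ is not $3$-colourable and $G_1$ is, which is precisely the \classdp\ instance.

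For hardness of \familycertainanswer\ I would reduce from the \thetaptwo-complete problem ``is $\chi(G)$ odd?'' \cite{Wagner:bqc}, reusing the colouring encoding $D_G$ from the proof of Theorem~\ref{thm:thetaptwo}: colour constants $c_1,\dots,c_m$ carrying the linear order $L$, a vertex null $\bot_n$ per node collected in a unary relation $V$, and the edge relation $E$. The key gadget is a fixed query $Q(x)$ stating that $x\in C$ and that the current valuation is \emph{not} a proper colouring of $G$ using only colours strictly below $x$; concretely the negated part asserts $\neg\exists u\,E(u,u)$ together with $\forall u\,(V(u)\to(C(u)\wedge u<_L x))$, where $u<_L x$ abbreviates $L(u,x)\wedge u\neq x$. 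Then $c_j\notin Q(v(D_G))$ holds exactly when $v$ is a proper colouring with colours in $\{c_1,\dots,c_{j-1}\}$, so $c_j$ is certain iff no such colouring exists, i.e.\ iff $G$ is not $(j-1)$-colourable, i.e.\ iff $\chi(G)\geq j$. Hence $\cert(Q,D_G)=\{c_1,\dots,c_{\chi(G)}\}$, and choosing the polynomially sized family $\XX=\{\{c_1,\dots,c_j\}\mid j\text{ odd},\,1\leq j\leq m\}$ yields $\cert(Q,D_G)\in\XX$ iff $\chi(G)$ is odd.

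I expect the delicate point to be the \classdp\ reduction: I must guarantee that the certain-answer set is always a subset of $\{t_0,t_1,t_2\}$ with $t_0$ forced in and membership of $t_1,t_2$ controlled independently by $G_1$ and $G_2$, and that no stray tuple ever becomes certain, so that the single equation $\cert(Q,D)=X$ faithfully couples an \np\ condition on $G_1$ with an independent \conp\ condition on $G_2$. This requires the two encodings not to interfere through shared null values and the output-tagging in $Q$ to pin down membership of every other tuple unconditionally; the \thetaptwo\ upper bound, by contrast, only needs the routine fact that parallel \np\ queries suffice to materialise $\cert(Q,D)$.
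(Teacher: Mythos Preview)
Your argument is correct, and the upper bounds as well as the $\thetaptwo$-hardness essentially match the paper: you split $\cert(Q,D)=X$ into an \np\ inclusion and a \conp\ inclusion just as the paper does, and your $\thetaptwo$-hardness establishes $\cert(Q,D_G)=\{c_1,\dots,c_{\chi(G)}\}$ via the odd-chromatic-number problem, which is exactly the paper's target (the paper reaches it through the exact-colouring formula $\varphi$ and the query $C(x)\wedge\forall y\,(\varphi(y)\to L(x,y))$, whereas you negate ``$(j{-}1)$-colourable'' directly; both work). The genuine difference is in the \classdp\ lower bound. You reduce from the pair problem ``$G_1$ is $3$-colourable and $G_2$ is not'' and engineer a database with two disjoint encodings and three distinguished tuples $t_0,t_1,t_2$; the paper instead reuses the \emph{same} single-graph construction as for $\thetaptwo$ and reduces from the \classdp-complete problem ``$\chi(G)=4$'' \cite{rothe:dp}, so that $\cert(Q,D_G)=\{c_1,\dots,c_4\}$ iff $\chi(G)=4$. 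The paper's route is more economical (one query, one encoding, both hardness results), and it sidesteps the delicate point you flag about preventing interference and stray certain tuples; your route is the standard two-component \classdp\ reduction and is perfectly valid, but it requires the extra bookkeeping you anticipate.
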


\begin{proof}
To prove membership of \setcertainanswer~in $\classdp$, notice that for a query $Q$, this problem is the intersection of two languages $L_1 \cap L_2$
where $L_1=\{(D,X)~|~X \subseteq \cert(Q,D)\}$ and $L_2=\{(D,X)~|~\overline{X} \subseteq \overline{\cert(Q,D)}\}$.
$L_1$ is known to be in \conp~: we guess a tuple $\bar a \in X$ and a valuation $v\in V(D)$ with $v(\bar a)\not \in Q(v(D))$. Similarly, $L_2$ is in \np~: we guess a tuple $\bar b \in \overline{X}$ and a valuation $v'\in V(D)$ with $v'(\bar b) \in Q(v(D))$.

To prove membership of \familycertainanswer~in $\thetaptwo$, 
suppose the query $Q$ is k-ary, and we are given a family of sets of k-ary tuples $\XX=\{X_1, \ldots, X_n\}$ and a database $D$. For each $X_i \in \XX$, we use the \np~oracle to decide in parallel whether $X_i=\cert(Q,D)$ (for each $X_i$ the two calls to the oracle do not depend on each other and they can also be done in parallel).

For $\classdp$-hardness, we reduce from the problem of checking whether $\chi(G)$, the chromatic number of an undirected graph $G$, equals 4 \mycite{rothe:dp} and for $\thetaptwo$~-hardness, we reduce from the related problem of checking whether $\chi(G)$ is odd. With such a graph $G$, we
associate the same database $D_G$ as in the proof of Theorem \ref{thm:thetaptwo}. Using the exact coloring formula $\varphi$ in the proof of Theorem \ref{thm:thetaptwo}, we define a query 

\begin{align}
Q(x):=C(x) \wedge \forall y~ (\varphi(y) \rightarrow L(x,y))\nonumber
\end{align}
We claim that $\cert(Q,D)=\{c_1, \ldots, c_n\}$ iff $\chi(G)=n$, which entails $\cert(Q,D)=\{c_1, \ldots, c_4\}$ iff $\chi(G)=4$ and $\cert(Q,D) \in \{\{c_1,\ldots,c_j\}~|~ j \text{ is odd and }1 \leq j \leq |G|\}$ iff $\chi(G)$ is odd.
Recall that $v(D_G)\models \varphi(c_i)$ iff $c_i$ is a color in $\{c_1, ..., c_{|G|}\}$ and $v$ represents an exact $i$-coloring of the graph. Now $v(D_G)\models Q(c_j)$ iff $c_j$ is a color and there is no  $i < j$ such that $v$ represents an exact $i$-coloring of the graph, which holds exactly whenever $c_j \in \{c_1, ..., c_{\chi(G)}\}$.
\end{proof}

\begin{figure*}
\begin{center}
\bgroup
\def\arraystretch{1.7}
\begin{tabular}{|l|c|c|c|}
\cline{1-4}
Type of problem & $\bar a \in$ Answer & $X=$ Answer & Answer $\in
{\mathcal X}$ \\ 
\cline{1-4}
Certain Answer & coNP-complete \footnotemark[1] & \classdp\-complete & $\text{\rm
   P}^{\text{\rm NP}[\log n]}$-complete \\ 
\cline{1-4}
Best Answer &
$\text{\rm  P}^{\text{\rm NP}[\log n]}$-complete &
$\text{\rm  P}^{\text{\rm NP}[\log n]}$-complete &
$\text{\rm  P}^{\text{\rm NP}[\log n]}$-complete \footnotemark[2] \\ 
\cline{1-4}
\end{tabular}
\egroup
\end{center}
 \caption{Summary of data complexity results for \FO\ queries}\label{compl-fig}
\end{figure*}

\footnotetext[1]{\mycite{abiteboul-et-al:ca}}
\footnotetext[2]{\mycite{libkin:zero}}
%
%End of copy paste from ijcai section 3 
%

% \section{Boolean Combinations of Conjunctive Queries}
% \label{bccq-sec}
% \input{bccq}

\section{Query Rewritings for Tractable Fragments}
\label{rewriting-sec}
%{\color{red}Tractability results should lift to the two other variations of the problem, by the same argument as the upper bound in the previous section : Since the query is fixed, and has therefore fixed arity $k$, in polynomial time we can enumerate all the $k$-tuples of $\adom(D)$. We can then use the rewriting for each tuple. Maybe say a word about it at the beginning of the section.}       

Considering arbitrary \FO-queries brought us an intrinsic intractability result for all variants of the considered decision problems. This motivates restricting to well behaved fragments such as CQs and UCQs. %, which are know to be tractable for \certainanswer\ in the general case and for \bestanswer\ whenever $\Sigma=\emptyset$.
Recall that {\em conjunctive queries} (CQs) are given by the $\exists,\wedge$-fragment of \FO, and their unions (UCQs) by the $\exists,\wedge,\vee$-fragment of $\FO$.
%To extend tractability results for certain answers to CQs and UCQs, 
We extend them with a mild form of negation (since adding negation leads to coNP-hardness of certain answers). 
This mild form comes in the shape of 
{\em Boolean combination of conjunctive queries (BCCQs)}, i.e., the closure of conjunctive queries under operations $q \cap q'$, $q \cup q'$, and $q - q'$.

%It is also known that tractability for certain answers extend to BCCQ, which extend the former class with a with a mild form of negation (since adding negation leads to coNP-hardness of certain answers). 
%This mild form comes in the shape of 
%{\em Boolean combination of conjunctive queries (BCCQs)}, i.e., the closure of conjunctive queries under operations $q \cap q'$, $q \cup q'$, and $q - q'$. 
If there are no constraints in $\Sigma$, finding certain answers to BCCQs is known to be 
tractable \mycite{gheerbrant-libkin:xml}, though by tableau-based techniques that are hard to implement in a database system.
We now extend this in two ways. First, we show that tractability is preserved even in the presence of EGDs (and thus functional dependencies and keys). Second, we show that certain answers can be obtained by rewriting into a fragment of Datalog as described in Section \ref{prelim-sec}. In particular, it means that certain answers can be found by a query expressible in recursive SQL (and even in SQL in the absence of constraints).

For \bestanswer\, a polynomial time evaluation algorithm (in data complexity) already exists \cite{libkin:zero}. The resolution based procedure is however in sharp contrast with 
 na\"ive evaluation, which allows to compute certain answers to unions of conjunctive queries via usual model checking. We thus show how to apply our query rewriting techniques to the best answers problem. %We thus initiate a descriptive complexity analysis of the best answers problem, showing that for unions of conjunctive queries, it can essentially be reduced - modulo a preprocessing of the query - to (na\"ive) evaluation of an $\FO$-formula.

\subsection{A normal form for queries : neutralizing variable repetition}
Towards our rewritings, we start by putting each conjunctive query in a normal form which eliminates repetition of variables, by introducing new equality atoms. 

\begin{definition}[NRV normal form]
A conjunctive query $Q$ is in {\em non-repeating variable normal form} (NRV normal form) whenever it is of the form 
$Q(\bar{x})~=~\exists \bar{w} ~ (q(\bar{w}) \wedge e(\bar{x}, \bar{w}))$
where variables in $\bar x\bar w$ are pairwise distinct, and:
\begin{itemize}
\item $q(\bar{w})$
is a conjunction of relational atoms without constants, where each free variable in $\bar{w}$ has at most one occurrence in $q$,
\item $e(\bar{x},\bar{w})$ is a conjunction of equality atoms, possibly using constants, where each variable of $\bar x$ is involved in at least one equality.
\end{itemize}
We say that $q(\bar{w})$ is the \emph{relational subquery} of $Q$, and $e(\bar{x}, \bar{w})$ is the \emph{equality subquery} of $Q$.
A BCCQ is in NRV normal form if it is a Boolean combination of CQs in NRV normal form.
\end{definition}

%\begin{definition}[NRV normal form]
%A conjunctive query $Q$ is in non repeating variable normal form (NRV normal form) if it is of the form 
%$Q(\bar{x})~=~\exists \bar{y}, \bar{z} ~ (q(\bar{y},\bar{z}) \wedge e(\bar{y}, \bar{z}) \wedge \bar x = \bar z)$
%where variables in $\bar x\bar y\bar z$ are pairwise distinct, $\bar x$ and $\bar z$ have the same length, and:
%\begin{itemize}
%\item $q(\bar{y},\bar{z})$
%is a conjunction of relational atoms, where each free variable in $\bar{y},\bar{z}$ has at most one occurrence in $q$,
%\item $e(\bar{y},\bar{z})$ is a conjunction of equality atoms, possibly using constants.
%\end{itemize}
%We say that $q(\bar{y},\bar{z})$ is the \emph{relational subquery} of $Q$, and $e(\bar{y}, \bar{z})~\wedge~\bar x~=~\bar z$ is the %\emph{equality subquery} of $Q$.
%\end{definition}

\begin{example}\label{ex::nrv}
The query $Q(x)$ from Example \ref{ex} is equivalent to $\exists w_1 w_2 w_3 (R(w_1) \wedge S(w_2,w_3) \wedge w_1=w_2 \wedge w_3=x)$, which is in NRV normal form.
\end{example}

Clearly every conjunctive query $Q$ is equivalent to a query in NRV normal form; moreover $Q$ can be easily rewritten in NRV normal form (in linear time in the size of the query). Thus in what follows we assume w.l.o.g. that conjunctive queries are given in NRV normal form.
Intuitively the NRV normal form allows us to separate the two ingredients of a conjunctive query : the existence of facts in some relations of the database on the one side, and a set of equality conditions on data values occurring in these facts, on the other side.
The existence of facts does not depend on the valuation of nulls, and thus can be directly tested on the incomplete database.
Instead equality atoms in an NRV normal form imply conditions that valuations need to satisfy in order for the query to hold.
We can thus first concentrate on the support of
equality subqueries. This will be encoded in \FO~ and then integrated in the rewriting of the whole conjunctive query.

We introduce a notion of equivalence of database elements w.r.t. to a set of equalities. Intuitively equivalent elements of a tuple $\bar t$ are the ones which should be collapsed into a single value in order for a valuation of $\bar t$ to satisfy all the given  equalities.

\begin{definition}
Given a database $D$, a conjunction of equality atoms $\gamma(\bar{y})$ and an assignment $\nu: \bar y \cup \adom(\gamma) \rightarrow\adom(D)\cup \adom(\gamma)$ preserving constants,
we say that $u,u' \in \adom(D)\cup\adom(\gamma)$  are equivalent w.r.t. $\gamma$ and $\nu$ and write $u \equiv_\gamma^\nu u'$, 
if either $u=u'$ or $(u, u')$ belongs to the reflexive symmetric transitive closure of $\{(\nu(x), \nu(w)) ~|~ x = w \in \gamma\}$.
\end{definition}

The relation $\equiv_\gamma^\nu$ is clearly an equivalence relation over $\adom(D) \cup\adom(\gamma)$, where each element outside the range of $\nu$ forms a singleton equivalence class.

\begin{example}\label{example::equiv}
Let  $\gamma$ be $x_1=x_2 \wedge x_2=x_3 \wedge x_4=x_5 \wedge x_6=1$. Let $\nu$ assign $\bot_i$ to $x_i$ for $i \leq 5$, and $\bot_5$ to $x_6$. The equivalence classes of  $\equiv_\gamma^\nu$ are $\{\bot_i~|~i \leq 3\}$ and $\{1, \bot_4, \bot_5\}$, plus one singleton for each other element of the active domain.
\end{example}

In what follows we denote by $\sim_{\gamma}$ the reflexive symmetric transitive closure of $\{ (x,w) ~|~ x = w \in \gamma \}$.
Note that this is an equivalence relation among variables and constants of $\gamma$. We will provide two syntactic encodings of this relation, one in Datalog and one in FO.
\label{normal_form}

\subsection{Datalog Rewriting for Certain Answers for BCCQs with EGDs}
%
%Start of copy paste from Datalog 2.0 section 3
%

Recall that, given a query $Q$, a database $D$, and a tuple $\bar{a}$ over $\adom(D) \cup \adom(Q)$ we let the support of $\bar{a}$ be the set of all valuations that witness it :
$$\supp(Q, D, \bar a)=\{v \in \vl(D)~|~v(\bar a)\in Q(v(D))\}$$

In order to look for rewritings of BCCQs, a key observation is that 
$\bar a$ is a certain answer to $Q$ iff $\supp(\neg Q, D, \bar a) = \emptyset$. 
When $Q$ is a BCCQ, so is $\neg Q$, thus 
we look for ways of expressing (non-)emptiness of the support for BCCQs.

We start by concentrating on the support of
equality subqueries. This will be encoded in Datalog and then integrated, as a key ingredient, in the rewriting of the whole  query. 
We let $\gamma(\bar y)$ be an arbitrary set of equality atoms among variables $\bar y$ and possibly constants. Intuitively we will be interested in the case that $\gamma (\bar y)$ is the equality subquery $e(\bar x, \bar w)$ of a CQ in NRV normal form (thus notice that in the Datalog program below $\bar y$ encompasses variables $\bar x \bar w $ of an equality subquery).

Remark that we can always write an EGD so that no variable in its body occurs more than once; it suffices to add to the body a set of variable equalities. 
Thus we assume that EGDs in $\Sigma$ are of the form  $\forall \bar u ((\varphi(\bar{u}) \wedge \psi) \rightarrow z=z')$ where $z, z'$ are in $\bar u$, the conjunction of atoms $\varphi(\bar u)$ contains no constants, no variable occurs twice in $\varphi(\bar u)$, and $\psi$ is a set of equalities among variables of $\bar u$.
%we assume no constants in EGDs just to simplify the sequel ie to avoid that they extend the domain of variables, but I don't think anything 
%changes if constants are allowed
Remark also that membership in the set $\adom(D) \cup\adom(\gamma)$ can be expressed by a UCQ formula that we call $Dom(x)$. We encode equivalence of database elements in $\adom(D) \cup\adom(\gamma)$ w.r.t. a set of equalities $\gamma(\bar y)$ 
%and a set of FDs $\Sigma$ 
using the following %$equiv_\gamma(\bar{y}, z, z') $ 
Datalog program 
\footnote{Queries we write hereafter can be domain dependent. So it is important to recall that we always use active domain semantics. 
}:

%\begin{align}
%& equiv_\gamma(\bar{y}, z, z) \leftarrow  \wedge_i~ Dom(y_i), Dom(z) \nonumber \\
%& equiv_\gamma(\bar{y}, z, z') \leftarrow  z=y_k, z'=y_l, \wedge_i~ Dom(y_i) ~ \textrm{  for each } (y_k=y_l) \in \gamma   \nonumber \nonumber \\ 
%& equiv_\gamma(\bar{y}, z, z') \leftarrow  equiv_\gamma(\bar{y}, z, u), equiv_\gamma(\bar{y}, u, z') %, Dom(y_i) ~ \forall i 
%\nonumber \nonumber\\
%& equiv_\gamma(\bar{y}, z, z') \leftarrow  equiv_\gamma(\bar{y}, z', z) \nonumber \nonumber \\ 
%& equiv_\gamma(\bar{y}, z, z') \leftarrow 
%R(\bar{u}, \bar v, z), ~ R(\bar u', \bar v', z'), ~ \wedge_i equiv_\gamma(\bar{y}, u_i, u_i') \nonumber \\
%& ~~~~~~~~~~~~  \textrm {for each FD } (R(\bar{u}, \bar v, z), ~ R(\bar{u}, \bar v', z')  \rightarrow z=z') \in \Sigma \nonumber
%\end{align}

\begin{align}
& equiv_\gamma(\bar{y}, z, z) \leftarrow  \wedge_i~ Dom(y_i), Dom(z) \nonumber \\
& equiv_\gamma(\bar{y}, z, z') \leftarrow  z=y_k, z'=y_l, \wedge_i~ Dom(y_i) ~ \textrm{  for each } (y_k=y_l) \in \gamma   \nonumber \nonumber \\ 
& equiv_\gamma(\bar{y}, z, z') \leftarrow  equiv_\gamma(\bar{y}, z, u), equiv_\gamma(\bar{y}, u, z') %, Dom(y_i) ~ \forall i 
\nonumber \nonumber\\
& equiv_\gamma(\bar{y}, z, z') \leftarrow  equiv_\gamma(\bar{y}, z', z) \nonumber \nonumber \\ 
& equiv_\gamma(\bar{y}, z, z') \leftarrow 
\varphi(\bar{u})~\wedge_{(w = w')\in \psi(\bar u)} equiv_\gamma(\bar{y}, w, w') \nonumber \\
& ~~~~~~~~~~~~  \textrm {for each EGD } \forall \bar u((\varphi(\bar{u})\wedge \psi(\bar u)) \rightarrow z=z') \in \Sigma \nonumber
\end{align}

 Intuitively, if $\bar t$ is a tuple of database elements assigned to $\bar y$, equivalent elements of $D$ are the ones which should be collapsed into a single value in order for a valuation of $D$ to satisfy all the  equalities $\gamma (\bar t)$ and the EGDs.
For fixed $\gamma$ and $\bar t$, the relation $\{(s,s') ~|~ D \models equiv_{\gamma}(\bar t, s, s')\}$ is an equivalence relation over $\adom(D) \cup \adom(\gamma)$ where each element of $\adom(D)$ neither in $\bar t$ nor in $\adom(\gamma)$ forms a singleton equivalence class. 

The formula $equiv_\gamma$ is a key ingredient in our rewriting;  as formalized in the following lemma, it selects precisely the pairs of elements that a consistent valuation needs to collapse to satisfy a set of equalities.

% {\color{red} In Lemmas \ref{intermediate}, \ref{SuppUCQ} and Propositions \ref{comp}, \ref{val} we use some of the machinery developed in  \mycite{gheerbrant-sirangelo:ijcai}
% and thus the proofs of those statements, which are adaptations of proofs
% in \mycite{gheerbrant-sirangelo:ijcai} are omitted.}

\begin{lemma}\label{intermediate}
Let $\gamma(\bar y)$ be a conjunction of equality atoms, $D$ a database, 
%$\Sigma$ a set of functional dependencies over $D$ 
and $\nu(\bar{y})=\bar{t}$ an assignment over $\adom(D) \cup \adom(\gamma)$. Assume $v$ is a consistent valuation of nulls,
% such that $v(D) \models \Sigma$. 
 then $v(D) \models \gamma(v(\bar{t}))$ 
if and only if $v(s)=v(s')$ for all $s, s'$ such that
$D \models equiv_{\gamma}(\bar{t},s,s')$. 
\end{lemma}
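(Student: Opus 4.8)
The plan is to prove both directions of the biconditional by relating the Datalog-defined relation $equiv_\gamma$ to the semantic collapse forced on a valuation $v$. The key conceptual point is that $D \models equiv_\gamma(\bar t, s, s')$ captures exactly the pairs $(s,s')$ that are forced to be equal under $v$ whenever $v$ is consistent and satisfies $\gamma(v(\bar t))$. I would structure the argument around the fixed-point definition of $equiv_\gamma$, treating the five rule schemas as generators of an equivalence relation that is also closed under the EGDs in $\Sigma$.

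\medskip

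\noindent\textbf{($\Leftarrow$) direction.} First I would show that if $v(s)=v(s')$ for all pairs with $D \models equiv_\gamma(\bar t, s, s')$, then $v(D) \models \gamma(v(\bar t))$. This is the easier half: each equality atom $y_k = y_l$ of $\gamma$ generates, via the second rule schema, a fact $equiv_\gamma(\bar t, \nu(y_k), \nu(y_l))$ (using that $\nu(\bar y)=\bar t$ and all components lie in $Dom$). Hence by hypothesis $v(\nu(y_k))=v(\nu(y_l))$, which is precisely what it means for $v(D)$ to satisfy that equality atom. Since $\gamma$ is a conjunction, satisfying every conjunct gives $v(D) \models \gamma(v(\bar t))$. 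Consistency of $v$ is not even needed here.

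\medskip

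\noindent\textbf{($\Rightarrow$) direction.} This is the direction where consistency matters, and I expect it to be the main obstacle. Assume $v$ is consistent (so $v(D) \models \Sigma$) and $v(D) \models \gamma(v(\bar t))$. I would define the relation $E_v = \{(s,s') \mid v(s)=v(s')\}$ over $\adom(D)\cup\adom(\gamma)$ and prove by fixed-point induction that every fact $equiv_\gamma(\bar t, s, s')$ derivable by the program lies in $E_v$, i.e.\ $v(s)=v(s')$. The reflexivity, symmetry, and transitivity rule schemas are immediate since $E_v$ is itself reflexive, symmetric, and transitive (equality of the images is an equivalence relation). The base rule for $\gamma$'s own equalities uses the hypothesis $v(D)\models\gamma(v(\bar t))$ exactly as above. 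The genuinely delicate case is the last rule schema, encoding the EGDs: here I must argue that if the Datalog body $\varphi(\bar u) \wedge \bigwedge_{(w=w')\in\psi} equiv_\gamma(\bar t, w, w')$ is satisfied, then $v(z)=v(z')$. The point is that a satisfied $\varphi(\bar u)$ witnesses database atoms whose images under $v$ are genuine facts of $v(D)$, and by the induction hypothesis the image under $v$ of each $\psi$-pair already coincides; therefore $v(\bar u)$ witnesses the premise $\varphi \wedge \psi$ of the EGD in $v(D)$, and since $v(D)\models\Sigma$ we conclude $v(z)=v(z')$. This is the step that forces the use of consistency, and the care required is in matching the \naive\ evaluation of $\varphi(\bar u)$ (over the incomplete $D$) with the satisfaction of the EGD premise in the complete database $v(D)$, using that $v$ maps the matched nulls and constants appropriately.

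\medskip

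\noindent\textbf{The main obstacle}, as indicated, is the EGD rule: I must be careful that the \naive\ match of the constant-free, repetition-free body $\varphi(\bar u)$ in $D$ (which fixes an assignment of the $\bar u$-variables to elements of $\adom(D)$) induces, together with the already-derived $\psi$-equalities, a valid instantiation of the EGD premise in $v(D)$. Here the assumption that EGD bodies are written with no repeated variables (the normalization described just before the program, pushing repetitions into $\psi$) is exactly what makes the correspondence clean: the relational part $\varphi$ is matched structurally in $D$, while the variable-equality part $\psi$ is handled inductively through $equiv_\gamma$. Once both inductions are in place, the two directions combine to give the stated equivalence.
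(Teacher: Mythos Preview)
Your proposal is correct and follows essentially the same approach as the paper: the $\Leftarrow$ direction uses the second rule to reduce satisfaction of $\gamma$ to the hypothesis, and the $\Rightarrow$ direction is a fixpoint induction over the five rule schemas, with the EGD rule handled exactly as you describe (lifting the \naive\ match of $\varphi(\bar u)$ in $D$ to a match in $v(D)$, using the inductive hypothesis on the $\psi$-equalities and consistency of $v$). Your explicit framing via the target relation $E_v$ and your remark on why the repetition-free, constant-free normalization of EGD bodies is needed are, if anything, slightly cleaner than the paper's presentation.
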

\begin{proof}
$\Rightarrow$ Assume $v(D) \models \gamma(v(\bar{t}))$ and let $s, s'$ % \in \bar{t}$ 
such that $D \models equiv_{\gamma}(\bar{t},s,s')$. We prove $v(s) = v(s')$.
We proceed by induction on the derivation of $equiv_{\gamma}(\bar{t},s,s')$ by the fixpoint 
evaluation of the Datalog program. Assume $equiv_{\gamma}(\bar{t},s,s')$ is derived at the first iteration, 
then it follows from one of the first two rules. If it is derived by the fist rule then $s = s'$ and 
therefore $v(s) = v(s')$ trivially. Assume $equiv_{\gamma}(\bar{t},s,s')$ is derived using the second 
rule then there exists $(y_k = y_l) \in \gamma$, and $s= t_k$ and $s'=t_l$; now since $v(D) \models \gamma(v(\bar{t}))$, we have $v(t_k) = v(t_l)$.
Now assume that $equiv_{\gamma}(\bar{t},s,s')$ is derived at some subsequent step. 
%and that for all previous iterations if $D \models equiv_{\gamma}(_,l,l')$ the $v(l) = v(l')$.
If it follows from the second rule then it follows from $equiv_{\gamma}(\bar{t},s,p)$ and $equiv_{\gamma}(\bar{t},p,s')$ derived at previous steps, thus by the induction hypothesis $v(s) = v(p)=v(s')$. Similarly if $equiv_{\gamma}(\bar{t},s,s')$ is derived by the third rule, then it follows from $equiv_{\gamma}(\bar{t},s',s)$, and thus by induction $v(s) = v(s')$.
The last case is that $equiv_{\gamma}(\bar{t},s,s')$ is derived by the last rule for some 
EGD $(\varphi(\bar{u})\wedge \psi(\bar u)) \rightarrow z=z')$. 
So there exists a mapping $\mu$ of $\bar u$ such that $D \models \phi(\mu(\bar u))$ and 
$D\models equiv_\gamma(\bar t, \mu(w), \mu(w'))$ for each $(w=w') \in \psi(\bar u))$ and 
$s =\mu(z)$ and $s'=\mu(z')$; so by the induction hypothesis $v(\mu(w)) = v(\mu(w'))$. 
It follows that $v(D) \models \phi(v(\mu(\bar u)) ) \wedge \psi(v(\mu(\bar u)))$, 
and because $v(D)$ satisfies the EGDs ($v$ being consistent), $v(\mu(z)) = v(\mu(z'))$, thus $v(s) = v(s')$.

% we have proved that if v is consistent (ie v(D) satisfies all EGDs) and $v(D) \models \gamma(v(\bar{t}))$ then $v(s)=v(s')$ for all $s, s'$ such that
%$D \models equiv_{\gamma}(\bar{t},s,s')$. 

$\Leftarrow$ Assume $\forall s, s'$, %\in \bar{t}$, 
$D \models equiv_{\gamma}(\bar{t},s,s')$ implies 
$v(s)=v(s')$.
We show that 
$v(D) \models \gamma(v(\bar{t}))$. 
In fact for each $(y_k = y_l) \in \gamma$, we have $D \models equiv_\gamma(\bar t, t_k, t_l)$ (derived by the second rule).
By our hypothesis $v(t_k) = v(t_l)$, thus $\gamma(v(\bar t))$ holds.
%we have proved that if $v$ is a valuation and $\forall s, s'$, $D \models equiv_{\gamma}(\bar{t},s,s')$ implies $v(s)=v(s')$ then v(D)\models v(t).
%observe that we cannot prove that v is consistent, it may collapse two much. v would be consistent if it only collapses equiv elements
\end{proof}

\begin{example}
Let $\gamma$ and $\nu$ be as in Example~\ref{example::equiv}, then Lemma~\ref{intermediate} implies that a valuation $v(D) \models \gamma(v(\bar{t}))$ iff $v(\bot_i) = v(\bot_j)$ for all $i, j = 1..3$, and $v(\bot_i) = 1$ for all $i = 4,5$. 
\end{example}

% \begin{example}
% Let {\color{blue} $\nu$} be an assignment and {\color{red} $\gamma$} a conjunction of equality atoms such that $\nu(x_1){\color{blue} =}\bot, \nu(x_2){\color{blue} =}1, \nu(x_3){\color{blue} =}1$ and $\gamma := x_1 {\color{red} =} x_2$. Although both $x_1$ and $x_3$ should clearly be sent to 1, $\nu(x_1) {\color {blue} \neq} \nu(x_3)$ and $x_1 {\color{red} =} x_3 \notin \gamma$. Yet, $equiv_\gamma (x_1 x_2 x_3, x_1, x_3)$ builds a path between $x_1$ and $x_3$ using both $\nu$ and $\gamma$.
% \begin{center}
% \begin{tikzpicture}[node distance=1cm, auto,]
% \node [draw, circle, minimum width = 4em, label=$c_1$]	(main)	{};
% \node at ([xshift=3em]main.east)						(x) 	{$x_3$};
% \node at([xshift=1.4em]main.center)					(y) 	{$x_2$}
% 	edge[blue, ->, bend right=45] node[below] 	{$\nu$} 	(x);
% \node at ([xshift=-1.4em]main.center)					(z) 	{$x_1$}
% 	edge[red, ->, bend right=45] node[below] 	{$\gamma$} 	(y);
% \node at ([xshift=15em]main.east)								{$c_1$ : equivalence class containing $x_1$ and $x_2$};	
% \end{tikzpicture}
% \end{center}
% \end{example}

Formulas we write in the remainder are over signature $\sigma\cup Null$, where $\sigma$ is the database schema. In any incomplete database $D$ over $\sigma \cup Null$, $Null$ is always interpreted by the set of nulls occurring in $D$ (in accordance with the semantics of the {\sc{SQL}} construct \verb|IS NULL|). I.e. we allow rewritings to test whether a database element is null or not. 

For $\gamma(\bar y)$ a conjunction of equality atoms, using $equiv_\gamma$ we define a new formula $comp_\gamma(\bar{y})$ stating the existence of a consistent valuation that collapses all equivalent elements of a tuple:
$$
comp_{\gamma}(\bar{y})~:=~
\forall z z'(equiv_{\gamma}(\bar{y},z,z') \wedge \neg Null(z) \wedge \neg Null(z')\rightarrow z=z')
$$

% \begin{align}
% & comp_{\gamma}(\bar{y})~:=~\nonumber\\
% &\forall z z'(equiv_{\gamma}(\bar{y},z,z') \wedge \neg Null(z) \wedge \neg Null(z')\rightarrow z=z')\nonumber
% \end{align}
% end omit

\begin{proposition}\label{comp}
Let $\gamma(\bar y)$ be a conjunction of equality atoms, $D$ a database, 
%$\Sigma$ a set of functional dependencies over $D$ 
and $\nu(\bar{y})=\bar{t}$ an assignment over $\adom(D) \cup \adom(\gamma)$, then $D \models comp_{\gamma} (\bar{t})$ 
if and only if there exists a consistent valuation $v$ of nulls such that  $v(D) \models \gamma(v(\bar{t}))$.
%the Prop. also proves : 
Moreover if  such valuation exists, there exists one further satisfying $v(s) = v(s')$ iff $ D \models equiv_\gamma(\bar t, s,s')$, for all $s, s' \in \adom(D)\cup\adom(\gamma)$ .
%\wedge \Sigma$.
\end{proposition}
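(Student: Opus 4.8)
The plan is to prove the biconditional in two directions and to extract the stronger ``moreover'' statement from the construction used in the harder direction. The easy direction is ($\Leftarrow$): suppose a consistent valuation $v$ with $v(D) \models \gamma(v(\bar t))$ exists. By Lemma~\ref{intermediate}, $v(s) = v(s')$ holds for every pair with $D \models equiv_\gamma(\bar t, s, s')$. Now if $z, z'$ are two non-null elements (i.e. constants) with $D \models equiv_\gamma(\bar t, z, z')$, then $v(z) = v(z')$; but valuations fix constants, so $v(z) = z$ and $v(z') = z'$, forcing $z = z'$. This is precisely the assertion that $D \models comp_\gamma(\bar t)$.

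For the harder direction ($\Rightarrow$), I would assume $D \models comp_\gamma(\bar t)$ and build the desired valuation explicitly. The relation $\{(s,s') \mid D \models equiv_\gamma(\bar t, s, s')\}$ is an equivalence relation on $\adom(D) \cup \adom(\gamma)$, and $comp_\gamma(\bar t)$ guarantees that each of its classes contains \emph{at most one} constant (two distinct constants in one class would be a non-null witness violating $comp_\gamma$). I then define $v$ class by class: a null whose class contains a (necessarily unique) constant $c$ is sent to $c$; a null whose class contains no constant is sent to a fresh constant, chosen so that distinct constant-free classes receive distinct fresh constants, all different from every constant occurring in $D$ and in $\gamma$; constants are fixed as usual. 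By construction, all elements of one class receive the same value, while elements of different classes receive different values (distinct constants of distinct classes differ, a constant differs from any fresh value, and distinct fresh values differ). This yields exactly the ``moreover'' property, namely $v(s) = v(s')$ iff $D \models equiv_\gamma(\bar t, s, s')$; crucially, this equivalence is obtained from the construction alone, \emph{without} any appeal to consistency, so it can be used safely in the next step.

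The main obstacle is to show that the constructed $v$ is consistent, i.e. $v(D) \models \Sigma$. I would argue by contradiction. Take a normalized EGD $\forall \bar u\,((\varphi(\bar u) \wedge \psi) \to z = z')$ and suppose an assignment $g$ witnesses its body in $v(D)$. Since $\varphi$ is constant-free and has no repeated variable, each of its atoms is matched by $g$ to an image $v(\bar b_j)$ of some tuple $\bar b_j$ of $D$; reading off one preimage per atom (well defined because no variable is shared across atoms, and every variable of $\bar u$, including $z,z'$ and those in $\psi$, occurs in $\varphi$) gives an assignment $\mu$ over $\adom(D)\cup\adom(\gamma)$ with $v \circ \mu = g$ and $D \models \varphi(\mu(\bar u))$ under naive evaluation. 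For each equality $(w=w') \in \psi$ we have $g(w)=g(w')$, i.e. $v(\mu(w)) = v(\mu(w'))$, so the already-established ``moreover'' property gives $D \models equiv_\gamma(\bar t, \mu(w), \mu(w'))$. The last rule of the Datalog program then fires and derives $D \models equiv_\gamma(\bar t, \mu(z), \mu(z'))$, whence $v(\mu(z)) = v(\mu(z'))$, i.e. $g(z) = g(z')$, so the EGD is in fact satisfied. The delicate point here, and where the normal-form hypotheses on EGDs (no constants, no repeated variables in $\varphi$) are essential, is lifting the body match $g$ in $v(D)$ back to a body match $\mu$ in $D$.

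Finally, with $v$ now known to be consistent and with $v(s) = v(s')$ holding for all $s,s'$ such that $D \models equiv_\gamma(\bar t, s, s')$, Lemma~\ref{intermediate} applies and yields $v(D) \models \gamma(v(\bar t))$. This both completes the existence claim of the $(\Rightarrow)$ direction and establishes the ``moreover'' part simultaneously, since the very $v$ we produced satisfies the required equivalence.
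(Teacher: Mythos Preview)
Your proposal is correct and follows essentially the same approach as the paper: in both, the $(\Leftarrow)$ direction is an immediate application of Lemma~\ref{intermediate}, and the $(\Rightarrow)$ direction constructs the valuation class by class from the $equiv_\gamma$-partition, establishes the ``moreover'' equivalence directly from that construction, and then uses it (together with the no-constants, no-repeated-variables normal form of EGDs to lift body matches from $v(D)$ back to $D$) to fire the last Datalog rule and verify consistency, before concluding via Lemma~\ref{intermediate}. Your write-up is in fact slightly more explicit than the paper's about why the lifting step works and about the logical order (the ``moreover'' property being available prior to, and used in, the consistency argument).
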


\begin{proof}
 %notice that $\adom(comp_\gamma) = \adom(equiv_\gamma) = \adom(\gamma)$ thus evaluating $comp_\gamma$ under its active domain semantics require evaluating $equiv_{\gamma}$ as well under its active domain semantics. So all good.
$\Rightarrow$ Assume $D \models comp_{\gamma} (\bar{t})$. Then $\forall c, c'$ % \in \bar{t}$ 
constants, $D\models equiv_\gamma(\bar t, c, c')$ implies $c=c'$. As $\{(s,s') | D \models equiv_\gamma(\bar t, s,s')\}$ is an equivalence relation over $\adom(D) \cup \adom(\gamma)$,  its equivalence classes form a partition of this set. 
In each equivalence class there is at most one constant, so we define a valuation $v$ mapping all nulls of a class to the unique constant of that class (or to a new fresh constant if the class does not contain any).
Note that $v$ has the property that $v(s) = v(s')$ iff $ D \models equiv_\gamma(\bar t, s,s')$; this allows to prove that $v$ is a consistent valuation. 

In fact consider an arbitrary EGD $(\varphi(\bar{u})\wedge \psi(\bar u)) \rightarrow z=z')$ in $\Sigma$, and assume $v(D) \models \varphi(\mu(\bar{u}))\wedge \psi(\mu(\bar u))$ for some $\mu$; we prove $\mu(z) = \mu(z')$. 
Since $\varphi$ is a conjunction of atoms with no constants and no repeated variables, there exists $\mu'$ such that $D \models \varphi(\mu'(\bar u))$ and $v(\mu'(\bar u)) = \mu(\bar u)$.
% define \mu' by taking any v counterimage of the set of fact witnessing \varphi(\mu(\bar{u}))
Take any equality $w = w'$ in $\psi(\bar u)$, we show that $D \models equiv_\gamma(\bar t, \mu'(w), \mu'(w'))$. In fact because $v(D) \models \psi(\mu(\bar u))$ we have $\mu(w) = \mu(w')$, and therefore $v(\mu'(w)) = v(\mu'(w'))$. By definition of $v$ then $D \models equiv_\gamma(\bar t, \mu'(w),\mu'(w'))$. 
By the last rule of the Datalog program defining $equiv_\gamma$, we thus have that $D \models equiv_\gamma(\bar t, \mu'(z), \mu'(z'))$; then again by definition of $v$ we have $v(\mu'(z))=v(\mu'(z')$, and therefore $\mu(z) = \mu(z')$. This shows that $v(D)$ satisfies all the EGDs, thus $v$ is consistent.

We have proved that $v$ satisfies the characterisation of Lemma~\ref{intermediate}, and can conclude  that $v(D) \models \gamma(v(\bar t))$.

$\Leftarrow$ Assume $v$ is a consistent valuation and $v(D) \models \gamma(v(\bar{t}))$.
%Then partition elements of $\bar{t}\bar{c}$ according to their $v$ values (all elements of $\bar{t}\bar{c}$ contained in one component of the partition having the $v$ same value). Clearly any element of $\bar{c}$, as well as any constant of $\bar{t}$ can only belong to the partition component associated to its value. Therefore in each partition component there is at most one constant value. 
Let $s,s'$ such that $D \models equiv_\gamma(\bar{t},s,s')\wedge \neg Null(s') \wedge  \neg Null(s)$. By Lemma~\ref{intermediate} we have $v(s)=v(s')$.  Moreover $s,s'$ are both constants, then $s=s'$. Hence $D \models equiv_\gamma(\bar{t},s,s')\wedge \neg Null(s') \wedge  \neg Null(s)\rightarrow s=s'$, i.e., $D \models comp_{\gamma} (\bar{t})$.
\end{proof}

We are now ready to define a formula capturing the inclusion of supports between two conjunctions of equality atoms, which will be a crucial ingredient in our rewriting.
Let $\gamma(\bar{x})$ and $\gamma'({\bar y})$ be conjunctions of equality atoms with $\adom(\gamma) = \adom(\gamma')$. We  define :
$$ imply_{\gamma, \gamma'} (\bar{x},\bar{y})~:=~
\forall z z'~(equiv_{\gamma'}(\bar{y},z,z')\rightarrow equiv_{\gamma}(\bar{x},z,z'))
$$
\comment{ 
\begin{align}
& imply_{\gamma, \gamma'} (\bar{x},\bar{y})~:=~\nonumber\\
&\forall z z'~(equiv_{\gamma'}(\bar{y},z,z')\rightarrow equiv_{\gamma}(\bar{x},z,z'))\nonumber
\end{align}
}
% end omit
%\todo{What follows (also the previous lemma and proposition) was already shown in \mycite{gheerbrant-sirangelo:ijcai} (abstracting from the new $\Sigma$ component) }
% remove all that machinery when possible} : DONE
Using Proposition~\ref{comp} 
 and Lemma~\ref{intermediate} we obtain :

\begin{proposition}\label{val}
Let $\gamma(\bar{x})$, $\gamma'({\bar y})$ be conjunctions of equality atoms with $\adom(\gamma) = \adom(\gamma')$, $D$ a database  
%$\Sigma$ a set of functional dependencies over $D$ 
and $\nu(\bar{y})=\bar{t}$, $\nu'(\bar{y})=\bar{t'}$ assignments over $\adom(D) \cup \adom(\gamma)$.
Then $D \models imply_{\gamma, \gamma'}(\bar t, \bar t') \vee \neg comp_\gamma(\bar t)$ iff for all consistent valuations $v$, one has $v(D) \models \gamma(v(\bar t))$ implies $v(D) \models \gamma'(v(\bar t'))$.
\end{proposition}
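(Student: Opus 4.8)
The plan is to case-split on whether the disjunct $\neg comp_\gamma(\bar t)$ already holds, and to handle the interesting case through Lemma~\ref{intermediate} together with the strengthened existence statement of Proposition~\ref{comp}. Throughout I read the assignment as $\bar t$ for the variables of $\gamma$ and $\bar t'$ for those of $\gamma'$.

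First I would dispose of the case $D \not\models comp_\gamma(\bar t)$. Here the left-hand side of the claimed equivalence holds because its disjunct $\neg comp_\gamma(\bar t)$ is satisfied; and by Proposition~\ref{comp} there is no consistent valuation $v$ with $v(D) \models \gamma(v(\bar t))$, so the right-hand side holds vacuously. Thus both sides are true and this case is immediate.

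The remaining case is $D \models comp_\gamma(\bar t)$, in which the disjunction reduces to $D \models imply_{\gamma,\gamma'}(\bar t, \bar t')$, and I must show this is equivalent to the stated implication over consistent valuations. For the forward direction, assume $D \models imply_{\gamma,\gamma'}(\bar t, \bar t')$ and take any consistent $v$ with $v(D) \models \gamma(v(\bar t))$. By Lemma~\ref{intermediate} applied to $\gamma$, every pair with $D \models equiv_\gamma(\bar t, s, s')$ satisfies $v(s)=v(s')$. Since $imply_{\gamma,\gamma'}$ states that every pair equivalent under $equiv_{\gamma'}(\bar t',\cdot,\cdot)$ is also equivalent under $equiv_\gamma(\bar t,\cdot,\cdot)$, the same pairs are collapsed by $v$; applying the backward direction of Lemma~\ref{intermediate} to $\gamma'$ then yields $v(D) \models \gamma'(v(\bar t'))$.

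The main obstacle is the converse. Assuming the implication holds for all consistent valuations, I need to produce a single valuation that witnesses $imply_{\gamma,\gamma'}(\bar t,\bar t')$ syntactically. The key is the ``moreover'' clause of Proposition~\ref{comp}: since $comp_\gamma(\bar t)$ holds, there is a consistent valuation $v^\ast$ that collapses $s$ and $s'$ \emph{exactly} when $D \models equiv_\gamma(\bar t, s, s')$. By Lemma~\ref{intermediate} this $v^\ast$ satisfies $v^\ast(D) \models \gamma(v^\ast(\bar t))$, so by hypothesis $v^\ast(D) \models \gamma'(v^\ast(\bar t'))$; Lemma~\ref{intermediate} then forces $v^\ast(s)=v^\ast(s')$ for every pair with $D \models equiv_{\gamma'}(\bar t', s, s')$. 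Using faithfulness of $v^\ast$ in the direction stating that it collapses \emph{no more} pairs than $equiv_\gamma$ prescribes, each such pair also satisfies $D \models equiv_\gamma(\bar t, s, s')$, which is precisely $D \models imply_{\gamma,\gamma'}(\bar t, \bar t')$. It is exactly here that the faithfulness property is indispensable: a valuation merely collapsing all $\gamma$-equivalent pairs could over-collapse, and would not let us recover $equiv_\gamma(\bar t,\cdot,\cdot)$ from $equiv_{\gamma'}(\bar t',\cdot,\cdot)$.
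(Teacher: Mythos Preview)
Your proof is correct and follows essentially the same approach as the paper: both rely on Lemma~\ref{intermediate} for translating between $equiv$-pairs and valuation collapses, and on the ``moreover'' clause of Proposition~\ref{comp} to obtain the faithful valuation $v^\ast$ for the converse direction. The only cosmetic difference is that you case-split on $comp_\gamma(\bar t)$ first and then argue each direction, whereas the paper splits into $\Rightarrow$ and $\Leftarrow$ first and handles the $\neg comp_\gamma$ case inside each; the logical content is identical.
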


% tight valuation and its lemma are not needed anymore
\comment{
Before we show Proposition~\ref{val} we first show that, in order to test inclusion of supports of two equality formulas, one can restrict to single valuations collapsing just what is needed.

\begin{definition}[Tight valuation]
Let $\gamma(\bar y)$ be a conjunction of equality atoms, $D$ a database and $\nu(\bar{y})=\bar{t}$ an assignment over $\adom(D) \cup \adom(\gamma)$. A valuation $v$ of $D$ is called \emph{tight for} $\nu$ \emph{and} $\gamma$ if, for all $s, s' \in \adom(D)\cup\adom(\gamma)$, we have $v(s) = v(s')$ iff $D \models equiv_{\gamma}(\bar{t},s,s')$. 
\end{definition}
 
By Lemma~\ref{intermediate}, any tight valuation $v^*$ for $\nu$ and $\gamma$ satisfies $v^*(D) \models \gamma(v^*(\bar{t}))$. 
It is also easy to see that a tight valuation for $\nu$ and $\gamma$ exists whenever there is a valuation $v$ with $v(D) \models \gamma(v(\bar{t}))$. 
In fact if such a $v$ exists, by Proposition~\ref{comp}, $D \models comp_{\gamma} (\bar{t})$. Then  for each $s \in \adom(D) \cup\adom(\gamma)$ there is at most one constant $c$ such that $D \models equiv(\bar t, s, c)$. In addition we associate to each equivalence class $\mathcal C$ of the relation $\{(s,s') ~|~ D \models equiv_{\gamma}(\bar t, s, s')\}$, a new fresh constant $c_{\mathcal C}$ outside $\adom(D) \cup \adom(\gamma)$. Then a tight valuation $v^*$ for $\bar t$ and $\gamma$ can be defined as follows. For $s\in \adom (D)$, if $D \models equiv_{\gamma} (\bar t, s, c)$, for some constant $c$, then $v^*(s) = c$; otherwise $v^*(s) = c_{\mathcal{C}}$ where $\mathcal{C}$ is the equivalence class of $s$.

We can also characterise in terms of tight valuations the fact that for all valuations $v$, $v(D) \models \gamma(v(\bar t)) \Rightarrow v(D) \models \gamma'(v(\bar t'))$.

\begin{lemma}\label{tight}
Let $D$ be a database, $\gamma(\bar y)$, $\gamma'(\bar y)$ conjunctions of equality atoms with $\adom(\gamma) = \adom(\gamma')$, $\nu(\bar{y})=\bar{t}$, $\nu'(\bar{y})=\bar{t'}$ assignments over $\adom(D) \cup \adom(\gamma)$ and $v^*$ a tight valuation of $D$ w.r.t. $\nu$ and $\gamma$. Then $v^*(D) \models \gamma'(v^*(\bar t'))$ iff for all valuations $v$, $v(D) \models \gamma(v(\bar t))$ implies $v(D) \models \gamma'(v(\bar t'))$.
\end{lemma}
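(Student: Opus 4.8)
The plan is to reduce the semantic statement about conjunctions of equality atoms to a purely combinatorial condition on the relation $equiv_\gamma$, so that Lemma~\ref{intermediate} does all the work. The role of the tight valuation $v^*$ is that it collapses \emph{exactly} the pairs forced by $equiv_\gamma(\bar t,\cdot,\cdot)$ and nothing more, so it is the most general valuation satisfying $\gamma(v(\bar t))$; every other satisfying valuation collapses at least as much, and this refinement relation is what drives both directions.

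First I would record the auxiliary fact that any tight $v^*$ satisfies $v^*(D)\models\gamma(v^*(\bar t))$. This is immediate from the $(\Leftarrow)$ direction of Lemma~\ref{intermediate}: tightness gives $v^*(s)=v^*(s')$ for every $s,s'$ with $D\models equiv_\gamma(\bar t,s,s')$, in particular for the base pairs $(t_k,t_l)$ coming from the atoms $y_k=y_l$ of $\gamma$, which is all that direction of Lemma~\ref{intermediate} requires. With this in hand the $(\Leftarrow)$ direction of the lemma is a one-liner: assuming the universal implication holds, instantiate it at the valuation $v:=v^*$; since $v^*(D)\models\gamma(v^*(\bar t))$ we conclude $v^*(D)\models\gamma'(v^*(\bar t'))$.

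For the $(\Rightarrow)$ direction I would argue atom by atom. Because $\gamma'$ is a conjunction of equality atoms and $\adom(\gamma)=\adom(\gamma')$, the statement $v^*(D)\models\gamma'(v^*(\bar t'))$ is equivalent to asserting $v^*(t'_k)=v^*(t'_l)$ for every atom $y_k=y_l$ (or $y_k=c$) of $\gamma'$; by tightness of $v^*$ this is in turn equivalent to $D\models equiv_\gamma(\bar t,t'_k,t'_l)$ for each such pair. Now take an arbitrary (consistent) valuation $v$ with $v(D)\models\gamma(v(\bar t))$. The $(\Rightarrow)$ direction of Lemma~\ref{intermediate} says that $v$ respects all of $equiv_\gamma$, i.e. $D\models equiv_\gamma(\bar t,s,s')$ implies $v(s)=v(s')$. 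Applying this to the pairs $(t'_k,t'_l)$ identified above yields $v(t'_k)=v(t'_l)$ for every atom of $\gamma'$, that is $v(D)\models\gamma'(v(\bar t'))$, which is exactly the desired conclusion.

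The point I expect to require care is the appeal to Lemma~\ref{intermediate} on the arbitrary valuation $v$ in the $(\Rightarrow)$ direction: that lemma's hypothesis needs $v$ to be consistent, since $equiv_\gamma$ internalizes the collapses forced by the EGDs through its last rule. So the quantification ``for all valuations'' must be read as ranging over consistent valuations, matching Proposition~\ref{val}; in the presence of EGDs an inconsistent $v$ could satisfy $\gamma(v(\bar t))$ while failing an EGD-forced equality recorded in $equiv_\gamma$, and the equivalence would break. I would also note that the boundary cases are unproblematic: atoms of $\gamma'$ involving constants are handled the same way since constants are fixed by every valuation, and all elements $t'_k,t'_l$ lie in $\adom(D)\cup\adom(\gamma)$ by hypothesis, so the tightness characterization applies to them. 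Once assembled, this lemma is precisely the bridge that lets Proposition~\ref{val} replace a universally quantified statement over valuations by a single check on the canonical tight valuation.
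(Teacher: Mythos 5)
Your proof is correct and takes essentially the same route as the paper's: both directions hinge on the tightness of $v^*$ combined with Lemma~\ref{intermediate}, with the $\Leftarrow$ direction obtained, exactly as in the paper, by instantiating the universal implication at $v^*$ after observing that tightness forces $v^*(D)\models\gamma(v^*(\bar t))$. The only (harmless) variation is that in the $\Rightarrow$ direction you verify $\gamma'$ atom by atom, whereas the paper quantifies over all pairs $s,s'$ with $D\models equiv_{\gamma'}(\bar t',s,s')$ and invokes Lemma~\ref{intermediate} for $\gamma'$ as well; your caveat that ``for all valuations'' must mean consistent valuations is well taken, the paper's statement originating from the constraint-free setting where every valuation is consistent.
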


 \begin{proof}
$\Rightarrow$ Assume $v^*(D) \models \gamma'(v^*(\bar t'))$ and let $v$ be a valuation such that $v(D) \models \gamma(v(\bar t))$. 
We want to show $v(D) \models \gamma'(v(\bar t'))$. By Lemma~\ref{intermediate} it is enough to show that $\forall s, s' \in \bar{t'}$, $D \models equiv_{\gamma'}(\bar{t'},s,s')$ implies $v(D) \models v(s)=v(s')$. So let $s, s' \in \bar{t'}$ such that $D \models equiv_{\gamma'}(\bar{t'},s,s')$. As $v^*(D) \models \gamma'(v^*(\bar t'))$, by Lemma~\ref{intermediate}, $v^*(D) \models v^*(s)=v^*(s')$. Now $v^*$ is tight w.r.t. $\nu$ and $\gamma$, so $D \models equiv_{\gamma}(\bar{t},s,s')$. As $v(D) \models \gamma(v(\bar t))$, by Lemma~\ref{intermediate} it follows that $v(D) \models v(s)=v(s')$.

$\Leftarrow$ 
Assume for all valuations $v$, $v(D) \models \gamma(v(\bar t))$ implies $v(D) \models \gamma'(v(\bar t'))$. By Lemma~\ref{intermediate}, $v^*$ being tight for $\nu$ and $\gamma$, we have $v^*(D) \models \gamma(v^*(\bar{t}))$ and so by our assumption $v^*(D) \models \gamma'(v^*(\bar{t'}))$.
\end{proof}

We now have all the ingredients necessary to prove Proposition~\ref{val}.
} %end comment

\begin{proof}
%here as well, active domain of $(imply \neg comp)$ is $\adom(D) \cup \adom(\gamma) =\adom(comp) = \adom(imply)$. So evaluating $(imply \neg comp)$ under active domain semantics, implies evaluating imply and comp under their active domain semantics as well.
%\ameliemargin{It would probably be more elegant to merge the two directions, they look really symmetrical...}
$\Rightarrow$ Assume $D \models imply_{\gamma, \gamma'}(\bar t, \bar t') \vee \neg comp_\gamma(\bar t)$. If $D \models \neg comp_\gamma(\bar t)$ then by Proposition~\ref{comp}, there is no consistent valuation $v$ such that $v(D) \models \gamma(v(\bar{t}))$ and so the implication trivially holds. Now assume $D \models imply_{\gamma, \gamma'}(\bar t, \bar t')$, i.e. : 
$$D \models \forall z z'~(equiv_{\gamma'}(\bar{t'},z,z')\rightarrow equiv_{\gamma}(\bar{t},z,z'))$$

We want to show that for all consistent valuations $v$ of nulls such that $v(D) \models \gamma(v(\bar t))$, one also has $v(D) \models \gamma'(v(\bar t'))$. Consider a valuation $v$ of nulls such that $v(D) \models \gamma(v(\bar t))$. Using Lemma~\ref{intermediate}, it is enough to show that $\forall s,s'$ such that $D \models equiv_{\gamma'}(\bar{t'},s,s')$ one has $v(s)=v(s')$. So assume $D \models equiv_{\gamma'}(\bar{t'},s,s')$, by our assumption it follows that $D \models equiv_{\gamma}(\bar{t},s,s'))$, and therefore, again by Lemma~\ref{intermediate},  $v(s)=v(s')$.

$\Leftarrow$ Assume for all consistent valuations $v$ of nulls such that $v(D) \models \gamma(v(\bar t))$, one also has $v(D) \models \gamma'(v(\bar t'))$. By Proposition~\ref{comp}, if there is no such valuation, then $D \models  \neg comp_\gamma(\bar t)$. So assume now there is one such valuation. This entails that in particular, there exists a consistent $v^*$ satisfying $v^*(D) \models \gamma(v^*(\bar t))$ and $v^*(s) = v^*(s')$ iff $ D \models equiv_\gamma(\bar t, s,s')$. By our assumption we have $v^*(D) \models \gamma'(v^*(\bar t'))$. Hence, by Lemma~\ref{intermediate}, $\forall s,s'$ such that $D \models equiv_{\gamma'}(\bar{t'},s,s')$ one has $v^*(s)=v^*(s')$. By the properties of $v^*$ mentioned above this implies $D \models  equiv_\gamma(\bar t,s,s')$. We have thus shown that $D \models imply_{\gamma, \gamma'}(\bar t, \bar t')$.
\end{proof}

So far, we have dealt with equality subqueries and we have characterized the emptiness and inclusion of their supports (cf. Proposition~\ref{comp} and Proposition~\ref{val}, respectively). We can now use this machinery to characterize the support of a BCCQ. 
We start by expressing membership in the support of an individual CQ~:

\begin{lemma}\label{SuppUCQ}
Let $D$ be a database, 
%$\Sigma$ a set of functional dependencies over $D$, 
$v$ a consistent valuation of $D$ and $Q(\bar{x})$
a conjunctive query
%union of conjunctive queries 
in NRV-normal form, with relational subquery $q(\bar w)$ and equality subquery $\gamma(\bar x, \bar w)$.
Then $v \in \supp{(Q,D, \bar{r})}$ if and only 
% there exists $i, \bar{s}$ and  $\bar{t}$ such that  $D \models q_i(\bar{s},\bar{t})\wedge comp_{eq_i} (\bar{r} \bar{s} \bar{t})$ and $v(D) \models eq_i(v(\bar{r}\bar{s}\bar{t}))$. 
there exists $\bar{s}$ such that  $D \models q(\bar{s})\wedge comp_{\gamma} (\bar{r} \bar{s})$ and $v(D) \models \gamma(v(\bar{r}\bar{s}))$. 
\end{lemma}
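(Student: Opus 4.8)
The plan is to unfold the definition of support and then match the two sides of the equivalence through the witness $\bar s$ for the relational subquery. By definition, $v \in \supp(Q,D,\bar r)$ holds iff $v(\bar r) \in Q(v(D))$, that is, iff $v(D) \models \exists \bar w\,(q(\bar w) \wedge \gamma(v(\bar r),\bar w))$ under active-domain semantics; equivalently, iff there is a tuple $\bar b$ over $\adom(v(D)) \cup \adom(\gamma)$ with $v(D) \models q(\bar b) \wedge \gamma(v(\bar r),\bar b)$. The whole argument then amounts to passing between such a witness $\bar b$ living in $v(D)$ and a witness $\bar s$ living in $D$ with $v(\bar s) = \bar b$.

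For the $\Leftarrow$ direction I would start from $\bar s$ with $D \models q(\bar s)$ and $v(D) \models \gamma(v(\bar r \bar s))$, and set $\bar b := v(\bar s)$. Since $q$ is a conjunction of relational atoms and, under the closed-world assumption, $v(D)$ is exactly the set of $v$-images of the facts of $D$, from $D \models q(\bar s)$ we get $v(D) \models q(v(\bar s))$; together with $v(D) \models \gamma(v(\bar r),v(\bar s))$ this witnesses the existential, so $v \in \supp(Q,D,\bar r)$. Note this direction does not really use the conjunct $comp_\gamma(\bar r \bar s)$: by Proposition~\ref{comp} that conjunct already follows from the consistency of $v$ together with $v(D) \models \gamma(v(\bar r \bar s))$.

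For the $\Rightarrow$ direction I would take a witness $\bar b$ as above and lift it back to $D$; this is the step I expect to be the crux. Because $q(\bar w)$ contains no constants and, by the NRV normal form, no variable of $\bar w$ occurs twice, each relational atom of $q$ can be witnessed independently: for each atom $R(\cdots)$ the corresponding subtuple of $\bar b$ lies in the relation $R$ of $v(D)$, hence is the $v$-image of some fact of $R$ in $D$, and since the variables across distinct atoms are pairwise distinct these local choices combine into a single tuple $\bar s$ over $\adom(D)$ with $v(\bar s) = \bar b$ and $D \models q(\bar s)$. Then $v(D) \models \gamma(v(\bar r \bar s))$ is immediate from $v(D) \models \gamma(v(\bar r),\bar b)$, and, $v$ being a consistent valuation satisfying this equality subquery, Proposition~\ref{comp} yields $D \models comp_\gamma(\bar r \bar s)$. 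This supplies all three required conjuncts and completes the reduction.

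The only delicate point is the lift-back used in the $\Rightarrow$ direction, which is precisely where the two defining features of the NRV normal form are needed: the absence of constants and of repeated variables in the relational subquery guarantees that a match of $q$ in $v(D)$ can be pulled back to a match in $D$ without any clash between the witnesses chosen for distinct atoms. Everything else is bookkeeping with the definition of support and an appeal to Proposition~\ref{comp}.
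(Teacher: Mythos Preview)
Your proposal is correct and follows essentially the same approach as the paper's proof: both directions are argued exactly as you describe, with the $\Rightarrow$ direction using the NRV properties (no constants, no repeated variables in $q$) to lift a witness in $v(D)$ back to $D$, and then invoking Proposition~\ref{comp} to obtain $comp_\gamma(\bar r\bar s)$. Your observation that $comp_\gamma$ is redundant in the $\Leftarrow$ direction is also implicit in the paper, which simply does not use it there.
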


\begin{proof}
$\Rightarrow$ Assume  $v \in Supp(Q,D, \bar{r})$, i.e. $v$ is consistent and  $v(\bar{r}) \in Q(v(D))$ and so there exists 
%some $Q_i$ and 
an assignment $\mu$ of $\bar x \bar w$ over $\adom(v(D)) \cup \adom(Q)$ such that $\mu(\bar x) = v(\bar r)$ 
and $v(D) \models q(\mu(\bar w)) \wedge \gamma(\mu(\bar x\bar w))$.

%$\nu$ over $\adom(D) \cup \adom(Q)$ such that $\nu(\bar{x})=\bar{r}$ and
%%$v(D) \models q_i(v(\nu(\bar{y}\bar{z}))\wedge eq_i(v(\nu(\bar{x}\bar y\bar z)))$. 
%$v(D) \models q(v(\nu(\bar{w}))\wedge \gamma(v(\nu(\bar{x}\bar w)))$. 
%
%By Proposition~\ref{comp} $D \models comp_{eq_i} (\nu(\bar{x}\bar{y} \bar{z}))$. So it only remains to show that $D \models q_i(\nu(\bar{y}\bar{z}))$. 

Recall that $q(\bar{w})$ is a conjunction of relational atoms, with no constants and where each one of the free variables $\bar{w}$ has at most one occurrence in $q$. Thus there exists a mapping $\nu$ of $\bar w$ over $\adom(D)$ such that $D \models q(\nu(\bar w))$ and $v(\nu(\bar w))=\mu(\bar w)$. We let $\bar s = \nu(\bar w)$.
%It suffices to take any v-counterimage of facts of $q(\nu(\bar w))$ to have a witness for q in $D$
Recall that $\bar x$ and $\bar w$ do not share variables, so we can extend the mapping $\nu$ by setting $\nu(\bar x) = \bar r$. We thus have that $\nu(\bar x \bar w) = \bar r \bar s$ and 
%\mu is iver \adom(D) \cup \adom(Q) and
$v(\bar r \bar s) = \mu(\bar x \bar w)$.
%So assume there is a relational atom $R(y_1,\ldots,y_n,z_1,\ldots,z_m)$ in $q_i(\bar y, \bar z)$ with $y_1,\ldots,y_n \in \bar{y}$, $z_1,\ldots,z_m \in \bar{z}$ all pairwise distinct such that $D \not\models R(\nu(y_1,\ldots,y_n,z_1,\ldots,z_m))$ while $v(D) \models R(v(\nu(y_1,\ldots,y_n,z_1,\ldots,z_m)))$. However by construction of $Q_i$ this is impossible, as $R(y_1,\ldots,y_n,z_1,\ldots,z_m)$ does not contain any constant and variables in it are all pairwise different. So $D \models q_i(\bar{s},\bar{t})$.
It follows that $v$ is a consistent valuation for which $v(D) \models \gamma(v(\bar r\bar s))$; then by Proposition~\ref{comp} we also have $D\models comp_\gamma(\bar r\bar s)$.

$\Leftarrow$ 
%Assume $\exists i \bar{s} \bar{t}  ~D \models q_i(\bar{s},\bar{t})\wedge comp_{eq_i} (\bar{r} \bar{s} \bar{t})$ and let $v$ be a valuation of $D$ with $v(D) \models eq_i(v(\bar{r}\bar{s}\bar{t}))$ and $v(D) \not \models q_i(v(\bar{s},\bar{t}))$. So there is a relational atom $R(s_1,\ldots,s_n,t_1,\ldots,t_m)$ in $q_i(\bar s, \bar t)$ with $s_1,\ldots,s_n \in \bar{s}$, $t_1,\ldots,t_m \in \bar{t}$ and $v(D) \not \models R(v(s_1,\ldots,s_n,t_1,\ldots,t_m))$ while $D \models R(s_1,\ldots,s_n,t_1,\ldots,t_m)$. However by construction of $Q_i$ this is impossible, as $R(y_1,\ldots,y_n,z_1,\ldots,z_m)$ does not contain any constant and variables in it are all pairwise different. So $v(D) \models q_i(v(\bar{s},\bar{t}))$ and hence $v \in Supp(Q,D, \bar{r})$.
Since $D\models q(\bar s)$ we have $v(D) \models q(v(\bar s))$. Moreover by the hypothesis $v(D) \models \gamma (v(\bar r \bar s))$, thus $v(D) \models Q(v(\bar r))$.
\end{proof}

 In the remainder we consider BCCQs $Q(\bar{x})~:=~Q_1(\bar{x}) \vee \ldots \vee Q_n(\bar{x})$ in NRV disjunctive normal form (DNF) where for all $1 \leq i \leq n$ :
 $$Q_i~:=Q_{i_0}(\bar{x}) \wedge \neg Q_{i_1}(\bar{x}) \wedge \ldots \wedge \neg Q_{i_m}(\bar{x})$$
 \noindent and for all $1 \leq j \leq m$ :
$$Q_{i_j}~:=\exists \bar{w}_{i_j} q_{i_j}(\bar{w}_{i_j}) \wedge \gamma_{i_j}\text{ with }\gamma_{i_j} := e_{i_j}(\bar{x}\bar{w}_{i_j})$$
For convenience, we assume w.l.o.g every conjunction of literals to be of the same length $m$. 
We can also assume without loss of generality that for each $i$ we have $\adom(\gamma_{i_j}) = \adom(\gamma_{i_0})$ for all $j$. In fact we can always pad any $\gamma_{i_j}$ with dummy equalities $c = c$ to extend its active domain. 

Given a disjunct $Q_i$ in a BCCQ in DNF, we now define \emph{poss}$_{Q_i}$, encoding the set of possible answers to $Q_i$, and \emph{cons}$_{Q_i}$, checking the compatibility of an answer with the negative literals in $Q_i$. 
\begin{align}
poss_{Q_i}(\bar{x}\bar{w}) := q_{i_0}(\bar{w}) \wedge comp_{\gamma_{i_0}}(\bar{x}\bar{w}) \wedge cons_{Q_i}(\bar{x}\bar{w}) \nonumber
\end{align}
\begin{align}
cons_{Q_i}& \nonumber(\bar{x}\bar{w}) := \bigwedge_{\substack{1\leq j \leq m}} \forall \bar{w}'((q_{i_j}(\bar{w}') \wedge comp_{\gamma_{i_j}}(\bar{x}\bar{w}')) \rightarrow \neg imply_{\gamma_{i_0}, \gamma_{i_j}}(\bar{x}\bar{w}, \bar{x}\bar{w}')) \nonumber
\end{align}
Using these new formulae, we show that the non-emptiness of $\supp{(Q(\bar{x}), D, \bar{r})}$ can be expressed as the existence of a possible answer.

% \todo{adjust the notion of support to $\Sigma$, not sure simply adding it as a conjunct to the formula would do} DONE

\begin{proposition}\label{suppBCCQ}
Let $D$ be a database and $Q(\bar{x})$ a DNF BCCQ in NRV normal form,
then $\supp{(Q(\bar{x}), D, \bar{r})} \neq \emptyset$ if and only if $D \models \bigvee_{\substack{1\leq i \leq n}} \exists \bar{w}$ $poss_{Q_i}(\bar{r}\bar{w})$.
\end{proposition}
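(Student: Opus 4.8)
The plan is to reduce both directions to the single--CQ characterisation of Lemma~\ref{SuppUCQ}, treating the positive literal $Q_{i_0}$ and each negative literal $\neg Q_{i_j}$ of a disjunct $Q_i$ separately, and to use Proposition~\ref{val} to pass between support inclusion of equality subqueries and the formula $imply_{\gamma_{i_0},\gamma_{i_j}}$. The common starting observation is that $\supp(Q,D,\bar r)\neq\emptyset$ means there is a consistent $v$ with $v(\bar r)\in Q(v(D))$, and since $Q=\bigvee_i Q_i$ this is equivalent to the existence of an $i$ with $v(D)\models Q_{i_0}(v(\bar r))$ and $v(D)\not\models Q_{i_j}(v(\bar r))$ for all $1\le j\le m$.

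For the direction $\Rightarrow$, I would apply the $\Rightarrow$ part of Lemma~\ref{SuppUCQ} to the positive conjunct $Q_{i_0}$ to extract a tuple $\bar s$ with $D\models q_{i_0}(\bar s)\wedge comp_{\gamma_{i_0}}(\bar r\bar s)$ and $v(D)\models\gamma_{i_0}(v(\bar r\bar s))$; this $\bar s$ is my witness for the existential $\exists\bar w$. It then remains to check $D\models cons_{Q_i}(\bar r\bar s)$: for a fixed $j$ and any $\bar s'$ satisfying the antecedent $q_{i_j}(\bar s')\wedge comp_{\gamma_{i_j}}(\bar r\bar s')$, I argue by contradiction. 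If $imply_{\gamma_{i_0},\gamma_{i_j}}(\bar r\bar s,\bar r\bar s')$ held, then since also $comp_{\gamma_{i_0}}(\bar r\bar s)$ holds, Proposition~\ref{val} applied to the fixed consistent $v$ (which satisfies $v(D)\models\gamma_{i_0}(v(\bar r\bar s))$) would force $v(D)\models\gamma_{i_j}(v(\bar r\bar s'))$; together with $D\models q_{i_j}(\bar s')$ this yields $v(D)\models Q_{i_j}(v(\bar r))$ via the $\Leftarrow$ part of Lemma~\ref{SuppUCQ}, contradicting the choice of $v$. Hence each $\neg imply$ disjunct holds, so $D\models poss_{Q_i}(\bar r\bar s)$ and the required disjunction is satisfied.

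For the direction $\Leftarrow$, from a witness $\bar s$ with $D\models poss_{Q_i}(\bar r\bar s)$ I must build one concrete consistent valuation witnessing $\bar r$. Since $D\models comp_{\gamma_{i_0}}(\bar r\bar s)$, Proposition~\ref{comp} supplies a consistent $v$ with $v(D)\models\gamma_{i_0}(v(\bar r\bar s))$, and I would crucially invoke its ``moreover'' clause to take $v$ \emph{tight}, i.e. $v(s)=v(s')$ iff $D\models equiv_{\gamma_{i_0}}(\bar r\bar s,s,s')$. Combined with $D\models q_{i_0}(\bar s)$, Lemma~\ref{SuppUCQ} gives $v(D)\models Q_{i_0}(v(\bar r))$. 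For the negatives I argue by contradiction: if $v(D)\models Q_{i_j}(v(\bar r))$, Lemma~\ref{SuppUCQ} returns $\bar s'$ with $D\models q_{i_j}(\bar s')\wedge comp_{\gamma_{i_j}}(\bar r\bar s')$ and $v(D)\models\gamma_{i_j}(v(\bar r\bar s'))$; instantiating the conjunct of $cons_{Q_i}(\bar r\bar s)$ at $\bar s'$ yields $\neg imply_{\gamma_{i_0},\gamma_{i_j}}(\bar r\bar s,\bar r\bar s')$, hence some pair $z,z'$ with $D\models equiv_{\gamma_{i_j}}(\bar r\bar s',z,z')$ yet $D\not\models equiv_{\gamma_{i_0}}(\bar r\bar s,z,z')$. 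Lemma~\ref{intermediate} then forces $v(z)=v(z')$, while tightness of $v$ forbids exactly this, a contradiction; so $v$ refutes every $Q_{i_j}$, witnesses $Q_i$, and the support is nonempty.

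The main obstacle is precisely the negative literals in the $\Leftarrow$ direction: a generic consistent valuation that merely satisfies the positive equality subquery may collapse extra values and thereby spuriously satisfy a forbidden $Q_{i_j}$, so mere consistency of the witness is not enough. The resolution is to commit to the tight valuation of Proposition~\ref{comp}, which collapses exactly the pairs in $equiv_{\gamma_{i_0}}$ and nothing more, and to match this minimality against the $equiv$-discrepancy extracted from $\neg imply$ through Lemma~\ref{intermediate}. A secondary routine check is that the normalisations assumed without loss of generality (uniform conjunct length $m$ and $\adom(\gamma_{i_j})=\adom(\gamma_{i_0})$) make every $imply_{\gamma_{i_0},\gamma_{i_j}}$ well formed, and that active domain semantics guarantees the witnesses $\bar s,\bar s'$ delivered by Lemma~\ref{SuppUCQ} are legitimate instantiations of the quantifiers in $poss$ and $cons$.
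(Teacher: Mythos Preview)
Your proposal is correct and follows essentially the same approach as the paper: both directions hinge on Lemma~\ref{SuppUCQ} for the positive literal, Proposition~\ref{val} for passing between $imply$ and support inclusion of equality subqueries, and the tight valuation supplied by the ``moreover'' clause of Proposition~\ref{comp} to defeat the negative literals in the $\Leftarrow$ direction. The only organisational difference is that in the $\Leftarrow$ direction the paper first isolates a general Claim about the tight valuation $v^*$ (that $v^*(D)\models\gamma'(v^*(\bar t))$ iff every consistent $v$ satisfying $\gamma_{i_0}$ does) and then applies Proposition~\ref{val} to obtain a witnessing $v'$, whereas you inline this by directly unfolding $\neg imply$ to a pair $z,z'$ and combining Lemma~\ref{intermediate} with tightness; this is the same argument at a different level of packaging.
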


\begin{proof}
$\Leftarrow$ Let $D \models  \bigvee_{\substack{1\leq i \leq n}}\exists \bar{w}$ $poss_{Q_i}(\bar{r}\bar{w})$, then there exists $1 \leq i \leq n$ and an assignment $\nu$ with $\nu(\bar{w})=\bar{s}$, $D \models q_{i_0}(\bar{s})\wedge comp_{\gamma_{i_0}}(\bar{r}\bar{s})$ and for all $1 \leq j\leq m$, for all $\bar{s'}$ such that $D\models q_{i_j}(\bar{s'})\wedge comp_{\gamma_{i_j}}(\bar{r}\bar{s'})$, one has $D \models \neg imply_{\gamma_{i_0},\gamma_{i_j}}(\bar r \bar s , \bar r \bar s')$. Since $D \models comp_{\gamma_{i_0}}(\bar{r}\bar{s})$, 
by Proposition \ref{comp} there exists a consistent valuation $v^*$, such that $v^*(D) \models \gamma_{i_0}(v^*(\bar{r}\bar{s}))$ and for all $s, s' \in \adom(D)\cup\adom(\gamma_{i_0})$, we have $v^*(s) = v^*(s')$ iff $D \models equiv_{\gamma_{i_0}}(\bar r \bar s,s,s')$.

Moreover we can prove the following claim  :

\begin{claim}
For all conjunction of equalities $\gamma'(\bar y)$ with $\adom(\gamma') = \adom (\gamma_{i_0})$ and all $\bar t$ over $\adom(D) \cup \adom(\gamma_{i_0})$, one has  $v^*(D) \models \gamma'(v^*(\bar t))$ iff for all consistent valuations $v$, $v(D) \models \gamma_{i_0}(v(\bar{r}\bar{s}))$ implies $v(D) \models \gamma'(v(\bar t))$.
\end{claim}
\begin{proof}[Proof of the Claim]
$\Rightarrow$ Assume $v^*(D) \models \gamma'(v^*(\bar t))$ and let $v$ be a consistent valuation such that $v(D) \models \gamma_{i_0}(v(\bar r \bar s))$. 
We want to show $v(D) \models \gamma'(v(\bar t))$. By Lemma~\ref{intermediate} it is enough to show that $\forall s, s' \in \adom(D) \cup\adom(\gamma_{i_0})$, $D \models equiv_{\gamma'}(\bar{t},s,s')$ implies $v(s)=v(s')$. So let $s, s'$ be such that $D \models equiv_{\gamma'}(\bar{t},s,s')$. As $v^*(D) \models \gamma'(v^*(\bar t))$, by Lemma~\ref{intermediate}, $v^*(s)=v^*(s')$. By the properties of $v^*$, so $D \models equiv_{\gamma_{i_0}}(\bar r \bar s,s,s')$. As $v(D) \models \gamma_{i_0}(v(\bar r \bar s))$, by Lemma~\ref{intermediate} it follows that $v(s)=v(s')$.

$\Leftarrow$ Assume for all consistent valuations $v$, $v(D) \models \gamma_{i_0}(v(\bar r \bar s))$ implies $v(D) \models \gamma'(v(\bar t))$. By Lemma~\ref{intermediate}, $v^*$ being consistent $v^*(D) \models \gamma'(v^*(\bar{t}))$.
\end{proof}

Now fix some arbitrary $j \geq 1$ and $\bar{s'}$ with $D\models q_{i_j}(\bar{s'})\wedge comp_{\gamma_{i_j}}(\bar{r}\bar{s'})$. By Proposition~\ref{val}, it follows from $D \models \neg imply_{\gamma_{i_0},\gamma_{i_j}}(\bar{r}\bar{s},\bar{r}\bar{s'}) \wedge comp_{\gamma_{i_0}}(\bar r \bar s)$ that there exists a consistent valuation $v'$ with $v'(D) \models \gamma_{i_0}(v'(\bar{r}\bar{s}))$ but $v'(D) \not \models \gamma_{i_j}(v'(\bar{r}\bar{s'}))$. By the above claim 
%\ref{tight} then 
$v^*(D) \not \models \gamma_{i_j}(v^*(\bar{r}\bar{s'}))$.
In summary we have :

\begin{enumerate}
\item  [(i)] $D\models q_{i_0}(\bar{s})\wedge comp_{\gamma_{i_0}}(\bar{r}\bar{s})$ and $v^*(D)\models \gamma_{i_0}(v^*(\bar{r}\bar{s}))$ and so by Lemma \ref{SuppUCQ}, we have $v^* \in \supp{(Q_{i_0}(\bar{x}),D,\bar{r})}$, i.e., $v^*(D) \models Q_{i_0}(v^*(\bar{r}))$.
\item  [(ii)] For all $1 \leq j \leq m$ and assignment $\nu'$ with $\nu'(\bar{w})=\bar{s'}$, if $D \models q_{i_j}(\bar{s'}) \wedge comp_{\gamma_{i_j}}(\bar{r}\bar{s'})$ then $v^*(D) \not \models  \gamma_{i_j}(v^*(\bar{r}\bar{s'}))$ and so by Lemma \ref{SuppUCQ}, we have $v^* \not\in \supp{(Q_{i_j}(\bar{x}),D,\bar{r})}$, i.e., for all $1 \leq j \leq m$, $v^*(D) \models \neg Q_j(v^*(\bar{r}))$.
\end{enumerate}
This means we have $v^* \in \supp{(Q_{i_0}(\bar{x}) \wedge \neg Q_{i_1}(\bar{x})\wedge \ldots \wedge \neg Q_{i_m}(\bar{x}),D,\bar{r})}$ for all $1 \leq i \leq n$ and so $v^* \in \supp{(Q(\bar{x}),D,\bar{r})}$.

$\Rightarrow$ Let $v \in \supp{(Q(\bar{x}),D,\bar{r})}$, so $v$ is consistent and there is some $1 \leq i \leq n$ with:  (i) $v \in \supp{(Q_{i_0}, D, \bar{r})}$, (ii) for all $1 \leq j \leq m$, $v \not\in \supp{(Q_{i_j}, D, \bar{r})}$.
%TO MAKE SPACE REMOVED \begin{enumerate}
%\item [(i)] $v \in \supp{(Q_{i_0}, D, \bar{r})}$,
%\item [(ii)] for all $1 \leq j \leq m$, $v \not\in \supp{(Q_{i_j}, D, \bar{r})}$.
%\end{enumerate}
Using Lemma \ref{SuppUCQ} (i) implies that there exists $\bar{s}$ such that $D \models q_{i_0}(\bar{s})\wedge comp_{\gamma_{i_0}}(\bar{r}\bar{s})$ and $v(D) \models \gamma_{i_0}(v(\bar{r}\bar{s}))$. Again by Lemma \ref{SuppUCQ}, (ii) implies that for all $1 \leq j \leq m$ and $\bar{s'}$, if $D \models q_{i_j}(\bar{s'}) \wedge comp_{\gamma_{i_j}}(\bar{r}\bar{s'})$ then $v(D) \not \models  \gamma_{i_j}(v(\bar{r}\bar{s'}))$. This entails by Proposition \ref{val} that $D \models comp_{\gamma_{i_0}}(\bar{r}\bar{s}) \wedge \neg imply_{\gamma_{i_0},\gamma_{i_j}}(\bar{r}\bar{s},\bar{r}\bar{s'})$. This shows $D \models \bigvee_{\substack{1\leq i \leq n}} \exists \bar{w}$ $poss_{Q_i}(\bar{r}\bar{w})$.
\end{proof}

Now that we have defined the formula expressing for a BCCQ $Q$ non-emptiness of $\supp{(Q(\bar{x}), D, \bar{r})}$ (Proposition \ref{suppBCCQ}), we can easily define a rewriting for the problem \certainanswer$(Q)$. %\emph{certain answers} to a BCCQ. 
To do so, we rely on the fact that $\bar r \in \cert_\Sigma(Q,D)$ %is a certain answer to $Q$ 
iff $\supp{(\neg Q, D, \bar r)} = \emptyset$. 

\begin{theorem}[Datalog rewriting]\label{certain}
Let D be a database whose schema contains a set of equality generating dependencies $\Sigma$, and let $Q(\bar{x})$ be a BCCQ in NRV-normal form. Let $Q' = Q_1'(\bar{x}) \vee \ldots \vee Q_n'(\bar{x})$ be $\neg Q$ in DNF normal form. Then $\bar r \in \cert_\Sigma(Q,D)$ if and only if %$\forall i\ Supp(Q_i'(\bar{x}), D, \bar{r}) = \emptyset$ .
$D \models \rho(\bar r)$ where $\rho(\bar x) = \bigwedge_{\substack{1\leq i \leq n}} \forall \bar{w}$ $\neg poss_{Q'_i}(\bar{x}\bar{w})$.
\end{theorem}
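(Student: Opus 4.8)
The plan is to reduce the statement to Proposition~\ref{suppBCCQ} via the observation, already noted above, that a tuple is certain exactly when the support of the negated query is empty. First I would make this observation precise. Unfolding the definitions, $\supp(\neg Q, D, \bar r) = \{v \in \vl(D) \mid v(\bar r) \in (\neg Q)(v(D))\}$. Since each $v(D)$ is a complete database and we use active-domain semantics, $(\neg Q)(v(D))$ consists of exactly those tuples over $\adom(v(D)) \cup \adom(Q)$ that are not in $Q(v(D))$; and because $\bar r$ ranges over $\adom(D) \cup \adom(Q)$ and $v$ is the identity on constants, $v(\bar r)$ automatically lies in $(\adom(v(D)) \cup \adom(Q))^m$. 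Hence $v(\bar r) \in (\neg Q)(v(D))$ iff $v(\bar r) \notin Q(v(D))$, so $\supp(\neg Q, D, \bar r) = \vl(D) \setminus \supp(Q, D, \bar r)$. Consequently $\supp(\neg Q, D, \bar r) = \emptyset$ iff every consistent valuation witnesses $\bar r$, i.e. iff $\bar r \in \cert_\Sigma(Q,D)$.

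Second, I would pass from $\neg Q$ to the query $Q'$. Here $\neg Q$ is itself a BCCQ: it is $(\bigwedge_{k} Dom(x_k)) - Q$, and BCCQs are closed under difference, so $\neg Q$ can be put in the NRV disjunctive normal form $Q' = Q'_1 \vee \dots \vee Q'_n$ required by Proposition~\ref{suppBCCQ}. Since support is a purely semantic notion (it depends only on the answer sets $Q'(v(D))$) and $Q'$ is equivalent to $\neg Q$ on every complete database, we get $\supp(Q', D, \bar r) = \supp(\neg Q, D, \bar r)$ for every $\bar r$ and every consistent $v$.

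Finally, I would apply Proposition~\ref{suppBCCQ} to the BCCQ $Q'$ in NRV DNF. It yields $\supp(Q', D, \bar r) \neq \emptyset$ iff $D \models \bigvee_{1 \leq i \leq n} \exists \bar w\, poss_{Q'_i}(\bar r \bar w)$. Negating both sides and combining with the two previous steps gives the chain $\bar r \in \cert_\Sigma(Q,D)$ iff $\supp(Q', D, \bar r) = \emptyset$ iff $D \not\models \bigvee_i \exists \bar w\, poss_{Q'_i}(\bar r \bar w)$ iff $D \models \bigwedge_i \forall \bar w\, \neg poss_{Q'_i}(\bar r \bar w) = \rho(\bar r)$, which is exactly the claim. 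I expect the only delicate point to be the first step: one must handle the active-domain semantics of negation carefully, checking that $v(\bar r)$ never falls outside the active domain and that replacing $\neg Q$ by its NRV-DNF form $Q'$ leaves the support unchanged. Everything after that is a mechanical negation of the equivalence supplied by Proposition~\ref{suppBCCQ}. I would also remark that $\rho$ lies in the promised Datalog-plus-$\FO$ fragment, since the recursive predicates $equiv_\gamma$ form the first (recursive, negation-free) stratum while $poss$, $comp$, $imply$ together with the outer $\forall/\neg$ constitute the second (non-recursive) stratum.
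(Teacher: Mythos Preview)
Your proposal is correct and follows essentially the same route as the paper: reduce certainty to emptiness of $\supp(Q',D,\bar r)$ for the BCCQ $Q'=\neg Q$ in NRV-DNF, then invoke Proposition~\ref{suppBCCQ} and negate. You actually supply more justification than the paper does (the active-domain check for $v(\bar r)$, why $\neg Q$ is a BCCQ, and the stratification remark), so nothing further is needed.
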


\begin{proof}
One has that $\bar r \in \cert_\Sigma(Q,D)$ iff $Supp(Q', D, \bar r) = \emptyset$. $Q'$ being still a BCCQ, Proposition~\ref{suppBCCQ} tells us that $Supp(Q', D, \bar r) =\emptyset$ iff  $D \models \bigwedge_{\substack{1\leq i \leq n}} \forall \bar{w}$ $\neg poss_{Q'_i}(\bar{r}\bar{w})$.
% $\Rightarrow$ Assume $r \in \Box (Q(\bar{x}), D)$. By definition, $\forall v\ v(\bar{r}) \in Q(v(D))$, i.e. $\forall vi\ v(\bar{r}) \notin Q_i'(v(D))$. Assume towards a contradiction that $\forall i\ Supp(Q_i'(\bar{x}), D, \bar{r}) \neq \emptyset$ which means $\exists j\ Supp(Q_j'(\bar{x}), D, \bar{r}) \neq \emptyset$. By Proposition \ref{suppBCCQ}, there exists a valuation $v'$ such that \\ $v' \in Supp(Q_j'(\bar{x}), D, \bar{r})$, which contradicts our hypothesis. Thus $Supp(Q_j'(\bar{x}), D, \bar{r}) = \emptyset$.
%
% $\Leftarrow$ Assume $\forall i\ Supp(Q_i'(\bar{x}), D, \bar{r}) = \emptyset$. By Proposition \ref{suppBCCQ}, $\forall vi\ v \notin Supp(Q_i'(\bar{x}), D, \bar{r})$. Assume towards a contradiction that $r \notin \Box (Q(\bar{x}), D)$. By definition, we have that $\exists v\ v(\bar{r}) \notin Q(v(D))$, i.e. $\exists v'j\ v'(\bar{r}) \in Q_j'(v'(D)$ which contradicts the result of Proposition \ref{suppBCCQ}. Thus $r \in \Box (Q(\bar{x}), D)$.
\end{proof}

\begin{corollary}
For each fixed BCCQ query $Q$ and a set of EGDs $\Sigma$, the complexity of \certainanswer$_(Q)$ is in PTIME.
\end{corollary}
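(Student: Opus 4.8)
The plan is to read the corollary off Theorem~\ref{certain} together with the complexity bound on the Datalog fragment introduced in Section~\ref{prelim-sec}. Fix a BCCQ $Q$ and a set of EGDs $\Sigma$. By Theorem~\ref{certain}, deciding whether $\bar r \in \cert_\Sigma(Q,D)$ amounts to the (naive) evaluation on $D$ of the single formula $\rho(\bar r) = \bigwedge_{1 \leq i \leq n} \forall \bar w\, \neg poss_{Q'_i}(\bar r \bar w)$, where $Q' = Q'_1 \vee \dots \vee Q'_n$ is the DNF of $\neg Q$. So it suffices to argue that evaluating $\rho$, together with all the auxiliary relations it depends on, can be carried out in time polynomial in $|D|$.

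First I would unwind the dependencies of $\rho$ and locate them in the two-layer fragment of Section~\ref{prelim-sec}. The formula $\rho$ is a first-order combination of the formulas $poss_{Q_i}$, $cons_{Q_i}$, $comp_\gamma$ and $imply_{\gamma,\gamma'}$, and each of these is plain $\FO$ over the database schema enlarged with the predicates $Null$, $Dom$ and the relations $equiv_\gamma$. The only recursion occurs inside the $equiv_\gamma$ predicates, each computed by the negation-free Datalog program defining it. Crucially, since $Q$ and $\Sigma$ are fixed, only finitely many conjunctions of equalities $\gamma$ arise (the padded equality subqueries of $Q'$ and the EGD bodies), so there are finitely many such programs, each with a fixed set of rules and a predicate $equiv_\gamma$ of fixed arity $|\bar y| + 2$. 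This is exactly the stratified shape described in Section~\ref{prelim-sec}: a negation-free, possibly recursive first stratum computing the $equiv_\gamma$ relations, followed by a non-recursive $\FO$ stratum (namely $\rho$) in which all negation lives.

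Then I would invoke the polynomial-time data complexity of that fragment. Negation-free Datalog has polynomial-time data complexity, so each $equiv_\gamma$ relation is computed in time polynomial in $|D|$ and, having fixed arity, contains only polynomially many tuples; the auxiliary predicates $Null$ and $Dom$ are likewise computable by a single scan of $D$. Evaluating the remaining $\FO$ layer under active-domain semantics is polynomial, since every quantifier ranges over $\adom(D)\cup\adom(Q)$, which is of polynomial size. Naive evaluation changes nothing here: it merely treats the nulls of $D$ as pairwise distinct fresh constants, leaving the data complexity intact. Composing the two strata yields a polynomial-time procedure that, on input $D$ and $\bar r$, decides $D \models \rho(\bar r)$, and hence, by Theorem~\ref{certain}, decides membership of $\bar r$ in $\cert_\Sigma(Q,D)$.

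There is no real obstacle once Theorem~\ref{certain} is in hand; the combinatorial work was already discharged by that theorem and the supporting Propositions~\ref{comp}, \ref{val} and~\ref{suppBCCQ}. The only points that genuinely need checking are bookkeeping ones: that the stratification is clean (recursion confined to the negation-free $equiv_\gamma$ programs, negation confined to the $\FO$ layer), and that for a fixed query the number of auxiliary predicates and their arities are bounded, so that the constants hidden in ``polynomial time'' do not depend on $D$.
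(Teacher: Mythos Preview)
Your proposal is correct and follows exactly the route the paper intends: the corollary is stated in the paper with no explicit proof, as it is immediate from Theorem~\ref{certain} together with the observation in Section~\ref{prelim-sec} that the two-stratum Datalog fragment (negation-free recursive first stratum, non-recursive \FO\ second stratum) has PTIME data complexity. Your unwinding of the stratification and the bookkeeping about fixed arities and finitely many auxiliary predicates simply makes explicit what the paper leaves implicit.
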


\comment{
\begin{example}
Assume $\Sigma=\emptyset$. Take $Q(x) = R_1(x) \wedge \neg R_2(x)$ over $D = \{R_1(1), R_2(\bot_1)\}$. Is 1 an answer to $Q(x)$ ?\\
First we need to put $Q(x)$ in NRV Normal Form : \\$Q^{NRV} = \exists z_1 R_1(z_1) \wedge z_1 = x \wedge \neg (\exists z_2 R_2(z_2) \wedge z_2 = x)$.\\ Let $\nu$ be an assignment such that $\nu(z_1)=1$ and $\nu(z_2)=\bot_1$ and $v$ a valuation. By Proposition \ref{suppBCCQ}, 1 is an answer if $D \models R_1(1) \wedge comp_e(1, \{\}, 1) \wedge (R_2(\bot_1) \wedge comp_{e'}(1, \{\}, \bot_1) \rightarrow \neg imply_{e, e'}(1\{\}1, 1\{\}\bot_1))$. \\If the valuation $v$ sends $\bot_1$ to 1, $imply_{e, e'}(1\{\}1, 1\{\}\bot_1)$ is true (as $z_1 \equiv_e^\nu z_1$  and $z_1 \equiv_{e'}^\nu z_2$), thus rendering $R_2(\bot_1) \wedge comp_{e'}(1, \{\}, \bot_1) \rightarrow \neg imply_{e, e'}(1\{\}1, 1\{\}\bot_1))$ false and proving that $v \notin Supp(Q(x), D, 1)$. \\However, if $v(\bot_1) \neq 1$, then $R_2(\bot_1) \wedge comp_{e'}(1, \{\}, \bot_1) \rightarrow \neg imply_{e, e'}(1\{\}1, 1\{\}\bot_1))$ is true and we have that $v \in Supp(Q(x), D, 1)$. 
\\1 is the best but not a certain answer for Q(x) over D. 
\end{example}
}

%
%End of copy paste from Datalog 2.0 section 3
%
\label{dat}

\subsection{Non-rewritability in FO}
%
%Start of copy paste from Datalog 2.0 section 4
%

The basic starting points for our investigation was the fact that $\cert_\Sigma(Q,D)=Q(\chase_\Sigma(D))$ for a CQ $Q$ and a set $\Sigma$ of FDs, for every database $D$. This remained true for unions of CQs, but failed for BCCQs, forcing us to produce a Datalog rewriting to obtain certain answers. But can a first-order rewriting be obtained instead? This would make it possible to produce certain answers using the core of SQL as opposed to its recursive features which do not always perform as well in practice. 

In this section we show that the answer, in general, is negative even for CQs (and thus for BCCQs). In the next section however we show that such rewritings can be obtained in FO for BCCQs whenever $\Sigma$ is empty. 

The main result of this section is the following.

\begin{theorem}
\label{no-fo-thm}
There exists a Boolean CQ $Q$ and single FD $\Sigma$ 
over a relational schema of binary and unary relations such that $\cert_\Sigma(Q,D)$ is not expressible as an FO query.
\end{theorem}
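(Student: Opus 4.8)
The plan is to exploit the identity recalled in the preliminaries, $\cert_\Sigma(Q,D)=Q(\chase_\Sigma(D))$ for a CQ $Q$ and a set $\Sigma$ of EGDs. Since the chase with a functional dependency merges nulls and closes such merges \emph{transitively}, the map $D\mapsto \cert_\Sigma(Q,D)$ can compute a reachability/transitive-closure property, which is the prototypical non-\FO\ property. So I would reduce undirected $s$--$t$ reachability in graphs to the certain-answer query. Concretely, I take the schema consisting of one binary relation $R$ and two unary relations $A,B$, and let $\Sigma$ be the single FD $\forall x\,y\,y'\,(R(x,y)\wedge R(x,y')\to y=y')$, a key on the first attribute of $R$. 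The query is the Boolean CQ $Q:=\exists x\,(A(x)\wedge B(x))$.

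Given an undirected graph $G=(V,E_G)$ with two distinguished vertices $s,t$, I build $D_G$ entirely out of nulls: a fresh null $\bot_u$ for every vertex $u$ and a fresh null $\bot_e$ for every edge $e$; for each edge $e=\{u,w\}$ I add the facts $R(\bot_e,\bot_u)$ and $R(\bot_e,\bot_w)$; finally I add $A(\bot_s)$ and $B(\bot_t)$. The first step is then to analyse the chase. For each edge $e=\{u,w\}$ the two facts $R(\bot_e,\bot_u),R(\bot_e,\bot_w)$ share their first coordinate, so the FD forces $\bot_u=\bot_w$; since the edge-nulls never occur as second coordinates, they are never identified and no spurious merges arise. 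Hence the equivalence on vertex-nulls produced by the chase is exactly the reflexive–symmetric–transitive closure of adjacency, so $\bot_u$ and $\bot_w$ are identified in $\chase_\Sigma(D_G)$ iff $u$ and $w$ lie in the same connected component of $G$. Because $A$ and $B$ evaluate to the singletons $\{[\bot_s]\}$ and $\{[\bot_t]\}$ in the chase, $Q(\chase_\Sigma(D_G))$ holds iff $[\bot_s]=[\bot_t]$, i.e.\ iff $s$ and $t$ are connected; by the recalled identity the same holds for $\cert_\Sigma(Q,D_G)$ (whose Boolean answer is true precisely then).

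To conclude, suppose for contradiction that $\cert_\Sigma(Q,\cdot)$ were defined by an \FO\ sentence $\Phi$ over the schema. Evaluated on the databases $D_G$, $\Phi$ would define undirected $s$--$t$ connectivity, which is not \FO-definable. To rule this out directly on the databases $D_G$ (which carry the extra edge-nulls and the $A,B$ markers), I would run an Ehrenfeucht–Fra\"iss\'e argument: for every $k$ I exhibit graphs $G_k,H_k$ with $s,t$ connected in $G_k$ but disconnected in $H_k$ whose encodings $D_{G_k},D_{H_k}$ agree on all \FO\ sentences of quantifier rank $k$. A convenient choice is a single long cycle through $s$ and $t$, with $s$ and $t$ placed antipodally, versus two long disjoint cycles carrying $s$ and $t$ separately; the encoding only subdivides each edge once, so the standard duplicator strategy on long cycles transfers, with distances merely doubled. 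Since $\Phi$ has a fixed quantifier rank, this contradicts that it defines connectivity, and the theorem follows (any BCCQ case then follows a fortiori, as a CQ is a BCCQ).

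The main obstacle is precisely this last step, the Ehrenfeucht–Fra\"iss\'e / locality verification. The two marked points singled out by $A$ and $B$ behave like constants on which the spoiler can anchor the game, so the witnessing families $G_k,H_k$ must keep the $k$-neighbourhoods of $s$ and $t$ (and of their images under the edge-subdivision encoding) indistinguishable while still differing on global connectivity; arranging the placement of the marked vertices and describing the duplicator's strategy so that \emph{both} the local pictures around $s,t$ and the overall cyclic structure match is the delicate part. The chase analysis and the reduction itself are routine.
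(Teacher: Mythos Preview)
Your approach is correct and shares the paper's choice of schema, FD, and query (one binary relation with a key on the first attribute, unary $A$ and $B$, and $Q=\exists x\,(A(x)\wedge B(x))$), but the argument is organised differently. The paper does not invoke the chase identity or reduce to graph connectivity; instead it directly builds, for each $n$, two incomplete databases $D_n$ and $D_n'$, each consisting of two disjoint complete binary trees of depth $n$ whose nodes are all nulls. In $D_n$ the $A$- and $B$-marked leaves lie in different trees; in $D_n'$ they lie in opposite halves of the same tree. The key observation is that any consistent valuation collapses each binary tree to a chain (since the FD forces the two children of every node to coincide), so $\cert_\Sigma(Q,D_n')$ is true (the two marked leaves are forced to coincide) while $\cert_\Sigma(Q,D_n)$ is false (the two chains can be valuated disjointly). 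A Hanf-locality bijection that swaps isomorphic subtrees then shows $D_n$ and $D_n'$ cannot be separated by any fixed \FO\ sentence once $n$ exceeds the locality radius.

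Your route is more modular: you use the chase identity to rewrite $\cert_\Sigma(Q,\cdot)$ as $Q(\chase_\Sigma(\cdot))$, note that chasing your edge-subdivision encoding computes connected components, and then finish with the standard one-cycle-versus-two-cycles locality argument on the encoded structures. This buys conceptual clarity (the hard part is outsourced to a textbook inexpressibility result), at the cost of relying on the chase identity. The paper's route is more self-contained, reasoning directly about valuations, and its tree gadget makes the Hanf bijection especially clean (just a permutation of isomorphic subtrees). Your concern about the marked points is not a genuine obstacle: with $s,t$ placed antipodally on a cycle of length $2m$ versus on two disjoint cycles of length $m$, and $m$ large relative to the locality radius, the multisets of $r$-neighbourhood types in the two encodings coincide (the subdivision and the alternating in/out-degree pattern are purely local features preserved by the obvious bijection), just as in the paper's subtree swap.
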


\noindent
\begin{proof}
Consider a schema with one binary relation $E$ and two unary relations $A$ and $B$. The only FD in $\Sigma$ is $\forall x \forall y \forall z\ \big(E(x,y) \wedge E(x,z) \to y =z\big)$; in other words, the first attribute of $E$ is a key. The query $Q$ is a Boolean CQ $\exists x\ (A(x)
\wedge B(x))$. 

To prove inexpressibility of $\cert_\Sigma(Q,\cdot)$ in FO, for each $n > 0$ we create two databases $D_n$ and $D_n'$. In both of them, $E$ is interpreted as a disjoint union $T_1\cup T_2$ where $T_1$ and $T_2$ are balanced binary trees of depth $n$ in which all nodes are distinct nulls. In both $A$ and $B$ are singleton sets. In $D_n$, the set $A$ contains a leaf of $T_1$ and $B$ contains a leaf of $T_2$. In $D_n'$, both $A$ and $B$ contain leaves of $T_1$ such that their only common ancestor in the tree is the root (in other words, they are leaves of subtrees rooted at different children of the root of $T_1$). 

Because of the constraint $\Sigma$, for every valuation $v$ such that the resulting database satisfies it we have that both $v(T_1)$ and $v(T_2)$ are chains. Indeed, consider any node $\bot$ with children $\bot_1,\bot_2$ in $T_i$. If $v(\bot_1)\neq v(\bot_2)$ then the resulting tuples $(v(\bot), v(\bot_1))$ and $(v(\bot),v(\bot_2))$ violate the constraint. Thus $v(\bot_1)=v(\bot_2)$ and applying this construction inductively we see that $v(T_i)$ is a chain. Hence, it has a single leaf, and thus $\cert_\Sigma(Q,D_n')$ is true, since $A$ and $B$ must be interpreted as that leaf. On the other hand, $\cert_\Sigma(Q,D_n)$ is false, since there is a valuation $v$ that sends $T_1$ and $T_2$ into two disjoint chains, and thus $A$ and $B$ are interpreted as two distinct elements. 

Assume now that $\cert_\Sigma(Q,\cdot)$ is rewritable as an FO sentence $\phi$. Then, for every $n > 0$, we have $D_n'\models\phi$ and $D_n\models\neg\phi$. We next show that such a sentence cannot exist, thereby proving non-FO-rewritability. 

Recall that in a database (with one binary relation, like considered here) a radius $r$ neighborhood of an element $a$ is its restriction to the set of all elements reachable from $a$ by a path of length at most $r$, where the path does not take into account the orientation of edges of $E$ (for example, if we have $E(a,b)$ and $E(c,a)$ then both $b$ and $c$ are in the radius $1$ neighborhood of $a$). When two neighborhoods, of elements $a$ and $b$, are isomorphic, it means that there is an isomorphism between them that sends $a$ to $b$. In other words, centers of neighborhoods are viewed as distinguished elements when it comes to defining neighborhoods. It is known that each first order sentence $\psi$ is Hanf-local \mycite{fsv95}: that is, there exists a number $r > 0$ such that for any two databases $D_1$ and $D_2$, if there is a bijection $f$ between $D_1$ and $D_2$ such that the radius $r$ neighborhoods of $a$ in $D_1$ and $f(a)$ in $D_2$ are isomorphic then $D_1$ and $D_2$ agree on $\psi$, i.e. either both satisfy it or both do not.

Now let $r$ be such a number for the sentence $\phi$ we assumed exists. Consider $D_n$ and $D_n'$ and let $T_{1a},T_{1*}$ be the subtrees of the root of $T_1$ in $D_n$ such that the first contains $A$ while the second contains neither $A$ not $B$, and let $T_{2b},T_{2*}$ be defined similarly for subtrees of the root of $T_2$ with respect to $B$. In $D_n'$ we define $T_{1a}',T_{1b}'$ as subtrees of the root of the tree containing $A,B$ such that the first contains the $A$ leaf and the second contains the $B$ leaf, while $T_{2*}',T_{2**}'$ be the subtrees of the root of the tree having neither $A$ nor $B$ elements. Then it is easy to see that the following pairs of trees are isomorphic: $T_{1a}$ and $T_{1a}'$, $T_{2b}$ and $T_{1b}'$, $T_{1*}$ and $T_{2*}'$, $T_{2*}$ and $T_{2**}'$.

%Amélie : SORRY I REMOVED YOUR ITEMIZE ENVIRONMENT TO MAKE SPACE LEONID \begin{itemize}
%    \item $T_{1a}$ and $T_{1a}'$
 %   \item $T_{2b}$ and $T_{1b}'$
  %  \item $T_{1*}$ and $T_{2*}'$
  %  \item $T_{2*}$ and $T_{2**}'$
%\end{itemize}
We now define the bijection $f$ as the union of those isomorphisms plus mapping roots of trees $T_i$ in $D$ into roots of $T_i$ in $D'$. It is an immediate observation that if $n > r+1$ (i.e., leaves are not in the radius $r$ neighborhood of children of roots) then $f$ satisfies the condition that neighborhoods of $a$ and $f(a)$ of radius $r$ are isomorphic. This would tell us that $D_n$ and $D_n'$ agree on $\phi$ but we know they do not. This contradiction completes the proof. 
\end{proof}

As a corollary to the proof, we obtain the following result showing that non-recursive SQL is incapable of computing $\cert_\Sigma(Q,D)$ in the setting of Theorem \ref{no-fo-thm}.

\begin{corollary}
\label{no-aggr-cor}
There exists a Boolean CQ $Q$ and single FD $\Sigma$ 
over a relational schema of binary and unary relations such that $\cert_\Sigma(Q,D)$ is not expressible in the basic {\tt SELECT-FROM-WHERE-GROUP BY-HAVING} fragment of SQL with arbitrary aggregate functions.
\end{corollary}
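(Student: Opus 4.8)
The plan is to reuse, without any change, the two families of databases $D_n$ and $D_n'$ together with the neighborhood-preserving bijection $f$ constructed in the proof of Theorem~\ref{no-fo-thm}, and simply to replace the appeal to Hanf-locality of \FO\ by the corresponding locality property of the core SQL fragment. Recall from that proof that $D_n$ and $D_n'$ have the same cardinality (each is a disjoint union of two balanced binary trees of depth $n$), that $f$ is a bijection under which, as soon as $n>r+1$, the radius-$r$ neighborhood of every element $a$ is isomorphic to the radius-$r$ neighborhood of $f(a)$, and that $\cert_\Sigma(Q,D_n')$ holds while $\cert_\Sigma(Q,D_n)$ fails, for every $n>0$.

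The one new ingredient needed is that the \texttt{SELECT-FROM-WHERE-GROUP BY-HAVING} fragment with arbitrary aggregate functions --- equivalently, relational algebra extended with grouping and aggregation --- defines only Hanf-local Boolean queries. This is the locality theorem for aggregate query languages, the SQL counterpart of the \FO-locality result of \mycite{fsv95}: for every Boolean query $E$ in the fragment there is a radius $r$, depending only on $E$, such that any two databases related by a radius-$r$ neighborhood-preserving bijection receive the same truth value from $E$. The point is that such a bijection preserves, for each isomorphism type of $r$-neighborhood, the number of elements realizing it (and in particular the total cardinality), and the theorem certifies that an aggregate query cannot extract any information beyond these local multiplicities; this is precisely why counting and arithmetic do not help here.

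Granting this, the argument is immediate and mirrors the end of the proof of Theorem~\ref{no-fo-thm}. Suppose for contradiction that $\cert_\Sigma(Q,\cdot)$ were computed by some Boolean query $E$ in the fragment, and let $r$ be its locality radius. Choosing $n>r+1$, the bijection $f$ witnesses that $D_n$ and $D_n'$ are indistinguishable at radius $r$, so $E$ returns the same answer on both; but $\cert_\Sigma(Q,D_n)$ is false and $\cert_\Sigma(Q,D_n')$ is true, a contradiction. Hence no such $E$ exists.

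The main obstacle lies entirely in pinning down the correct locality statement for the chosen SQL fragment, not in the combinatorics, which is already supplied by Theorem~\ref{no-fo-thm}. Because aggregates introduce numerical values and arbitrary arithmetic, one must check that this does not enable non-local behavior (for instance through counting); the resolution is that Hanf-equivalence via a bijection already fixes every local count as well as the global cardinality, and the aggregate-locality theorem guarantees that nothing further can be computed. Once that theorem is invoked in place of \mycite{fsv95}, the database construction of Theorem~\ref{no-fo-thm} transfers verbatim.
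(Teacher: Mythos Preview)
Your proposal is correct and follows essentially the same approach as the paper: reuse the construction and bijection from Theorem~\ref{no-fo-thm} and replace the appeal to Hanf-locality of \FO\ by Hanf-locality of the aggregate fragment. The paper phrases the latter step as a translation of the SQL fragment into a logic with aggregate functions \mycite{tcs03} which is known to be Hanf-local \mycite{HLNW01}, but this is exactly the ``locality theorem for aggregate query languages'' you invoke.
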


This is due to the fact that queries in this fragment of SQL with grouping and aggregation can be translated into a logic with aggregate functions \mycite{tcs03} which itself is known to be Hanf-local \mycite{HLNW01}. 
%
%End of copy paste from Datalog 2.0 section 4
%
\label{nofo}

\subsection{FO rewriting for Certain Answers}
%
%Start of copy paste from Datalog 2.0 section 5
%

We now focus on the special case where $\Sigma$ is empty. First notice that the only Datalog component in our rewriting was the $equiv_{\gamma}$ formula; moreover notice that  without constraints, $equiv_{\gamma}$ simply computes a reflexive symmetric transitive closure. More precisely, for a given $\bar t = \nu(\bar y)$, one has that  $equiv_{\gamma}(\bar t, s, s')$ holds in $D$ iff $(s,s')$ belongs to the reflexive symmetric transitive closure of $\{(\nu(x), \nu(w)) ~|~ x = w \in \gamma(\bar y)\}$ over $\adom(D) \cup\adom(\gamma)$.

\begin{definition}
Given a database $D$, a conjunction of equality atoms $\gamma(\bar{y})$ and an assignment $\nu: \bar y \cup \adom(\gamma) \rightarrow\adom(D)\cup \adom(\gamma)$ preserving constants,
we say that $u,u' \in \adom(D)\cup\adom(\gamma)$  are equivalent w.r.t. $\gamma$ and $\nu$ and write $u \equiv_\gamma^\nu u'$, 
if either $u=u'$ or $(u, u')$ belongs to the reflexive symmetric transitive closure of $\{(\nu(x), \nu(w)) ~|~ x = w \in \gamma\}$.
\end{definition}

As $\Sigma$ is empty, we can rewrite as follows the $equiv_{\gamma}$ formula in FO, 
where $m$ is the number of equivalence classes of $\sim_{\gamma}$ : 
\begin{align}
&\equivfo_{\gamma}(\bar{y},z,z')~:=~ z=z'~\vee~ \nonumber\\
&\bigvee_{\substack{u_1, v_1\ldots u_m, v_m \in~\bar{y}~\cup~\adom(\gamma)~|~\nonumber\\
u_i \sim_\gamma v_i \textrm{ for all } 1 \leq i \leq m}}\!\!\!\!\!\!\!\!\!\!\!(z=u_1 \wedge z'=v_m \wedge \bigwedge_{1\leq i < m}v_i=u_{i+1})
\end{align}

\begin{proposition}\label{equiv}
Given an incomplete database $D$, a conjunction of equality atoms $\gamma(\bar{y})$ 
and an assignment $\nu(\bar{y})=\bar{t}$ over $\adom(D) \cup \adom(\gamma)$, 
given $s,s'$ in $\bar t\cup\adom(\gamma)$, we have that
$D \models \equivfo_{\gamma} (\bar{t},s,s')$
if and only if $s\equiv_\gamma^\nu s'$.
\end{proposition}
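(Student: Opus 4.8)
The plan is to prove the two implications separately, reading $\equivfo_\gamma(\bar t,s,s')$ as the assertion that $s$ and $s'$ are linked by a chain of at most $m$ ``jumps'', where each jump $(u_i,v_i)$ stays inside a single $\sim_\gamma$-class (since $u_i\sim_\gamma v_i$) and consecutive jumps are glued by the value equalities $\nu(v_i)=\nu(u_{i+1})$. Throughout I would use that $\nu$ preserves constants, so that every endpoint of an edge of the value graph $R=\{(\nu(x),\nu(w))\mid x=w\in\gamma\}$, as well as $s$ and $s'$ themselves (which lie in $\bar t\cup\adom(\gamma)=\nu(\bar y)\cup\adom(\gamma)$), is of the form $\nu(p)$ for some $p\in\bar y\cup\adom(\gamma)$. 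I would also record the elementary fact that $x\sim_\gamma w$ implies $\nu(x)\equiv_\gamma^\nu\nu(w)$: a sequence of equality atoms of $\gamma$ witnessing $x\sim_\gamma w$ maps under $\nu$ to a path in $R$, hence to a pair in its reflexive–symmetric–transitive closure.

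For the $(\Rightarrow)$ direction, suppose $D\models\equivfo_\gamma(\bar t,s,s')$. The disjunct $z=z'$ gives $s=s'$, whence $s\equiv_\gamma^\nu s'$ by reflexivity. Otherwise some disjunct supplies $u_1,v_1,\dots,u_m,v_m\in\bar y\cup\adom(\gamma)$ with $u_i\sim_\gamma v_i$ for all $i$, and with $s=\nu(u_1)$, $s'=\nu(v_m)$, $\nu(v_i)=\nu(u_{i+1})$ for $i<m$. Then $s=\nu(u_1)\equiv_\gamma^\nu\nu(v_1)=\nu(u_2)\equiv_\gamma^\nu\cdots\equiv_\gamma^\nu\nu(v_m)=s'$, applying the recorded fact to each $(u_i,v_i)$ and reflexivity to each equality $\nu(v_i)=\nu(u_{i+1})$; transitivity of $\equiv_\gamma^\nu$ then yields $s\equiv_\gamma^\nu s'$. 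This direction is routine.

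The substantial direction is $(\Leftarrow)$, and its real content is that $m$ jumps suffice. First I would observe that each $\sim_\gamma$-class $C$ has its image $\nu(C)$ contained in a single $\equiv_\gamma^\nu$-class, and that every value in the $\equiv_\gamma^\nu$-class of $s$ equals $\nu(p)$ for some $p$, hence lies in $\nu(C)$ for some class $C$. I would then form the auxiliary graph $\mathcal{K}$ on the $m$ classes $C_1,\dots,C_m$ of $\sim_\gamma$, joining $C_l$ and $C_{l'}$ whenever $\nu(C_l)\cap\nu(C_{l'})\neq\emptyset$. Given $s\equiv_\gamma^\nu s'$ with $s\neq s'$, a path from $s$ to $s'$ in $R$ decomposes into edges, and the two endpoints of each edge lie in a common class-image (the two variables of the responsible atom $x=w$ belong to one $\sim_\gamma$-class); since consecutive edges share a value, the classes traversed form a walk in $\mathcal{K}$ from a class whose image contains $s$ to one whose image contains $s'$. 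Taking a \emph{shortest} such $\mathcal{K}$-path makes it simple, so it visits only $r+1\le m$ distinct classes $C_{l_0},\dots,C_{l_r}$.

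To finish, I would convert this class-path into witnesses for one disjunct. Set $b_0=s$, $b_{r+1}=s'$, and for $1\le j\le r$ pick a shared value $b_j\in\nu(C_{l_{j-1}})\cap\nu(C_{l_j})$. For each $i=1,\dots,r+1$, since $b_{i-1},b_i\in\nu(C_{l_{i-1}})$, choose $u_i,v_i\in C_{l_{i-1}}$ with $\nu(u_i)=b_{i-1}$ and $\nu(v_i)=b_i$; these satisfy $u_i\sim_\gamma v_i$, and at every junction $\nu(v_i)=b_i=\nu(u_{i+1})$, while $\nu(u_1)=s$ and $\nu(v_{r+1})=s'$. Finally I would pad the indices $r+1<i\le m$ by taking $u_i=v_i$ to be any object mapping to $s'$, which keeps $u_i\sim_\gamma v_i$ (reflexivity) and all equalities true. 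This realizes the disjunct, so $D\models\equivfo_\gamma(\bar t,s,s')$. The main obstacle is exactly this length bound: one must justify that $m$, the number of $\sim_\gamma$-classes, caps the chain length, which hinges on replacing the arbitrarily long $R$-path by the simple $\mathcal{K}$-path and on the fact that all intermediate values are $\nu$-images of objects in $\bar y\cup\adom(\gamma)$, so that they can be named inside the (fixed-size) formula.
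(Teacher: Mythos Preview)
Your argument is correct. The $(\Rightarrow)$ direction coincides with the paper's. For $(\Leftarrow)$ you take a genuinely different route: the paper proceeds by induction on the number of transitive-closure steps deriving $(s,s')$, at each step appending one new pair $(x,w)$ to the length-$m$ chain furnished by the induction hypothesis, and then using pigeonhole on the $m{+}1$ first components $u_1,\dots,u_{m+1}$ (two of which must lie in the same $\sim_\gamma$-class) to splice the chain back down to length at most $m$. You instead work globally: you build the class graph $\mathcal{K}$, observe that any $R$-path induces a $\mathcal{K}$-walk between a class whose image contains $s$ and one whose image contains $s'$, and then take a \emph{shortest} such $\mathcal{K}$-path, which is simple and hence visits at most $m$ classes, yielding the witnesses $(u_i,v_i)$ in one shot. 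Both arguments rest on the same combinatorial fact, that the number of $\sim_\gamma$-classes bounds the required chain length; yours is a direct structural reduction, the paper's an iterative local one. Your version makes the role of $m$ more transparent and avoids the induction entirely. One minor point of care: for the padding step it is safest to set $u_i=v_i=v_{r+1}$ (which lies in $C_{l_r}$ and hence certainly satisfies $v_{r+1}\sim_\gamma v_{r+1}$), rather than ``any object mapping to $s'$'', since the paper takes $\sim_\gamma$ as an equivalence relation on the variables and constants actually occurring in $\gamma$, and an arbitrary preimage of $s'$ in $\bar y$ need not occur there; the paper's own base case pads in exactly this way.
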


Intuitively this holds because each disjunct of $\equivfo_\gamma(\bar t, s, s') $ corresponds to a possible derivation of $(s, s')$ in the reflexive symmetric transitive closure of $\{(\nu(x), \nu(w)) ~|~ x = w \in \gamma\}$, and one can prove that there is a bound only depending on $\gamma$ on the number of steps of this derivation. 

\begin{proof}
To start with we naturally extend $\nu$ to be the identity on $\adom(\gamma)$.
Assume first $D \models \equivfo_{\gamma} (\bar{t},s,s')$. If $s=s'$ then $s\equiv_\gamma^\nu s'$. Now assume $s \neq s'$. Then there exist variables and/or constants $u_1, v_1\ldots u_m, v_m \in~\bar{y}~\cup~\adom(\gamma)$ with $u_i \sim_\gamma v_i$ for all $i$, such that $s = \nu(u_1), s' = \nu(v_m)$ and $\nu(v_i) = \nu (u_{i+1})$ for all $i < m$.
Clearly $u_i\sim_\gamma v_i$ implies $\nu(u_i) \equiv_\gamma^\nu \nu(v_i)$.
Then  $\nu(u_i) \equiv_\gamma^\nu \nu (u_{i+1})$ for all $i < m$. We conclude by transitivity that $s= \nu(u_i) \equiv_\gamma^\nu \nu(u_m) \equiv_\gamma^\nu \nu(v_m)=s'$, and therefore  $s \equiv_\gamma^\nu s'$.

Assume now that $s \equiv_\gamma^\nu s'$. If $s = s'$ then clearly $D \models \equivfo_{\gamma} (\bar{t},s,s')$. Thus assume $s \neq s'$. We proceed by induction on the number of transitive closure steps needed to derive $(s,s')$ starting for the base relation $\{(\nu(x), \nu(w)) | x = w \in \gamma\}$.
In the base case 
% 0 transive clore steps but there could be symmetric or reflexive steps
$(s,s') = (\nu(x), \nu(w))$ for some equality $x=w \in \gamma$. %or z= x \in \gamma
% cannot be  x= x since s \neq s'
Then $D$ satisfies the following disjunct of $\equivfo_{\gamma} (\bar{t},s,s')$ : take $u_1 = x$, $v_1 = w$, $u_i = v_i = w$ for all $i = 2..m$ (this is a disjunct since $u_i \sim_\gamma v_i$ for all $i = 1..m$). The disjunct is satisfied since $s = \nu(x) = \nu(u_1)$, $s' = \nu(w) = \nu(v_m)$, and for all $i = 1..m-1, \nu(v_i) = \nu(u_{i+1}) = \nu(w)$.

In the general inductive case, there exists $r$ such that $(r, s') = (\nu(x), \nu(w))$ for some equality $x = w$ (or $w=x$) $\in \gamma$, with $s \equiv_\gamma^\nu r$ derived at the previous step. %possibly followed by a symmetric step
By the induction hypothesis $D \models \equivfo_\gamma(\bar t, s, r)$. We can assume $s \neq r$ since otherwise $(s,s')$ would be  in the base relation.
%before and after the possibly symmetric steps since \equiv is symmetric
Therefore $D$ satisfies one of the disjuncts of $\equivfo_\gamma(\bar t, s, r)$. Then there exists a sequence of $m+1$ pairs  in $\bar y \cup \adom(\gamma)$ 

$$ (u_1, v_1) (u_2, v_2) \dots (u_m, v_m) (u_{m+1}, v_{m+1})$$

such that 
\begin{itemize}
\item $u_{m+1} = x$ and $v_{m+1} = w$,
\item $u_i \sim_\gamma v_i, i =1..m+1$,
\item  $s = \nu(u_1)$, $r = \nu(v_m)$, $s' = \nu(v_{m+1})$,   
\item $\nu(v_i) = \nu(u_{i+1})$, for all $i \leq m$, 
\end{itemize}

We now show that from this sequence of pairs one can construct another one of exactly $m$ pairs, $(u'_i, v'_i), i =1..m$ still connecting $s$ ans $s'$, i.e. such that  :
\begin{itemize}
\item [(a)] $u'_i \sim_\gamma v'_i, i =1..m$
\item [(b)] $s = \nu(u'_1)$, $s' = \nu(v'_{m})$ 
\item [(c)] $\nu(v'_i) = \nu(u'_{i+1})$, for all $i < m$.
\end{itemize}

The idea is to first cut the sequence $(u_i, v_i), i =1..m+1$, removing at least one pair, then pad it to size $m$ if necessary.

In order to cut the original sequence, remark that it contains $m+1$ pairs where $m$ is the number of $\sim_\gamma$ equivalence classes. Thus there exist $i < j$ such that $u_i \sim_\gamma u_j$. We remove from the sequence all elements between $u_i$ and $v_j$ (excluded), the new sequence is 
$$(u_1, v_1) ... (u_{i-1}, v_{i-1}) (u_i, v_j) (u_{j+1}, v_{j+1}) ...(u_{m+1}, v_{m+1})$$ Note that this sequence satisfies (a) (b) and (c) above since $u_i \sim_\gamma u_j \sim_\gamma v_j$. Let the new sequence contain $k$ pairs. We know $k \leq m$ because we have removed at least one pair from the original sequence (recall $i < j$).   If $k < m$ we pad the sequence on the right with $m-k$ pairs $(v_{m+1}, v_{m+1})$. The new sequence still satisfies (a), (b) and (c), therefore the corresponding disjunct of $\equivfo_\gamma(\bar t, s, s')$ is satisfied by $D$.
\end{proof}

\begin{example}\label{ex::equiv}
Let $\gamma:=y_1=y_2 \wedge z=x$ be the equality subquery of the query $Q(x)$ in Example \ref{ex::nrv}. Up to logical equivalence, $\equivfo_{\gamma}(y_1,y_2,z,x,w,w')$ contains precisely the disjuncts $w=w'$, $w=y_1 \wedge w'=y_2$, $w=z \wedge w'=x$, $w=y_1 \wedge w'=x \wedge y_2=z$, plus all disjuncts obtained from them by applying one or more of the following transformations : switch $w$ and $w'$, switch $y_1$ and $y_2$, switch $x$ and $z$. Let $D$ be the database from Example \ref{ex}, then we have for instance $D \models \equivfo_{\gamma}(1,\bot_2,\bot_2,1,a,a')$ and $D \models \equivfo_{\gamma}(1,\bot_2,\bot_2,\bot_2,a,a')$ for all $a,a' \in \{1,\bot_2\}$. Similarly $D \models \equivfo_{\gamma}(\bot_1,\bot_2,\bot_2,1,a,a')$ for all $a,a' \in \{1,\bot_1,\bot_2\}$.
\end{example}

As a consequence of 
Proposition~\ref{equiv}, for fixed $\gamma$ and $\bar t$, the relation $\{(s,s') ~|~ D \models \equivfo_{\gamma}(\bar t, s, s')\}$ is an equivalence relation over $\adom(D) \cup \adom(\gamma)$ where each element of $\adom(D)$ neither in $\bar t$ nor in $\adom(\gamma)$ forms a singleton equivalence class. 

As in section \ref{dat}, $ \equivfo_{\gamma}$ selects precisely the pairs of elements of a tuple that a valuation needs to collapse to satisfy a set of equalities.

As a consequence, we can rewrite in FO the formula poss$_{Q_i}$ of subsection \ref{dat} encoding the set of possible answers to $Q_i$. It is enough to replace each occurrence of the Datalog $equiv_{\gamma}(\bar{y},z,z')$ program in it by $equiv^{FO}_{\gamma}(\bar{y},z,z')$. We denote by poss$^{FO}_{Q_i}$ the rewriting so obtained.

\begin{theorem}[FO rewriting]\label{certainFO}
Let D be a database, $\Sigma=\emptyset$ and let $Q(\bar{x})$ be a BCCQ in NRV-normal form. Let $Q' = Q_1'(\bar{x}) \vee \ldots \vee Q_n'(\bar{x})$ be $\neg Q$ in DNF normal form. Then $\bar r \in \cert_\Sigma(Q,D)$ if and only if %$\forall i\ Supp(Q_i'(\bar{x}), D, \bar{r}) = \emptyset$ .
$D \models \rho(\bar r)$ where $\rho(\bar x) = \bigwedge_{\substack{1\leq i \leq n}} \forall \bar{w}$ $\neg poss^{FO}_{Q'_i}(\bar{x}\bar{w})$.
\end{theorem}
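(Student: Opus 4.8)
The plan is to mirror the proof of Theorem~\ref{certain} (the Datalog rewriting) essentially verbatim, replacing the Datalog ingredient $equiv_\gamma$ with its first-order counterpart $\equivfo_\gamma$ throughout. The entire argument of the previous subsection was built on top of the \emph{semantic} property of $equiv_\gamma$ captured by Lemma~\ref{intermediate}, and then passed through Proposition~\ref{comp}, Proposition~\ref{val}, Lemma~\ref{SuppUCQ}, and finally Proposition~\ref{suppBCCQ}. Since $\Sigma=\emptyset$ here, the last (EGD-driven) rule of the Datalog program disappears and $equiv_\gamma$ collapses to a pure reflexive-symmetric-transitive closure; Proposition~\ref{equiv} establishes that $\equivfo_\gamma$ computes \emph{exactly the same relation} as $equiv_\gamma$ does in this constraint-free case. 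So the first step is simply to observe this: for every database $D$, every $\gamma$, every $\bar t$, and all $s,s'$, one has $D \models \equivfo_\gamma(\bar t, s, s')$ iff $D \models equiv_\gamma(\bar t, s, s')$ (both being the reflexive symmetric transitive closure of the equality pairs of $\gamma$ under $\nu$).

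Given that equivalence of the two encodings, I would argue that $comp_\gamma$, $imply_{\gamma,\gamma'}$, $poss_{Q_i}$, and $cons_{Q_i}$ are all \emph{positive Boolean combinations (with first-order quantification)} of occurrences of $equiv_\gamma$-atoms and of the constant-free relational atoms $q_{i_j}$; none of these definitions inspect how $equiv_\gamma$ is computed, only which pairs it selects. Therefore replacing every $equiv_\gamma$ by $\equivfo_\gamma$ — yielding $comp^{FO}_\gamma$, $imply^{FO}_{\gamma,\gamma'}$, and ultimately $poss^{FO}_{Q_i}$ — produces formulas that are satisfied by exactly the same tuples over exactly the same databases. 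In particular, Proposition~\ref{comp} and Proposition~\ref{val} continue to hold with $comp^{FO}_\gamma$ and $imply^{FO}_{\gamma,\gamma'}$ substituted (their proofs used only Lemma~\ref{intermediate}, whose $\Sigma=\emptyset$ instance is subsumed by Proposition~\ref{equiv}), and Proposition~\ref{suppBCCQ} likewise carries over to give $\supp(Q(\bar x),D,\bar r)\neq\emptyset$ iff $D\models\bigvee_i \exists\bar w\ poss^{FO}_{Q_i}(\bar r\bar w)$.

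With that in hand the theorem follows by the identical final move used for Theorem~\ref{certain}: $\bar r \in \cert_\Sigma(Q,D)$ iff $\supp(Q',D,\bar r)=\emptyset$, where $Q'$ is $\neg Q$ in DNF and is itself a BCCQ; applying the (FO-rewritten) Proposition~\ref{suppBCCQ} to $Q'$ gives that emptiness of the support is equivalent to $D\models\bigwedge_i \forall\bar w\ \neg poss^{FO}_{Q'_i}(\bar r\bar w)$, which is exactly $\rho(\bar r)$. The only genuine content is checking that $\rho$ is now a bona fide \FO\ formula — which it is, since $\equivfo_\gamma$ is a finite disjunction of equalities (Proposition~\ref{equiv} guarantees the bound $m$ on the number of closure steps suffices), and all remaining connectives and quantifiers are first-order.

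The part requiring the most care — and the reason one cannot merely cite Theorem~\ref{certain} as a black box — is making precise that the substitution of $\equivfo_\gamma$ for $equiv_\gamma$ is sound at every intermediate proposition, \emph{without} re-running each proof. The cleanest way is to state and discharge once a substitution lemma of the form ``any formula built from $equiv_\gamma$-atoms by first-order operations is preserved under replacing $equiv_\gamma$ by $\equivfo_\gamma$, because the two define the same relation on every $D$'' (immediate from Proposition~\ref{equiv}), and then invoke it uniformly. The subtle point to verify is that $equiv_\gamma$ occurs in these definitions only as a \emph{queried relation} and never with its Datalog fixpoint semantics relied upon beyond the extension it computes — which is indeed the case, since $comp$, $imply$, $poss$, and $cons$ are all defined by explicit \FO\ formulas over the $equiv_\gamma$ and relational predicates. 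Once that observation is recorded, the proof is a one-line appeal to Proposition~\ref{suppBCCQ} applied to $\neg Q$, exactly as in Theorem~\ref{certain}.
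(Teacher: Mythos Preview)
Your proposal is correct and takes essentially the same approach as the paper: the paper does not give a separate formal proof of Theorem~\ref{certainFO}, but the text immediately preceding it states that it suffices to replace every occurrence of the Datalog predicate $equiv_\gamma$ by $\equivfo_\gamma$ in the definition of $poss_{Q_i}$, relying on Proposition~\ref{equiv} to guarantee the two coincide when $\Sigma=\emptyset$. Your write-up simply makes this substitution argument explicit and carefully tracks that the intermediate results (Lemma~\ref{intermediate}, Propositions~\ref{comp}, \ref{val}, \ref{suppBCCQ}) remain valid, which is precisely what the paper intends.
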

Note that tractability of BCCQ was already proved in \mycite{gheerbrant-libkin:xml} using tableau based methods. We now refine complexity as follows.
\begin{corollary}
For each fixed BCCQ query $Q$, the complexity of \certainanswer$_(Q)$ is in DLOGSPACE whenever ~$\Sigma=\emptyset$.
\end{corollary}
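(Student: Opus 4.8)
The plan is to reduce the membership test to the evaluation of a single fixed first-order formula and then invoke the classical data-complexity bound for $\FO$. Concretely, I would start from Theorem~\ref{certainFO}: since $Q$ is a \emph{fixed} BCCQ and $\Sigma=\emptyset$, the certain-answer test $\bar r\in\cert(Q,D)$ is equivalent to $D\models\rho(\bar r)$, where $\rho(\bar x)=\bigwedge_{1\le i\le n}\forall\bar w\ \neg poss^{FO}_{Q'_i}(\bar x\bar w)$. The crucial observation is that this $\rho$ depends only on $Q$ — through $\neg Q$, its DNF $Q'=Q'_1\vee\dots\vee Q'_n$, and the FO formulas $poss^{FO}_{Q'_i}$ built from $equiv^{FO}_{\gamma}$, $comp_\gamma$ and $imply_{\gamma,\gamma'}$ — and not on the input database $D$ nor on the candidate tuple $\bar r$. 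Hence $\rho$ is an $\FO$ formula of fixed size and fixed quantifier rank, and the only varying part of the input to \certainanswer$(Q)$ is the pair $(D,\bar r)$.

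Next I would appeal to the fact that, under active-domain semantics, evaluating a fixed $\FO$ formula has data complexity in (uniform) $\textsc{AC}^0$, and therefore in $\dlog$ (see, e.g., \mycite{abiteboul-et-al:dbbook}). The decision problem \certainanswer$(Q)$ on input $(D,\bar r)$ is solved by deciding $D\models\rho(\bar r)$, which is an instance of fixed-$\FO$ evaluation where the free variables $\bar x$ are assigned the input values $\bar r$; this lies in $\dlog$, yielding the claimed bound directly.

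The two points requiring a short justification — and the closest thing to an obstacle here, although both are entirely routine — are the handling of nulls and of the built-in predicate $Null$. Recall that $\rho$ is evaluated \naive ly on the incomplete database $D$ over the extended signature $\sigma\cup Null$: nulls are read as pairwise distinct constants, and $Null$ is interpreted by the set of nulls occurring in $D$. Neither feature affects the complexity analysis. \Naive\ evaluation is just ordinary model checking once nulls are treated as distinct domain elements, so it does not change the logical complexity of the formula; and $Null$ is a fixed unary predicate whose extension can be read off from the encoding of $D$ in logarithmic space (indeed in $\textsc{AC}^0$). Thus evaluating $\rho$ on $(D,\bar r)$ remains within $\dlog$.

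In summary, the corollary follows by composing the $\FO$ rewriting of Theorem~\ref{certainFO} with the classical $\FO$ data-complexity bound; no new combinatorial argument is needed. The only care required is in checking that neither \naive\ evaluation nor the auxiliary $Null$ predicate pushes the evaluation of the fixed formula $\rho$ outside $\dlog$, which is immediate.
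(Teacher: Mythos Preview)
Your proposal is correct and follows essentially the same approach as the paper: the corollary is obtained directly from the FO rewriting of Theorem~\ref{certainFO} together with the standard fact that fixed $\FO$ formulas have \dlog\ (in fact $\textsc{AC}^0$) data complexity. Your additional remarks about \naive\ evaluation and the $Null$ predicate are sound and simply make explicit what the paper leaves implicit.
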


%
%End of copy paste from Datalog 2.0 section 5
%
\label{fo}

\subsection{FO Rewriting for Best Answers }
%
%Start of copy paste from ijcai section 4
%
Considering arbitrary \FO-queries brought us an intrinsic intractability result for all variants of best answers. This motivates restricting to unions of conjunctive queries, for which a polynomial time evaluation algorithm (in data complexity) already exists \mycite{libkin:zero}. The resolution based procedure is however in sharp contrast with 
 na\"ive evaluation, which allows to compute certain answers to unions of conjunctive queries via usual model checking. We thus initiate a descriptive complexity analysis of the best answers problem, showing that for unions of conjunctive queries, it can essentially be reduced - modulo a preprocessing of the query - to (na\"ive) evaluation of an $\FO$-formula.

Given a union of conjunctive queries $Q$, our starting point towards an $\FO$-rewriting for best answers is finding an $\FO$-formula $Q_\subseteq(\bar{x}, \bar{y})$ encoding the inclusion of supports, i.e. selecting tuples $\bar s, \bar t$ over $\adom(D)\cup\adom(Q)$ iff $\supp(Q,D,\bar s) \subseteq \supp(Q,D,\bar t)$.
From $Q_\subseteq$ one can easily define an $\FO$-formula selecting precisely all best answers to $Q$ on $D$:
\begin{align}\label{bestq}
best_Q (\bar x) := \forall \bar{y}(Q_{\subseteq}(\bar{x}, \bar{y}) \rightarrow Q_{\subseteq}(\bar{y}, \bar{x}))
\end{align}

As in section \ref{dat}, we assume all CQs to be in NRV normal form.  We can thus first concentrate on the support of equality subqueries. This will be encoded in \FO~ and then integrated in the rewriting of the whole conjunctive query.

We now go back to an arbitrary union of conjunctive queries of vocabulary $\sigma$ in NRV-normal form :
\begin{align}
Q(\bar{x}):=\bigvee\limits_{1 \leq i \leq n}Q_i(\bar{x})\nonumber
\end{align}
 where each $Q_i$ is in NRV-normal form with relational subquery $q_i(\bar y_i, \bar z_i)$ and equality subquery $eq_i(\bar x, \bar y_i, \bar z_i)$.

% {\color{red} This lemma is already introduced in the Datalog section! 
% \begin{lemma}
% Let $D$ be a database, $v$ a valuation of $D$ and $Q(\bar{x})$
% a union of conjunctive queries in NRV-normal form,
% then $v \in Supp(Q,D, \bar{r})$ if and only 
% there exists $i, \bar{s}$ and  $\bar{t}$ such that  $D \models q_i(\bar{s},\bar{t})\wedge comp_{eq_i} (\bar{r} \bar{s} \bar{t})$ and $v(D) \models eq_i(v(\bar{r}\bar{s}\bar{t}))$. 
% \end{lemma}
% }

Recall the formula $comp_{\gamma}$, defined in section \ref{dat}, stating the existence of a valuation that collapses all equivalent elements of a tuple : 

$$
comp_{\gamma}(\bar{y})~:=~
\forall z z'(\equivfo_{\gamma}(\bar{y},z,z') \wedge \neg Null(z) \wedge \neg Null(z')\rightarrow z=z')
$$

Notice that if $D \models comp_\gamma(\bar t)$ then for each $s \in adom(D)\cup\adom(\gamma)$ there exists at most one constant $c$ such that $D \models \equivfo_\gamma(\bar t, s, c)$. In fact if for constants $c_1$ and $c_2$, 
$D \models \equivfo_\gamma(\bar t, s, c_1)$ and $D \models \equivfo_\gamma(\bar t, s, c_2)$,  by transitivity $D \models \equivfo_\gamma(\bar t, c_1, c_2)$, implying $c_1 = c_2$.

\begin{example}
Let $D$ and $\gamma$ be as in Example \ref{ex::equiv}. Consider $comp_\gamma(y_1,y_2,z,x)$. Given the tuples selected by $\equivfo_\gamma$ in Example \ref{ex::equiv}, we can conclude that $D \models comp_\gamma(1,\bot_2,\bot_2,1)$. 
\end{example}

% {\color{red} Already stated in the Datalog section
% \begin{proposition}
% Let $\gamma(\bar y)$ be a conjunction of equality atoms, $D$ a database and $\nu(\bar{y})=\bar{t}$ an assignment over $\adom(D) \cup \adom(\gamma)$, then $D \models comp_{\gamma} (\bar{t})$ 
% if and only if there exists a valuation $v$ of nulls such that  $v(D) \models \gamma(v(\bar{t}))$.
% \end{proposition}
% }

We now define a formula capturing the inclusion of supports between two conjunctions of equality atoms, which will be a crucial ingredient in our rewriting.

Let $\gamma(\bar{x})$ and $\gamma'({\bar y})$ be conjunctions of equality atoms with $\adom(\gamma) = \adom(\gamma')$. We  define :
% \begin{align}
% & imply_{\gamma, \gamma'} (\bar{x},\bar{y})~:=~\nonumber\\
% &\forall z z'~(equiv^{FO}_{\gamma'}(\bar{y},z,z')\rightarrow equiv^{FO}_{\gamma}(\bar{x},z,z'))\nonumber
% \end{align}

$$
imply_{\gamma, \gamma'} (\bar{x},\bar{y})~:=~\forall z z'~(\equivfo_{\gamma'}(\bar{y},z,z')\rightarrow \equivfo_{\gamma}(\bar{x},z,z'))
$$

\begin{example}\label{ex::imply}
Let $\gamma$ and $D$ be as in Example \ref{ex::equiv}. Let $\gamma':=y'_1=y'_2 \wedge z'=x'$, then it follows from Example \ref{ex::equiv} that $D \models imply_{\gamma\gamma'}(\bot_1\bot_2\bot_21,1\bot_2\bot_2\bot_2)$ and $D \models imply_{\gamma\gamma'}(1\bot_2\bot_21,1\bot_2\bot_2\bot_2)$.
\end{example}

% Using Proposition~\ref{comp} 
%  and Lemma~\ref{intermediate} we obtain :

% {\color{red} Already stated in the Datalog section
% \begin{proposition}
% Let $\gamma(\bar{x})$, $\gamma'({\bar y})$ be conjunctions of equality atoms with $\adom(\gamma) = \adom(\gamma')$, $D$ a database and $\nu(\bar{y})=\bar{t}$, $\nu'(\bar{y})=\bar{t'}$ assignments over $\adom(D) \cup \adom(\gamma)$.
% Then $D \models imply_{\gamma, \gamma'}(\bar t, \bar t') \vee \neg comp_\gamma(\bar t)$ iff for all valuations $v$, $v(D) \models \gamma(v(\bar t))$ implies $v(D) \models \gamma'(v(\bar t'))$.
% \end{proposition}
% }

By combining Propositions~\ref{val}~and~\ref{comp} we get :

\begin{corollary}\label{imply}
Let $\gamma(\bar y)$, $\gamma'(\bar y)$ be conjunctions of equality atoms with $\adom(\gamma) = \adom(\gamma')$, $D$ a database and $\nu(\bar{y})=\bar{t}$, $\nu'(\bar{y})=\bar{t'}$ assignments over $\adom(D) \cup \adom(\gamma)$. If $D \models comp_{\gamma}(\bar{t}) \wedge imply_{\gamma, \gamma'}(\bar{t},\bar{t'})$, then $D \models comp_{\gamma'}(\bar{t'})$.
\end{corollary}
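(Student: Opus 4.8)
The plan is to prove the corollary by simply chaining Propositions~\ref{comp} and~\ref{val}, producing a single consistent valuation that witnesses $comp_{\gamma'}(\bar t')$. Recall first that by Proposition~\ref{comp}, the target conclusion $D \models comp_{\gamma'}(\bar t')$ is \emph{equivalent} to the existence of some consistent valuation $w$ with $w(D) \models \gamma'(w(\bar t'))$. Thus the whole task reduces to exhibiting one such valuation. Symmetrically, from the hypothesis $D \models comp_{\gamma}(\bar t)$, Proposition~\ref{comp} already hands me a consistent valuation $v$ with $v(D) \models \gamma(v(\bar t))$. So the only thing left to do is to upgrade this $v$ from a witness for $\gamma$ to a witness for $\gamma'$.

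The upgrade is exactly what Proposition~\ref{val} provides. First I would observe that the conjunctive hypothesis $D \models comp_{\gamma}(\bar t) \wedge imply_{\gamma,\gamma'}(\bar t,\bar t')$ trivially entails the disjunction $D \models imply_{\gamma,\gamma'}(\bar t,\bar t') \vee \neg comp_{\gamma}(\bar t)$ appearing on the left-hand side of Proposition~\ref{val} (indeed the first disjunct holds outright). Applying Proposition~\ref{val}, I then conclude that for every consistent valuation $u$, $u(D) \models \gamma(u(\bar t))$ implies $u(D) \models \gamma'(u(\bar t'))$. Instantiating this implication at the particular valuation $v$ from the previous paragraph, and using $v(D) \models \gamma(v(\bar t))$, yields $v(D) \models \gamma'(v(\bar t'))$. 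This $v$ is therefore the sought consistent valuation, and a final appeal to Proposition~\ref{comp} (now read for $\gamma'$ and $\bar t'$) delivers $D \models comp_{\gamma'}(\bar t')$.

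I do not expect a genuine obstacle here; the argument is purely a composition of the two preceding results. The only points requiring care are bookkeeping ones. The first is the passage between the conjunctive form of the hypothesis and the disjunctive form demanded by Proposition~\ref{val}, which is immediate. The second is the contextual subtlety that this corollary lives in the $\Sigma = \emptyset$ setting, where $comp_\gamma$ and $imply_{\gamma,\gamma'}$ are now built from $\equivfo_\gamma$ rather than the Datalog $equiv_\gamma$, and where ``consistent valuation'' simply means ``arbitrary valuation''. Here I would note that by Proposition~\ref{equiv} the formula $\equivfo_\gamma$ defines the same relation as $equiv_\gamma$ when $\Sigma$ is empty, so the FO versions of $comp$ and $imply$ satisfy precisely the same characterisations as in Propositions~\ref{comp} and~\ref{val}; hence both propositions apply verbatim and the chaining above goes through.
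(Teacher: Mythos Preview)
Your proof is correct but takes a genuinely different route from the paper's. You argue \emph{semantically}: invoke Proposition~\ref{comp} to extract a consistent valuation $v$ witnessing $\gamma$, push it through Proposition~\ref{val} to make it a witness for $\gamma'$, and then invoke Proposition~\ref{comp} again in the reverse direction. The paper instead argues \emph{syntactically}, never passing through valuations at all: it takes an arbitrary pair $s,s'$ of non-null elements with $D\models\equivfo_{\gamma'}(\bar t',s,s')$, uses $imply_{\gamma,\gamma'}$ to obtain $D\models\equivfo_{\gamma}(\bar t,s,s')$, and then uses $comp_\gamma$ to conclude $s=s'$, which is exactly the body of $comp_{\gamma'}$. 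Your approach matches the paper's own framing of the corollary (``by combining Propositions~\ref{val} and~\ref{comp}'') more literally than the paper's proof does, and it has the virtue of treating those propositions as black boxes; the paper's direct unfolding is shorter and avoids the detour through the existence of a witnessing valuation, and in particular does not need the caveat you added about $\equivfo$ versus $equiv$ since it works purely at the level of the formulas as defined in this section.
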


\begin{proof}
Assume $D \models comp_{\gamma}(\bar{t}) \wedge imply_{\gamma, \gamma'}(\bar{t},\bar{t'})$, i.e., $D \models \forall z z'~(\equivfo_{\gamma}(\bar{t},z,z') \wedge \neg Null(z) \wedge \neg Null(z') \rightarrow z=z')$
and $D \models \forall z z'~(\equivfo_{\gamma'}(\bar{t'},z,z')\rightarrow \equivfo_{\gamma}(\bar{t},z,z'))$. 
Now let $s,s' \in\adom(D) \cup \adom(\gamma)$ with $D \models \equivfo_{\gamma'}(\bar{t'},s,s') \wedge \neg Null(s) \wedge \neg Null(s')$. As $D \models imply_{\gamma, \gamma'}(\bar{t},\bar{t'})$ it follows that $D \models \equivfo_{\gamma}(\bar{t},s,s')$ and so $D \models  \neg Null(s) \wedge \neg Null(s') \rightarrow s=s'$ now follows from $D \models comp_{\gamma}(\bar{t})$. Hence $D \models comp_{\gamma'}(\bar{t'})$.
\end{proof}

We are now ready to define the $\FO$-formula encoding the inclusion of supports.
\begin{align}
Q_\subseteq(\bar{x}, \bar{x'}) :=  \bigwedge\limits_{1\leq i \leq n}  ~~(&\forall \bar{y}\bar{z} ((q_i(\bar{y},\bar{z})\wedge comp_{eq_i}(\bar{x},\bar{y},\bar{z})) \rightarrow\nonumber\\
&\bigvee\limits_{1 \leq j \leq n}\exists \bar{y}' \bar{z}'(q_j(\bar{y}',\bar{z}') \wedge imply_{eq_i, eq_j}(\bar{x}\bar{y}\bar{z}, \bar{x}'\bar{y}'\bar{z}') )~ )~)\nonumber
\end{align}

Combining Lemmas~\ref{intermediate},
~\ref{SuppUCQ}, Propositions~\ref{comp},~\ref{val} and Corollary~\ref{imply} we get :

\begin{proposition}\label{mainUCQ}
$D \models Q_\subseteq(\bar{s}, \bar{t})$ iff $Supp(Q, D, \bar s) \subseteq Supp(Q,D, \bar t)$.
\end{proposition}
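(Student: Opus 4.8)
The plan is to prove the two inclusions separately, in each case reducing membership in the support of the whole union $Q$ to membership in the support of a single disjunct, using that $v\in\supp(Q,D,\bar r)$ iff $v\in\supp(Q_i,D,\bar r)$ for some $i$. Lemma~\ref{SuppUCQ} is the workhorse: it lets me translate ``$v\in\supp(Q_i,D,\bar r)$'' into the existence of a relational witness $\bar w$ (writing $\bar w$ for $(\bar y,\bar z)$) satisfying $D\models q_i(\bar w)\wedge comp_{eq_i}(\bar r\bar w)$ together with $v(D)\models eq_i(v(\bar r\bar w))$. Throughout I assume, via the padding by dummy equalities $c=c$ used earlier, that $\adom(eq_i)=\adom(eq_j)$ for all $i,j$, so that $imply_{eq_i,eq_j}$ is well defined and Proposition~\ref{val} and Corollary~\ref{imply} apply to any pair of equality subqueries; all valuations are consistent since $\Sigma=\emptyset$.

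For the direction ($\Rightarrow$), I would assume $D\models Q_\subseteq(\bar s,\bar t)$ and take $v\in\supp(Q,D,\bar s)$, so $v\in\supp(Q_i,D,\bar s)$ for some $i$. Lemma~\ref{SuppUCQ} produces $\bar w$ with $D\models q_i(\bar w)\wedge comp_{eq_i}(\bar s\bar w)$ and $v(D)\models eq_i(v(\bar s\bar w))$. Instantiating the $i$-th conjunct of $Q_\subseteq(\bar s,\bar t)$ at this $\bar w$ gives some $j$ and $\bar w'$ with $D\models q_j(\bar w')\wedge imply_{eq_i,eq_j}(\bar s\bar w,\bar t\bar w')$. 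I then reassemble the three ingredients that Lemma~\ref{SuppUCQ} needs to certify $v\in\supp(Q_j,D,\bar t)$: the relational part $q_j(\bar w')$ is already given; Corollary~\ref{imply} converts $comp_{eq_i}(\bar s\bar w)\wedge imply_{eq_i,eq_j}(\bar s\bar w,\bar t\bar w')$ into $comp_{eq_j}(\bar t\bar w')$; and Proposition~\ref{val}, applied to $imply_{eq_i,eq_j}(\bar s\bar w,\bar t\bar w')$ together with $v(D)\models eq_i(v(\bar s\bar w))$, yields $v(D)\models eq_j(v(\bar t\bar w'))$. Hence $v\in\supp(Q_j,D,\bar t)\subseteq\supp(Q,D,\bar t)$.

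For the direction ($\Leftarrow$), I would assume $\supp(Q,D,\bar s)\subseteq\supp(Q,D,\bar t)$, fix $i$ and a tuple $\bar w$ with $D\models q_i(\bar w)\wedge comp_{eq_i}(\bar s\bar w)$, and supply the required $j,\bar w'$. The crucial step is to invoke the ``moreover'' clause of Proposition~\ref{comp} to obtain a \emph{tight} valuation $v^*$ with $v^*(D)\models eq_i(v^*(\bar s\bar w))$ and $v^*(a)=v^*(b)$ iff $D\models\equivfo_{eq_i}(\bar s\bar w,a,b)$. By Lemma~\ref{SuppUCQ} this $v^*$ lies in $\supp(Q_i,D,\bar s)\subseteq\supp(Q,D,\bar s)$, so the hypothesis gives $v^*\in\supp(Q_j,D,\bar t)$ for some $j$, and a second use of Lemma~\ref{SuppUCQ} yields $\bar w'$ with $D\models q_j(\bar w')\wedge comp_{eq_j}(\bar t\bar w')$ and $v^*(D)\models eq_j(v^*(\bar t\bar w'))$. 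It then remains to verify $imply_{eq_i,eq_j}(\bar s\bar w,\bar t\bar w')$, i.e.\ that every pair equated by $\equivfo_{eq_j}(\bar t\bar w')$ is equated by $\equivfo_{eq_i}(\bar s\bar w)$: for such a pair $(a,b)$, Lemma~\ref{intermediate} applied to $v^*(D)\models eq_j(v^*(\bar t\bar w'))$ forces $v^*(a)=v^*(b)$, and tightness of $v^*$ converts this equality back into $D\models\equivfo_{eq_i}(\bar s\bar w,a,b)$.

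The main obstacle is this last direction, specifically the necessity of choosing $v^*$ \emph{tight} rather than an arbitrary witnessing valuation. Any valuation in the support of $Q_i$ at $\bar s$ would certify support membership, but it might collapse strictly more than $eq_i$ requires, which would break the final inclusion step: $imply_{eq_i,eq_j}$ demands that the equivalence relation induced by $eq_j$ on $\bar t\bar w'$ be \emph{contained} in the one induced by $eq_i$ on $\bar s\bar w$, so I must use a valuation that collapses \emph{exactly} the $eq_i$-classes and nothing more. Tightness, guaranteed by Proposition~\ref{comp}, is precisely what allows the $v^*$-equalities extracted from $eq_j$ to be pulled back into genuine $\equivfo_{eq_i}$-equivalences, and it is the only delicate point in an otherwise mechanical assembly of the earlier lemmas.
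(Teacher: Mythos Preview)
Your proposal is correct and follows essentially the same approach as the paper's proof: both directions hinge on Lemma~\ref{SuppUCQ} to translate support membership into relational witnesses plus equality conditions, and the $\Leftarrow$ direction relies on choosing a \emph{tight} valuation (via the ``moreover'' clause of Proposition~\ref{comp}) so that $v^*$-equalities can be reflected back to $\equivfo_{eq_i}$-equivalences. The only cosmetic differences are that in the $\Rightarrow$ direction the paper unfolds Proposition~\ref{val} via two applications of Lemma~\ref{intermediate}, whereas you invoke Proposition~\ref{val} directly; and in the $\Leftarrow$ direction the paper lifts the tight valuation to a universal implication and then applies Proposition~\ref{val}, whereas you verify $imply_{eq_i,eq_j}$ by hand using tightness and Lemma~\ref{intermediate}---these are equivalent packagings of the same argument.
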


\begin{proof}
$\Rightarrow$ Assume $D \models Q_{\subseteq}(\bar{s}, \bar{t})$ and let $v \in Supp(Q, D, \bar s)$ be a valuation of $D$. By Lemma~\ref{SuppUCQ} $\exists i\bar{a} \bar{b}~D\models q_i(\bar{a}\bar{b})\wedge comp_{eq_i}(\bar{s}\bar{a}\bar{b})$ and $v(D) \models eq_i(v(\bar{s}\bar{a}\bar{b}))$. So by our assumption there exists $j,\bar{a}'\bar{b}'$ with 
$D \models~q_j(\bar{a}'\bar{b}') \wedge imply_{eq_i, eq_j}(\bar{s}\bar{a}\bar{b}, \bar{t}\bar{a}'\bar{b}')$
and by Corollary~\ref{imply} $D \models comp_{eq_j}(\bar{t}\bar{a}'\bar{b}')$.
Now let $t_1,t_2$  %\in \bar{t}\bar{a}'\bar{b}'$ 
such that $D \models equiv_{eq_j}(\bar{t}\bar{a}'\bar{b}',t_1,t_2)$. By $D \models~imply_{eq_i, eq_j}(\bar{s}\bar{a}\bar{b}, \bar{t}\bar{a}'\bar{b}')$ we have $D \models equiv_{eq_i}(\bar{s}\bar{a}\bar{b},t_1,t_2)$
and by Lemma~\ref{intermediate}, $v(t_1)=v(t_2)$. But then, again by Lemma~\ref{intermediate}, $v(D) \models eq_i(v(\bar{t}\bar{a}'\bar{b}'))$ and by Lemma~\ref{SuppUCQ} it follows that $v \in Supp(Q,D,\bar{t})$.

$\Leftarrow$ Assume $Supp(Q, D, \bar s) \subseteq Supp(Q,D, \bar t)$ and let $i, \bar{a}$, $\bar{b}$ with $D \models q_i(\bar{a},\bar{b})\wedge comp_{eq_i}(\bar{s},\bar{a},\bar{b})$. By Proposition~\ref{comp} there exists a valuation $v$ (that we assume w.l.o.g. to be tight) such that $v(D) \models eq_i(v(\bar{s}\bar{a}\bar{b}))$ and so by Lemma~\ref{SuppUCQ} $v \in Supp(Q,D,\bar{s})$. Hence by our assumption we also have $v \in Supp(Q,D,\bar{t})$ and so by Lemma~\ref{SuppUCQ} there exists $j,\bar{a}'\bar{b}'$ with $D \models~ q_j(\bar{a}'\bar{b}') \wedge comp_{eq_j}(\bar{t}\bar{a}'\bar{b}')$ and $v(D) \models eq_j(v(\bar{t}\bar{a}'\bar{b}'))$. As $v$ is tight, by Lemma~\ref{tight} it follows from $v(D) \models eq_j(v(\bar{t}\bar{a}'\bar{b}'))$ that $\forall v$ with $v(D) \models eq_i(v(\bar{s}\bar{a}\bar{b}))$, also $v(D) \models eq_i(v(\bar{t}\bar{a}'\bar{b}'))$. 
Now by Proposition~\ref{val} $D \models imply_{eq_i, eq_j}(\bar{s}\bar{a}\bar{b}, \bar{t}\bar{a}'\bar{b}') \vee \neg comp_{eq_i}(\bar{s},\bar{a},\bar{b})$.
But $D \models comp_{eq_i}(\bar{s},\bar{a},\bar{b})$, so $D \models \exists \bar{y}\bar{z}(~q_j(\bar{y},\bar{z}) \wedge imply_{eq_i, eq_j}(\bar{s}\bar{a}\bar{b}, \bar{t}\bar{y}\bar{z}))$. 
\end{proof}

Recall that from $Q_\subseteq$ one can easily define a first order rewriting $best_Q (\bar x)$ for best answers as in (\ref{bestq}).

\begin{theorem} \label{thm:last}
Given $Q$ a union of conjunctive queries over schema $\sigma$ and an incomplete database $D$, 
$\bar t \in \best(Q,D) \textrm{ iff } D \models best_Q(\bar t)$.
\end{theorem}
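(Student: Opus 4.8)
The plan is to obtain the theorem as an almost immediate consequence of Proposition~\ref{mainUCQ}, which already carries the real weight by showing that $Q_\subseteq$ is an FO formula capturing support inclusion. Everything that remains is a purely logical manipulation linking the \emph{definition} of best answers to the syntactic shape of $best_Q$, plus a check that the relevant quantifier ranges agree. First I would unfold the definition: $\bar t \in \best(Q,D)$ means $\neg\exists \bar b: \bar t \sless_\qd \bar b$, i.e.\ there is no tuple $\bar b$ over $\adom(D)\cup\adom(Q)$ with the proper inclusion $\supp(Q,D,\bar t) \subset \supp(Q,D,\bar b)$. Since $\Sigma=\emptyset$ throughout this section I drop it, so every valuation is consistent and supports are taken over all valuations. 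The key elementary observation is that proper inclusion $A \subset B$ is equivalent to the conjunction of $A \subseteq B$ and $B \not\subseteq A$. Hence $\bar t \in \best(Q,D)$ iff for every $\bar b$ one has $\supp(Q,D,\bar t)\subseteq\supp(Q,D,\bar b) \Rightarrow \supp(Q,D,\bar b)\subseteq\supp(Q,D,\bar t)$.

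Next I would translate each inclusion into a satisfaction statement using Proposition~\ref{mainUCQ}: $\supp(Q,D,\bar t)\subseteq\supp(Q,D,\bar b)$ iff $D\models Q_\subseteq(\bar t,\bar b)$, and symmetrically $\supp(Q,D,\bar b)\subseteq\supp(Q,D,\bar t)$ iff $D\models Q_\subseteq(\bar b,\bar t)$. Substituting into the condition above yields: $\bar t\in\best(Q,D)$ iff for all $\bar b$, $D\models Q_\subseteq(\bar t,\bar b)\rightarrow Q_\subseteq(\bar b,\bar t)$. This is exactly the body of $best_Q$ as defined in~(\ref{bestq}), so $\bar t\in\best(Q,D)$ iff $D\models best_Q(\bar t)$, which is the claim.

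The one point requiring genuine care — and the main, if modest, obstacle — is matching the ranges over which the witness $\bar b$ and the quantified variable $\bar y$ of $best_Q$ vary. In the definition of $\best(Q,D)$ the witness $\bar b$ ranges over all $m$-tuples in $(\adom(D)\cup\adom(Q))^m$, whereas under active domain semantics the universal $\forall\bar y$ in $best_Q$ ranges over $\adom(D)\cup\adom(best_Q)$. I would verify these coincide by checking $\adom(best_Q)=\adom(Q)$: the formula $Q_\subseteq$ is assembled solely from the relational subqueries $q_i$ and the equality gadgets $comp_{eq_i}$ and $imply_{eq_i,eq_j}$, all derived from the disjuncts $Q_i$, giving $\adom(best_Q)\subseteq\adom(Q)$; the reverse inclusion can be forced by the same padding-with-$c=c$ device already used for the $\gamma_{ij}$, so that no constant of $Q$ is lost. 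With the ranges thus aligned, the universal quantifier over $\bar y$ faithfully captures the ``for all $\bar b$'' of the best-answer condition, and the equivalence established in the first two steps completes the proof.
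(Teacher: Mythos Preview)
Your proposal is correct and follows essentially the same route as the paper: both arguments derive the theorem immediately from Proposition~\ref{mainUCQ} by unfolding the definition of $\best$ and using the elementary equivalence between ``no $\bar b$ with $\supp(\bar t)\subset\supp(\bar b)$'' and ``for all $\bar b$, $\supp(\bar t)\subseteq\supp(\bar b)$ implies $\supp(\bar b)\subseteq\supp(\bar t)$''. Your additional discussion of the quantifier range (matching $\adom(D)\cup\adom(Q)$ with the active-domain range of $\forall\bar y$) is a welcome bit of care that the paper's proof omits.
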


\begin{proof}
By Proposition~\ref{mainUCQ} $D \models best_Q(\bar t)$ if and only if $\forall s Supp(Q,D,\bar{t}) \subseteq Supp(Q,D,\bar{s})$ implies $Supp(Q,D,\bar{s}) \subseteq Supp(Q,D,\bar{t})$. Notice that this holds exactly whenever $\neg \exists \bar{s}$ with $Supp(Q,D,\bar{t}) \subset Supp(Q,D,\bar{s})$, i.e., whenever $\bar t \in \best(Q,D)$.
\end{proof}

\begin{example}
For $Q,D,\gamma, \gamma'$ as in Example \ref{ex} and \ref{ex::imply} :
$$Q_\subseteq(x,x') :=  \forall y_1y_2z ((R(y_1)\wedge S(y_2,z) \wedge comp_\gamma(y_1,y_2,z,x))$$ $$\rightarrow$$
$$ ~\exists y_1' y'_2 z'(R(y'_1) \wedge S(y_2',z') \wedge imply_{\gamma, \gamma'}(y_1y_2zx,y'_1y'_2z'x') )).$$

This allows to derive for instance $\supp(\qd,1)\subseteq \supp(\qd,\bot_2)$ (as observed in Example \ref{ex}). In fact
the subquery $R(y_1)\wedge S(y_2,z)\wedge comp_\gamma(y_1,y_2,z,x)$ with free variables $y_1,y_2,z,x$ selects on $D$ tuples $(1,\bot_2,\bot_2,1),(\bot_1,\bot_2,\bot_2,1)$, and no other tuple with last element $1$. Moreover as shown in Example \ref{ex::imply} 
\begin{align}
&D \models imply_{\gamma\gamma'}(\bot_1\bot_2\bot_21,1\bot_2\bot_2\bot_2)\nonumber\\
&D \models imply_{\gamma\gamma'}(1\bot_2\bot_21,1\bot_2\bot_2\bot_2)\nonumber
\end{align}
Thus
$D \models Q_\subseteq(1,\bot_2)\nonumber$.
Similarly one can show $D \models Q_\subseteq(\bot_1,\bot_2)$ and therefore $D\models best_Q(\bot_2)$.
\end{example}

As a corollary of Theorem~\ref{thm:last}, for a union of conjunctive queries $Q$ one can compute $\best(Q,D)$ by first computing the formula $best_Q(\bar x)$ from $Q$, then evaluating  $best_Q$ on $D$. Since data complexity of $\FO$ query evaluation is \dlog (and in particular {\sc AC$^0$}), this gives the following corollary :

\begin{corollary}
For each fixed union of conjunctive queries $Q$, the data complexity of \bestanswer\ 
is \dlog.
\end{corollary}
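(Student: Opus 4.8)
The plan is to obtain this bound directly from the first-order rewriting of Theorem~\ref{thm:last}, with no extra combinatorics. Recall that \bestanswer\ fixes the union of conjunctive queries $Q$ and takes as input a database $D$ together with a candidate tuple $\bar a$, asking whether $\bar a \in \best(Q,D)$. The first thing I would make explicit is that, since $Q$ is fixed, the formula $best_Q(\bar x)$ from~(\ref{bestq}) is itself fixed. Indeed, $best_Q$ is assembled from $Q_\subseteq$, which in turn is built from the blocks $comp_{eq_i}$, $imply_{eq_i,eq_j}$ and the closure $\equivfo_{\gamma}$; each of these is determined solely by the syntactic shape of $Q$ (its disjuncts, relational subqueries and equality subqueries), and by Proposition~\ref{equiv} the closure $\equivfo_\gamma$ is a \emph{finite} disjunction whose size depends only on $\gamma$. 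Hence $best_Q$ is a genuine first-order formula over $\sigma\cup Null$ whose size is a constant with respect to the data.

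With this in hand, the core step is to invoke Theorem~\ref{thm:last}, which states $\bar a \in \best(Q,D)$ iff $D \models best_Q(\bar a)$. This turns the membership test into the evaluation of one fixed first-order formula on $D$, under the na\"ive active-domain semantics fixed in Section~\ref{prelim-sec}. I would then close the argument with the classical descriptive-complexity fact that model-checking a fixed first-order formula has \dlog\ data complexity (in fact {\sc AC$^0$}, which is contained in \dlog). Composing the constant-cost construction of $best_Q$ with its \dlog\ evaluation on $D$ yields the claimed bound.

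Because Theorem~\ref{thm:last} already carries all the weight, there is no genuine obstacle; the only things deserving a line of justification are bookkeeping. One should check that the rewriting is \emph{query-only}, i.e. that none of $best_Q$, $Q_\subseteq$, $comp$, $imply$ or $\equivfo$ secretly depends on $|D|$---immediate once the finiteness of the disjunction in Proposition~\ref{equiv} is noted. One should also observe that interpreting the auxiliary predicate $Null$ as the set of nulls of $D$ is determined by $D$ and adds nothing beyond the same data-complexity class, and that treating nulls na\"ively as pairwise-distinct fresh constants only fixes how atoms are read and leaves the standard \FO\ evaluation procedure (hence its \dlog\ bound) untouched. With these remarks the corollary follows.
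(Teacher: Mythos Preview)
Your proposal is correct and matches the paper's own argument: the paper derives the corollary directly from Theorem~\ref{thm:last} by observing that $best_Q$ is a fixed $\FO$ formula (over $\sigma\cup Null$) and that data complexity of $\FO$ evaluation is \dlog\ (in fact {\sc AC$^0$}). Your additional bookkeeping remarks about the size of $\equivfo_\gamma$ and the interpretation of $Null$ are accurate and simply make explicit what the paper leaves implicit.
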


Note that it was known from \mycite{libkin:zero} that the data complexity of computing best answers for unions of conjunctive queries is polynomial time. 
In terms of combined complexity (i.e. when either $Q$, $D$ and $\bar a$ are in the input), the rewriting approach (i.e. the procedure of computing $best_Q$ from $Q$ and then evaluating $best_Q$ on $D$), can be easily shown to be in \pspace. 
In fact it is well known that a first order query $\phi$ can be evaluated on a database $D$ in space at most $qr(\phi) \log |D| + log|\phi|$, 
where $qr(\phi)$ is the quantifier rank of $\phi$. Note that although $best_Q$ has size exponential in $Q$, the quantifier rank of $best_Q$ is linear in the size of $Q$. Thus whether $\bar a \in best(Q,D)$ can be checked using space $O(|Q|, |D|)$. Moreover one can easily check that $best_Q$ can be computed from $Q$ in space polynomial in the size of $|Q|$. 
Since space bounded computations can be composed without storing the intermediate output, 
computing $best_Q$ from $Q$ and then evaluating $best_Q$ on $D$ can be done overall in PSPACE in the size of $|Q|$ and $|D|$. 
The rewriting approach thus implies a PSPACE upper bound for the combined complexity of \bestanswer\ for unions of conjunctive queries. However we can show that the problem actually stands in the third level of the polynomial hierarchy.

\begin{theorem}\label{combined-best}
For unions of conjunctive queries, combined complexity  of \bestanswer~ is $\Pi^p_3$-complete. Hardness already holds for conjunctive queries.
 \end{theorem}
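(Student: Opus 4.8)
The plan is to prove the two directions separately: membership of \bestanswer\ in $\Pi^p_3$, and $\Pi^p_3$-hardness already for a single conjunctive query. The starting point for both is the reformulation $\bar a\in\best(Q,D)$ iff $\forall\bar b\,\neg(\supp(Q,D,\bar a)\subsetneq\supp(Q,D,\bar b))$, together with the observation that, for a UCQ $Q$, membership $v(\bar a)\in Q(v(D))$ is witnessed by guessing a disjunct and a homomorphism (an \np\ check), while its failure $v(\bar a)\notin Q(v(D))$ is a \conp\ check.

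For membership I would first record a normalization lemma on valuations: since CQ membership depends only on the equality type of a valuation (which nulls collapse to each other and to which constants of $\adom(D)\cup\adom(Q)$), every relevant valuation can be taken to range over $\adom(D)\cup\adom(Q)$ together with at most $|\nulls(D)|$ fresh constants. Hence all valuations that matter have polynomial size and may be quantified inside a polynomial-alternation prefix. I then analyze the complement ``$\bar a\notin\best(Q,D)$'', i.e.\ $\exists\bar b\,(\supp(Q,D,\bar a)\subsetneq\supp(Q,D,\bar b))$. Expanding, this is $\exists\bar b\,\exists v'\,[\,\supp(Q,D,\bar a)\subseteq\supp(Q,D,\bar b)\ \wedge\ v'(\bar b)\in Q(v'(D))\ \wedge\ v'(\bar a)\notin Q(v'(D))\,]$, where the inclusion $\supp(Q,D,\bar a)\subseteq\supp(Q,D,\bar b)$ unfolds to $\forall v\,(v(\bar a)\in Q(v(D))\Rightarrow v(\bar b)\in Q(v(D)))$, a $\forall\cdots\exists$ (hence $\Pi^p_2$) statement. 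Folding the membership witness for $v'(\bar b)$ into the leading existential block, the \conp\ check for $v'(\bar a)\notin Q(v'(D))$ into the universal block, and the witness for $v(\bar b)\in Q(v(D))$ into the final existential block, the whole complement collapses to a prefix $\exists\,\forall\,\exists$ over polynomially sized objects with a polynomial-time matrix, hence lies in $\Sigma^p_3$; therefore \bestanswer\ is in $\Pi^p_3$. The only point requiring care is that no extra alternation sneaks in: the valuation quantifiers must be merged with the homomorphism-witness quantifiers into the \emph{same} three blocks, which is exactly what the normalization lemma permits.

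For hardness I would reduce from the $\Pi^p_3$-complete problem $\forall\bar x\,\exists\bar y\,\forall\bar z\,\psi(\bar x,\bar y,\bar z)$ (validity of a $\Pi_3$-QBF), producing a single Boolean CQ. Propositional variables are encoded by nulls, and a gadget in $Q$ and $D$ (using two distinguished constants, say $0$ and $1$) forces the only valuations relevant to membership to assign each such null a truth value; the clauses of $\psi$ are encoded so that membership $v(\cdot)\in Q(v(D))$ holds exactly when the assignment read off from $v$ satisfies $\psi$. The three quantifier blocks of the best-answer definition are then aligned with those of the QBF: the competitor tuple $\bar b$ (together with the valuation $v'$ and its witness) carries the outer $\forall\bar x$ block, the valuation $v$ arising in the support-inclusion test carries the $\exists\bar y$ block, and the homomorphism quantifier hidden in the membership/non-membership test carries the innermost $\forall\bar z$ block. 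The construction is arranged so that $\bar a$ has a designated reference support and a competitor $\bar b$ strictly dominates it precisely when the chosen $\bar x$ witnesses failure of $\exists\bar y\,\forall\bar z\,\psi$; thus $\bar a\in\best(Q,D)$ iff no such $\bar x$ exists, i.e.\ iff the QBF is valid.

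The main obstacle is the hardness construction: making the three alternations of the best-answer definition line up exactly with $\forall\exists\forall$ while simultaneously (i) forcing all meaningful valuations to encode genuine $\{0,1\}$-assignments and neutralizing spurious valuations that send nulls to other constants, and (ii) pinning down the support of $\bar a$ and of the competitors $\bar b$ so that strict containment holds in precisely the intended cases. I expect most of the effort to go into designing the gadgets that discipline the valuations and into the support-comparison bookkeeping; the membership direction, by contrast, is a routine though delicate quantifier count.
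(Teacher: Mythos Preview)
Your proposal is correct and follows essentially the same route as the paper. For membership, the paper also argues that support inclusion is $\Pi^p_2$ and then guesses a competitor $\bar b$ with two $\Sigma^p_2$-oracle calls, which is precisely your $\exists\forall\exists$ unfolding; your explicit normalization of valuations to a polynomial-size range is a useful detail the paper leaves implicit. For hardness, the paper reduces from $\forall\exists\forall$3DNF with exactly the alignment you describe: the outermost universal block is carried by the competitor tuple (the paper's free variables $z_1,\ldots,z_l$ plus a tag variable $z\in\{good,bad\}$), the middle existential block by the valuation of nulls, and the innermost universal block by the (non-)existence of a homomorphism.

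One slip to fix: you write ``producing a single Boolean CQ''. For a $0$-ary query the only tuple is the empty one, so it is trivially a best answer and the problem collapses. The query must have free variables so that competitor tuples $\bar b$ exist, and indeed you immediately rely on this when you say $\bar b$ carries the outer $\forall$ block. In the paper's construction the free variables are $(z_1,\ldots,z_l,z)$, a reference tuple $(\bar 0,good)$ is fixed as $\bar a$, and the gadgets are arranged so that all ``good'' tuples share a common support while a ``bad'' competitor $(\bar t,bad)$ strictly dominates iff the assignment $\bar t$ falsifies $\exists\bar x\forall\bar y\,\psi$; this is exactly the bookkeeping you flag as the main obstacle.
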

 
\begin{proof} 
For membership, first note that one can check in $\Pi^p_2$ whether $Supp(Q,D,\bar a) \subseteq Supp(Q,D,\bar b)$ on input given by a database $D$, a UCQ $Q$, and tuples $\bar a$ and $\bar b$. In fact in order to check $Supp(Q,D,\bar a) \nsubseteq Supp(Q,D,\bar b)$ one guesses a valuation $v$ of $D$, then calls an NP oracle to check $v(\bar a) \in Q(v(D))$ and $v(\bar b) \notin Q(v(D))$.
%UCQ combined complexity id NP complete

On input given by a database $D$, a UCQ $Q$, and a tuple $\bar a$ one can check $\bar a \notin \best(Q,D)$ as follows.  First guess a tuple $\bar b$ over $\adom(D)$ of the same arity as $\bar a$;
%note that one needs to guess b, cannot enumerate, since the size of b depends on the input
then, using two calls to a $\Sigma^p_2$ oracle, check that $Supp(Q,D,\bar a) \subseteq Supp(Q,D,\bar b)$ and $Supp(Q,D,\bar b) \nsubseteq Supp(Q,D,\bar a)$.

For hardness, we reduce from $\forall\exists\forall 3DNF$, which is known to be $\Pi^p_3$-complete \mycite{sigact-news-33(3)-SU}. We take as input a $\forall\exists\forall 3DNF$-formula of the form

$$F:= \forall z_1,\ldots z_l \exists x_1 \ldots x_k \forall y_1 \ldots y_p~\bigvee^n_{i=1}conj_i$$ 
where the each $conj_i$ is a conjunction of 3 (not necessarily distinct) literals over variables $z_1,\ldots z_l, x_1, \ldots, x_k, y_1, \ldots, y_p$.
% (for instance of the form $\neg x_{1} \wedge y_{2} \wedge \neg z_{3}$, or $\neg y_{1} \wedge y_{2} \wedge z_{3}$).

We construct a database $D_F$ with $\adom(D_F) = \{0, 1, good, bad\}\cup \{i, \bar i, \bot_i, \bar \bot_i,  | i=1..k\}$, and a conjunctive query $Q_F(z_1, ..z_l, z)$ such that $(\bar0, good)\in\best(Q_F,D_F)$ if and only if $F$ is true.

$D_F$ is of signature $\{S^4,C^2,A^2,B^3\}$ as follows :
\begin{itemize}
\item The extension of $S$ and $A$ and $B$ are fixed and do not depend on $F$ :
\begin{itemize}
\item $S$ contains tuple $(1,1,1,good)$, and tuples $(b_1,b_2,b_3,good)$ and $(b_1,b_2,b_3,bad)$ for every $b_1,b_2,b_3 \in \{0,1\}$ with $(b_1,b_2,b_3)\neq (1,1,1)$. Intuitively $S$ encodes the possible truth assignment of each disjunct of F. Note that only the satisfying assignment (i.e. (1,1,1)) appears together with the only constant $good$, all the others appear both with $good$ and $bad$.
\item $A$ contains only two tuples : $(0,1)$ and $(1,0)$. Intuitively $A$ will be used to encode truth values for pairs of literals $(w, \neg w), w \in y_1, \ldots y_p, z_1, \ldots z_l$.
\item $B$ contains tuples $(0,0, bad), (1,1, bad)$ and tuples $(b_1, b_2, good)$ and $(b_1, b_2, bad)$ for every $b_1, b_2 \in \{0,1\}$, $b_1 \neq b_2$. Intuitively $B$ encodes assignments for pairs of literals $(w, \neg w), w \in \{x_1, \ldots x_k\}$. Note that here inconsistent pairs (i.e. same truth value) are possible, but these are the only ones which do not appear together with constant $good$.
\end{itemize}
\item The extension of $C$ depends on $F$ and contains tuples $\{(\bot_i, i) | i=1..k\}$ and $\{(\bar \bot_i, \bar i) | i=1..k\}$. Intuitively a valuation $(b, i)$ (resp. $(b, \bar i)$) of one of these tuples, with $b \in \{0,1\}$, will encode truth value $b$ for the literal $x_i$ (resp, $\neg x_i$) of $F$.   
\end{itemize}

$Q_F$ is defined as follows. For each variable $w$ of $F$, the conjunctive query $Q_F$ will use variables $w$ and $\bar w$ (either quantified or free). For a literal $\alpha$ of $F$ the corresponding variable of $Q_F$ will be denoted as $enc(\alpha)$. More precisely if $\alpha = w$ is a positive literal then $enc(\alpha):=w$, otherwise if $\alpha = \neg w$  then $enc(\alpha) := \bar w$.
$$
\begin{array}{ll}
 Q_F(z_1, \ldots z_l, z) := & \exists x_1, \ldots x_k, \bar x_1, \ldots \bar x_k, y_1, \ldots y_p, \bar y_1, \ldots \bar y_p, \bar z_1, \ldots \bar z_p  \\ 
 \\
&  \bigwedge_{i=1,..k} B(x_i, \bar x_i, z) \wedge \bigwedge_{i=1,..p} A(y_i, \bar y_i) ~ ~\wedge \bigwedge_{i=1,..l} A(z_i, \bar z_i) ~ ~\wedge \\
\\
& \bigwedge_{i=1,..k} (C(x_i, i) \wedge C(\bar x_i, \bar i)) ~ ~\wedge \\
\\
& \bigwedge_{(\alpha_1 \wedge \alpha_2 \wedge \alpha_3)\in F} S(enc(\alpha_1), enc(\alpha_2), enc(\alpha_3), z)
 \end{array}
 $$
We can prove that all tuples of the form $(\bar t, good)$ (which we refer to as good tuples) have the same support. This is given by the set of all consistent boolean valuations (i.e. valuations of $\bot_i, \bar\bot_i$ in  $\{0,1\}$ such that $v(\bot_i) \neq v(\bar \bot_i)$ for all $i$).
Moreover we can prove that if there exists a $(\bar t, bad)$ whose support contains all consistent boolean valuations then the support of $(\bar t, bad)$ strictly contains the support of good tuples.
Therefore any good tuple (including $(\bar 0, good)$) is a best answer iff for all tuples $\bar t$ there exists a consistent boolean valuation which is not in the support of $(\bar t, bad)$. We can finally show that the last holds iff $F$ is true.
\end{proof}

Therefore under standard complexity theoretic assumptions, 
our rewriting approach is not optimal in terms of combined complexity, as it is often the case with generic approaches. However it has the advantage of exploiting standard \FO\ query evaluation, which despite the PSPACE combined complexity, is highly optimised in database systems and works well in practice.

%
%End of copy paste from ijcai section 4
%
\label{ijcai_rewriting}

\section{Future work}
% Best answers 

Our rewriting techniques are closer to a practical implementation than the previous tableau based method from \mycite{gheerbrant-libkin:xml}. This is due to their expressibility in recursive SQL (or even non-recursive in the case of Theorems \ref{certainFO} and \ref{thm:last}). However, while theoretically feasible, an actual implementation will need additional techniques to achieve acceptable performance. To see why, notice that the first rule in the definition of $equiv_\gamma$ creates a cross product over the full active domain, i.e., the set of all elements that appeared in the database. This of course will be prohibitively large. While this may appear to be a significant obstacle, a similar situation with computing or approximating certain answers is not new in the literature. For instance, the first approximation scheme for certain answers to SQL queries that appeared in \mycite{tods16} has done exactly the same, and generated very large Cartesian products even for simple queries with negation. Nonetheless, an alternative was found quickly \mycite{pods16} that completely avoided the need for such expensive queries, and it was shown to work well on several TPC-H queries. Thus, looking for a practical and implementable rewriting is one of the possible directions for future work.

As another open problem, we note that the query for which we have shown certain answers to be non-rewritable in FO has DLOGSPACE data complexity. Indeed the problem is essentially reachability over trees, which can be easily encoded using deterministic transitive closure \mycite{Immerman87}. To express DLOGSPACE problems, we need a language weaker than Datalog with negation. Thus, it is natural to ask whether a low complexity Datalog fragment would be sufficient to express rewritings of BCCQ, or a separating example that is PTIME-complete can be found.

Another direction would be to investigate how our techniques can be extended to different semantics of incompleteness.
We used here the closed-world semantics \mycite{abiteboul-et-al:dbbook,imielinski-et-al:incomp,vandermeyden:incomp-survey}, in which data values are the only missing information,
but there are other possible semantics,  e.g. needed in order to cope with data inconsistencies 
\mycite{cali-et-al:constraints}, 
where query rewritings could still be found. Quantitative variations of the notion of certainty as proposed in \mycite{libkin:zero} could also be investigated.
%%Idea : rewritings for certainty up to a treshold

\bibliographystyle{tlplike}
\bibliography{biblio}
\end{document}